\documentclass[11pt,a4paper]{article}
\usepackage{fullpage}
\usepackage[english]{babel} 
\usepackage[T1]{fontenc} 
\usepackage{lmodern,microtype} 
\usepackage[font=large,labelfont={bf}]{caption}
\usepackage{amsmath,amsthm,amssymb,amsfonts}
\usepackage{dsfont,mathrsfs,ushort} 
\usepackage{titlesec,titling} 
\usepackage{geometry} 
\usepackage{setspace} 
\usepackage{float}
\usepackage{enumitem} 
\usepackage{booktabs}
\usepackage{multicol}
\usepackage{graphicx}
\usepackage{fancyvrb}
\usepackage{abstract}
\usepackage[numbers]{natbib}
\setlength{\bibsep}{0.0pt}

\usepackage{pstricks} 
\usepackage{verbatim} 
\usepackage{hyperref}
\usepackage{xcolor}
\definecolor{winered}{rgb}{0.5,0,0}

\usepackage{setspace}
\usepackage{graphicx}
\usepackage{mathtools}
\usepackage{array}
\usepackage{tikz}
\usetikzlibrary{arrows, patterns}
\usetikzlibrary{arrows, patterns, positioning, decorations, decorations.markings}
\usepackage{multirow}
\usepackage{caption}
\usepackage{subcaption}
\usepackage{stackengine}
\usepackage[labelfont=bf,textfont=md]{caption}
\usepackage{algorithm}
\usepackage{float}
\usepackage{algorithmic}

\renewenvironment{proof}[1][Proof]{\noindent\textit{#1.} }{\ $\Box$\smallskip}

\usepackage{newtxmath}

\newcommand\xrowht[2][0]{\addstackgap[.5\dimexpr#2\relax]{\vphantom{#1}}}

\newcommand{\PreserveBackslash}[1]{\let\temp=\\#1\let\\=\temp}
\newcolumntype{C}[1]{>{\PreserveBackslash\centering}p{#1}}
\newcolumntype{R}[1]{>{\PreserveBackslash\raggedleft}p{#1}}
\newcolumntype{L}[1]{>{\PreserveBackslash\raggedright}p{#1}}

\newcommand\fnote[1]{\captionsetup{font=small}\caption*{#1}}


\setlength\bibsep{0pt}
\bibliographystyle{ecta}
\geometry{left=1.0in,right=1.0in,top=1.5in,bottom=1.5in}
\setstretch{1.1}
\hypersetup{
	pdfborder={0 0 0},
	colorlinks=false,
	linkcolor={winered},
	urlcolor={winered},
	filecolor={winered},
	citecolor={winered},
	linktoc=all,
}
\theoremstyle{plain}
\newtheorem{lemma}{Lemma}[section] %
\newtheorem{proposition}[lemma]{Proposition}%
\newtheorem{corollary}{Corollary}[section]
\theoremstyle{definition} %
\newtheorem{definition}{Definition}[section] %
\newtheorem{example}{Example}[section]%
\theoremstyle{remark}

 %


\begin{document}
\title{Fairness Criteria for Allocating Indivisible Chores: \\ Connections and Efficiencies%
\thanks{A preliminary version of this paper appeared in \emph{Proceedings of AAMAS 2020}
 \citep{sunConnectionsFairnessCriteria2021a}.}}
\author{Ankang Sun\thanks{Warwick Business School, University of Warwick, Coventry, United Kingdom; A.Sun.2@warwick.ac.uk}
  \and Bo Chen\thanks{Corresponding author: Warwick Business School, University of Warwick, Coventry,
  United Kingdom; B.Chen@warwick.ac.uk}
  \and Xuan Vinh Doan\thanks{Warwick Business School,
  University of Warwick, Coventry, United Kingdom; Xuan.Doan@wbs.ac.uk} }
\date{}
\maketitle

\vspace{-3\bigskipamount}	
\begin{abstract}
We study several fairness notions in allocating indivisible \emph{chores} (i.e., items with
non-positive values) to agents who have additive and submodular cost functions. The fairness
criteria we are concern with are envy-free up to any item (EFX), envy-free up to one item (EF1),
maximin share (MMS), and pairwise maximin share (PMMS), which are proposed as relaxations of
envy-freeness in the setting of additive cost functions. For allocations under each fairness
criterion, we establish their approximation guarantee for other fairness criteria. Under the
additive setting, our results show strong connections between these fairness criteria and, at the
same time, reveal intrinsic differences between goods allocation and chores allocation. However,
such strong relationships cannot be inherited by the submodular setting, under which PMMS and MMS
are no longer relaxations of envy-freeness and, even worse, few non-trivial guarantees exist. We
also investigate efficiency loss under these fairness constraints and establish their prices of
fairness. 	

\smallskip\noindent\textbf{Keywords:} fair division; indivisible chores; price of fairness
\end{abstract}

\section{Introduction}

Fair division is a central matter of concern in economics, multiagent systems, and artificial
intelligence \citep{bramsFairDivisionCakecutting1996, bouveretFairAllocationIndivisible2016a,
azizFairAssignmentIndivisible2015}. Over the years, there emerges a tremendous demand for fair
division when a set of indivisible resources, such as classrooms, tasks, and properties, are
divided among a group of agents. This field has attracted the attention of researchers and most
results are established when resources are considered as goods that bring positive utility to
agents. However, in real-life division problems, the resources to be allocated can also be chores
which, instead of positive utility, bring non-positive utility or cost to agents. For example, one
might need to assign tasks among workers, teaching load among teachers, sharing noxious facilities
among communities, and so on. Compared to goods, fairly dividing chores is relatively
under-developed. At first glance, dividing chores is similar to dividing goods. However, in
general, chores allocation is not covered by goods allocation and results established on goods do
not necessarily hold on chores. Existing works have already pointed out this difference in the
context of envy-freeness  \citep{bogomolnaiaDividingBadsAdditive2019,
bogomolnaiaCompetitiveDivisionMixed2017, branzeiAlgorithmsCompetitiveDivision2019} and equitability
\citep{freemanEquitableAllocationsIndivisible2019, freemanEquitableAllocationsIndivisible2020}. As
an example,  \citet{freemanEquitableAllocationsIndivisible2019} indicate that, when allocating
goods, a \emph{leximin}\footnote{A leximin solution selects the allocation that maximizes the
utility of the least well-off agent, subject to maximizing the utility of the second least, and so
on.} allocation is Pareto optimal and \emph{equitable up to any item}\footnote{Equitability
requires that any pair of agents are equally happy with their bundles. In equitability up to any
item allocations, the violation of equitability can be eliminated by removing any single item from
the happier (in goods allocation)/ less happy agent (in chores allocation).}, however, a leximin
solution does not guarantee equitability up to any item in chores allocation.

Among the variety of fairness notions in the literature, \emph{envy-freeness} (EF) is one of the
most compelling, which has drawn research attention over the past few decades
\cite{foleyResourceAllocationPublic1967,
bramsEnvyFreeCakeDivision1995,coleExistenceParetoEfficient2021}. In an envy-free allocation, no
agent envies another agent. Unfortunately, the existence of an envy-free allocation cannot be
guaranteed in general when the items are indivisible. A canonical example is {that one needs to
assign} one chore to two agents and the chore has a positive cost for either agent. Clearly, the
agent who receives the chore will envy the other. In addition, deciding the existence of an EF
allocation is computationally intractable, even for two agents with identical preference. Given this predicament, recent studies mainly
devote to relaxations of envy-freeness. One direct relaxation is known as \emph{envy-free up to one
item} (EF1) \cite{liptonApproximatelyFairAllocations2004a,
budishCombinatorialAssignmentProblem2011}. In an EF1 allocation, {one agent may be jealous of
another, but by removing one chore from the bundle of the envious agent, envy can be eliminated}. A
{similar but stricter notion} is \emph{envy-free up to any item} (EFX)
\cite{caragiannisUnreasonableFairnessMaximum2019}. In such an allocation, envy can be eliminated by
removing any positive-cost chore from the envious agent's bundle. Another fairness notion,
\emph{maximin share} (MMS) \cite{budishCombinatorialAssignmentProblem2011,
amanatidisApproximationAlgorithmsComputing2017}, generalizes the idea of ``cut-and-choose''
protocol in cake cutting. The maximin share is obtained by minimizing the maximum cost of a bundle
of an allocation over all allocations. The last fairness notion we consider is called
\emph{pairwise maximin share} (PMMS) \cite{caragiannisUnreasonableFairnessMaximum2019}, which is
similar to maximin share but different from MMS in that each agent partitions the combined bundle
of himself and any other agent into two bundles and then receives the one with the larger cost.

The existing research on envy-freeness and its relaxations concentrates on algorithmic features of
fairness criteria, such as their existence and (approximation) algorithms for finding them.
Relatively little research studies the connections between these fairness criteria themselves, or
the trade-off between these fairness criteria and the system efficiency, known as \emph{the price
of fairness}.

When allocating goods, \citet{amanatidisComparingApproximateRelaxations2018} compare the above four
relaxations of envy-freeness and provide results on the approximation guarantee of one to another.
However, these connections are unclear in allocating chores. On the price of fairness,
\citet{beiPriceFairnessIndivisible2019} study allocation of indivisible goods and focus on the
notions for which the corresponding fair allocations are guaranteed to exist, such as EF1, maximin
Nash welfare\footnote{Nash welfare is the product of agents' utilities.}, and leximin.
\citet{caragiannisEfficiencyFairDivision2012} study the price of fairness for both chores and
goods, and focus on the classical fairness notions, namely, EF, proportionality\footnote{An
allocation of goods (resp.\ chores) is proportional if the value (resp.\ cost) of every agent's
bundle is at least (resp.\ at most) one $n$-th fraction of his value (resp.\ cost) for all items.}
and equitability. To the best of our knowledge, no existing work covers the price of fairness for
the aforementioned four relaxations (which we will call \emph{additive relaxations} from time to
time in this paper) of envy-freeness in chores allocation.

In this paper, we fill these gaps by investigating the four relaxations of envy-freeness on two
aspects. On the one hand, we study the connections between {these} criteria and, in particular, we
consider the following questions: \emph{Does one fairness criterion implies another? To what extent
can one criterion guarantee for another?} On the other hand, we study the trade-off between
fairness and \emph{efficiency} (or \emph{social cost} defined as the sum of costs of the individual
agents). Specifically, for each fairness criterion, we investigate its \emph{price of fairness},
which is defined as the supremum ratio of the minimum social cost of a fair allocation to the
minimum social cost of any allocation.

\subsection{Main results}

On the connections between fairness criteria, we summarize our main results in
Figure~\ref{figure_connections} on the approximation guarantee of one fairness criterion for
another. Figures~\ref{Figure::additive} and \ref{Figure::submodular} show connections under
additive and submodular settings, respectively. As shown in Figure~\ref{Figure::additive} below,
when agents have additive cost functions, there exist evidently significant connections between
these fairness notions. While some of our results show similarity to those in goods allocation
\citep{amanatidisComparingApproximateRelaxations2018}, others also reveal their difference.
Figure~\ref{Figure::submodular} provides the corresponding results under the submodular setting,
which then show a sharp contrast to results under the additive setting. More specifically, except
that PMMS can have a bounded approximation guarantee on MMS, no non-trivial guarantee exists
between any other pair of fairness notions.

After comparing each pair of fairness notions, we compare the efficiency of fair allocations with
the optimal one. To quantify the efficiency loss, we apply the idea of the price of fairness and
our results are summarized in Table~\ref{table::price of fairness}. The terminology
``$\alpha$-XYZ'' below refers to an $\alpha$ approximation for fairness notion XYZ. The formal
definitions will be given in Section~\ref{sec:preliminaries}.

As detailed later in the paper, most of the results summarized in Figure~\ref{figure_connections}
and Table~\ref{table::price of fairness} are tight.

\begin{figure}
\captionsetup[subfigure]{justification=centering}
\vspace{-1.2\bigskipamount}
\begin{subfigure}[b]{.5\linewidth}
	\centering
	\begin{tikzpicture}[
		place/.style = {circle, minimum size=13mm}
		]
		\node (1) [place, draw] at (0,0)  {\scriptsize $\alpha$-EFX};
		\node (2) [place,draw ] at (6,0)  {\scriptsize $\alpha$-PMMS};
		\node (3) [place, draw] at (0,-8) {\scriptsize $\alpha$-MMS};
		\node (4) [place, draw] at (6, -8) {\scriptsize $\alpha$-EF1};

		\draw [>=latex, ->, thick] ([yshift= 7pt] 1.east) -- ([yshift= 7pt] 2.west) node [midway, above]
		{ \scriptsize $\textnormal{LB}= \textnormal{UB} = \frac{4\alpha}{2\alpha + 1 }$ (P\ref{prop::4.7})};
		
		\draw [>=latex, <-, thick]  ([yshift= -7pt] 1.east) -- ([yshift= -7pt] 2.west) node [midway,
		below] {\scriptsize $\alpha =1$: $\textnormal{UB=1}$ (P\ref{prop::5.1}), $\alpha >1$: $\textnormal{LB} =
			\infty$ (P\ref{prop::5.3})};
		
		\draw [>=latex, ->, thick]  ([xshift= 7pt] 2.south) -- ([xshift= 7pt] 4.north) node [midway,
		above, rotate=-90]{\scriptsize$\alpha =1$: $\textnormal{UB=1}$ (P\ref{coro::5.1}), $\alpha >1$:
			${\textnormal{LB}} = \infty $ (P\ref{prop::5.2})};

		\draw [>=latex, <-, thick]  ([xshift= -7pt] 2.south) -- ([xshift= -7pt] 4.north) node [midway,
		above,rotate=90] {\scriptsize $\textnormal{LB}= \textnormal{UB} = \frac{2\alpha + 1}{\alpha + 1 }$
			(P\ref{prop::4.8})};
		
		\draw [>=latex, ->, thick]  ([yshift= -7pt] 4.west) -- ([yshift= -7pt] 3.east) node [midway,
		below] {\scriptsize $\textnormal{LB} = \textnormal{UB} =
			\frac{n\alpha + n -1}{ n - 1 + \alpha } $ (P\ref{prop::4.6})};
		
		\draw [>=latex, <-, thick]  ([yshift= +7pt] 4.west) -- ([yshift= +7pt] 3.east) node [midway,
		above] {\scriptsize $\textnormal{LB} = \infty$ (P\ref{prop::5.7})};

		\draw [>=latex, ->, thick]  ([xshift= -7pt] 3.north) -- ([xshift= -7pt] 1.south) node [midway,
		above, rotate=90] { \scriptsize $\textnormal{LB} = \infty$ (P\ref{prop::5.7})};
		
		\draw [>=latex, <-, thick]  ([xshift= 7pt] 3.north) -- ([xshift= 7pt] 1.south) node [midway,
		above, rotate = -90] {\scriptsize $\textnormal{LB}$: $\max\{ \frac{2n\alpha}{2\alpha + 2n -3},
			\frac{2n}{n + 1} \}$,$\textnormal{UB}$: $\min\{ \frac{2n\alpha}{n-1+2\alpha},
			\frac{n\alpha + n -1}{n - 1+ \alpha} \} $ (P\ref{prop::4.5})};
		
		\draw [>=latex, ->, thick]  ([xshift= +7pt] 3.north) -- ([yshift= -7pt] 2.west) node [midway,
		above, rotate = 54.13010235] { \scriptsize $n\geq 3$: $\textnormal{LB} = 2$ (P\ref{prop::5.6})};
		\draw [>=latex, ->, thick]  ([xshift= -7pt] 2.south) -- ([yshift= +7pt] 3.east) node [midway,
		below, rotate = 54.13010235] { \scriptsize $\alpha<\frac{3}{2}$: $\textnormal{LB}=
			\frac{n\alpha}{\alpha + (n-1)(2-\alpha)}$, $\textnormal{UB} =
			\frac{n\alpha}{\alpha + (n-1) (1-\frac{\alpha}{2})}$ (P\ref{prop:4.9})};
	\end{tikzpicture}
	\subcaption{Additive}
	\label{Figure::additive}
\end{subfigure}
\begin{subfigure}[b]{.5\linewidth}
	\centering
	\begin{tikzpicture}[
		place/.style = {circle, minimum size=13mm}
		]
		\node (1) [place, draw] at (0,0)  {\scriptsize $\alpha$-EFX};
		\node (2) [place,draw ] at (6,0)  {\scriptsize $\alpha$-PMMS};
		\node (3) [place, draw] at (0,-8) {\scriptsize $\alpha$-MMS};
		\node (4) [place, draw] at (6, -8) {\scriptsize $\alpha$-EF1};

		\draw [>=latex, ->, thick] ([yshift= 7pt] 1.east) -- ([yshift= 7pt] 2.west) node [midway, above]
		{ \scriptsize $\textnormal{LB}= \textnormal{UB} = 2$ (P\ref{prop::sub_EFX_PMMS_MMS})};
		
		\draw [>=latex, <-, thick]  ([yshift= -7pt] 1.east) -- ([yshift= -7pt] 2.west) node [midway,
		below] {\scriptsize$\textnormal{LB} = \infty$ (P\ref{prop::6.3})};
		
		\draw [>=latex, ->, thick]  ([xshift= 7pt] 2.south) -- ([xshift= 7pt] 4.north) node [midway,
		above, rotate=-90]{\scriptsize $\textnormal{LB} = \infty$ (P\ref{prop::6.3})};

		\draw [>=latex, <-, thick]  ([xshift= -7pt] 2.south) -- ([xshift= -7pt] 4.north) node [midway,
		above,rotate=90] {\scriptsize $\textnormal{LB}= \textnormal{UB} =2$
			(P\ref{prop::sub_EF1_PMMS_MMS})};
		
		\draw [>=latex, ->, thick]  ([yshift= -7pt] 4.west) -- ([yshift= -7pt] 3.east) node [midway,
		below] {\scriptsize $\textnormal{LB} = \textnormal{UB} =
			n$ (P\ref{prop::sub_EF1_PMMS_MMS})};
		
		\draw [>=latex, <-, thick]  ([yshift= +7pt] 4.west) -- ([yshift= +7pt] 3.east) node [midway,
		above] {\scriptsize $\textnormal{LB} = \infty$ (P\ref{prop::sub_MMS_EFX_EF1})};

		\draw [>=latex, ->, thick]  ([xshift= -7pt] 3.north) -- ([xshift= -7pt] 1.south) node [midway,
		above, rotate=90] { \scriptsize $\textnormal{LB} = \infty$ (P\ref{prop::sub_MMS_EFX_EF1})};
		
		\draw [>=latex, <-, thick]  ([xshift= 7pt] 3.north) -- ([xshift= 7pt] 1.south) node [midway,
		above, rotate = -90] {\scriptsize $\textnormal{LB} =\textnormal{UB}= n$ (P\ref{prop::sub_EFX_PMMS_MMS})};
		
		\draw [>=latex, ->, thick]  ([xshift= +7pt] 3.north) -- ([yshift= -7pt] 2.west) node [midway,
		above, rotate = 54.13010235] { \scriptsize $n\geq 3$: $\textnormal{LB} = \textnormal{UB}= 2$ (P\ref{prop::sub_MMS_PMMS})};
		\draw [>=latex, ->, thick]  ([xshift= -7pt] 2.south) -- ([yshift= +7pt] 3.east) node [midway,
		below, rotate = 54.13010235] { \scriptsize $n \geq 3$: $\textnormal{LB} =\textnormal{UB} = \min\{n, \alpha\lceil \frac{n}{2}\rceil\}$ (P\ref{prop::sub_PMMS_MMS})};
	\end{tikzpicture}
	\subcaption{Submodular}
	\label{Figure::submodular}
\end{subfigure}
\fnote{\textnormal{Note: Figure~\ref{Figure::additive} and Figure~\ref{Figure::submodular} illustrate connections between fairness criteria under additive and submodular cost functions, respectively. LB and UB stand for lower and upper bound, respectively. P$x.y$ points to Proposition~$x$.$y$}}
\caption{Connections between fairness criteria}\label{figure_connections}
\end{figure}

\medskip

\begin{table}
\centering
\begin{tabular}{l| c c c c c |l}
\toprule \xrowht[()]{8pt}
	& EFX & EF1 & PMMS & $\frac{3}{2}$-PMMS & 2-MMS &\\
\midrule \xrowht[()]{9pt}
	\multirow{4}{*}{$n=2$}
	&\multicolumn{1}{c}{2} & \multicolumn{1}{c}{$\frac{5}{4}$} & \multicolumn{1}{c}{2}
    &  \multicolumn{1}{c|}{$\frac{7}{6}$}&  &{\multirow{2}[2]{*}{additive}} \\
	& \multicolumn{1}{c}{(P\ref{thm::price of PMMS and MMS})} &   \multicolumn{1}{c}{(P\ref{thm::7.1})}
    & \multicolumn{1}{c}{(P\ref{thm::price of PMMS and MMS})}
	& \multicolumn{1}{c|}{(P\ref{PoFn=2::PMMS})} & 1 &  \\
	\cline{2-5} \cline{7-7}\xrowht[()]{9pt}
	& \multicolumn{1}{c}{$[3,4)$} & \multicolumn{1}{c}{$[2,4)$} & \multicolumn{1}{c}{3}
    & \multicolumn{1}{c|}{$ [\frac{4}{3}, \frac{8}{3})$} & (L\ref{lemma::3.2}) & {\multirow{2}[2]{*}{submodular}}  \\
	&\multicolumn{1}{c}{(P\ref{sub_EFX})}	&\multicolumn{1}{c}{(P\ref{prop::pof_sub_EF1})}
    & \multicolumn{1}{c}{(P\ref{prop::pof_sub_MMS})} & \multicolumn{1}{c|}{(P\ref{prop::pof_sub_1.5PMMS})}
	&&\\
	\hline \xrowht[()]{8pt}
	\multirow{4}{*}{$ n \geq 3$}
	& & & & \multicolumn{1}{c|}{}& $[\frac{n+3}{6}, n )$ &{\multirow{2}[2]{*}{additive}}  \\
	& & & $\infty$
	&\multicolumn{1}{c|}{}&(P\ref{prop:my-add5})  & \\
	\cline{6-7} \xrowht[()]{8pt}
	& & & (P\ref{prop::7.3})
	& \multicolumn{1}{c|}{} & $[\frac{n+3}{6}, \frac{n^2}{2})$ & {\multirow{2}[2]{*}{submodular}} \\
	& & & & \multicolumn{1}{c|}{}&(P\ref{prop::pof_sub_2MMS_n}) & \\
\bottomrule
\end{tabular}
\fnote{\textnormal{Note: Interval $[a,b]$ means that the lower bound is equal to $a$ and upper bound is equal to $b$. P$x.y$ and L$x.y$ point to Proposition~$x$.$y$ and Lemma~$x$.$y$, respectively.}}
\caption{Prices of fairness}\label{table::price of fairness}
\end{table}

\subsection{Related works}

The fair division problem has been studied for both indivisible goods
\citep{liptonApproximatelyFairAllocations2004a, bezakovaivonaAllocatingIndivisibleGoods2005,
caragiannisUnreasonableFairnessMaximum2019} and indivisible chores
\citep{azizAlgorithmsMaxminShare2017, azizMaxminShareFair2019,
freemanEquitableAllocationsIndivisible2020}. Among various fairness notions, a prominent one is EF
proposed by \citet{foleyResourceAllocationPublic1967}. But an EF allocation may not exist and even
worse, checking the existence of an EF allocation is NP-complete
\citep{azizFairAssignmentIndivisible2015}. For the relaxations of envy-freeness, the notion of EF1
originates from \citet{liptonApproximatelyFairAllocations2004a} and is formally defined by
\citet{budishCombinatorialAssignmentProblem2011}. \citet{liptonApproximatelyFairAllocations2004a}
provide an efficient algorithm for EF1 allocations of goods when agents have monotone valuations
functions. When allocating chores, \citet{azizFairAllocationIndivisible2019} show that, in the
additive setting, EF1 is achievable by allocating chores in a round-robin fashion. Another fairness
notion that has been a subject of much attention in the last few years is MMS, proposed by
\citet{budishCombinatorialAssignmentProblem2011}. However, existence of an MMS allocation is not
guaranteed either for goods \citep{kurokawaFairEnoughGuaranteeing2018} or for chores
\citep{azizAlgorithmsMaxminShare2017}, even with additive valuation functions. Consequently, more
efforts are on approximation of MMS in the additive setting, with
\citet{amanatidisApproximationAlgorithmsComputing2017, ghodsiFairAllocationIndivisible2018,
gargImprovedApproximationAlgorithm2020} on goods and
\citet{azizAlgorithmsMaxminShare2017,huangAlgorithmicFrameworkApproximating2021} on chores. Some
other studies consider approximating MMS when agents have (a subclass of) submodular valuation
functions. \citet{barmanApproximationAlgorithmsMaximin2020} consider the submodular setting and
show that $0.21$-approximation of MMS can be efficiently computed by the round-robin algorithm.
\citet{barmanExistenceComputationMaximin2021} show that an MMS allocation is guaranteed to exist
and can be computed efficiently if agents have submodular valuation functions with binary margin.

The notions of EFX and PMMS are introduced by \citet{caragiannisUnreasonableFairnessMaximum2019}.
They consider goods allocation and establish that a PMMS allocation is also EFX when the valuation
functions are additive. Beyond the simple case of $n=2$, the existence of an EFX allocation has not
been settled in general. However, significant progress has been made for some special cases. When
$n=3$, the existence of an EFX allocation of goods is proved by
\citet{chaudhuryEFXExistsThree2020}. Based on a modified version of leximin {solutions},
\citet{plautAlmostEnvyFreenessGeneral2020} show that an EFX allocation is guaranteed to exist when
all agents have {identical} valuations. The work most related to ours is by
\citet{amanatidisComparingApproximateRelaxations2018}, which is on goods allocation under additive
setting, and provides connections between the above four relaxations of envy-freeness.

As for the price of fairness, {\citet{caragiannisEfficiencyFairDivision2012} show} that, in the
case of \emph{divisible} goods, the price of proportionality is $\Theta(\sqrt{n})$ and the price of
equitability is $\Theta(n)$. {\citet{bertsimasPriceFairness2011} extend} the study to other
fairness notions, \emph{maximin}\footnote{{It maximizes the lowest utility level among all the
agents.}} fairness and proportional fairness, and they provide a tight bound on the price of
fairness for a broad family of problems. {\citet{beiPriceFairnessIndivisible2019} focus} on
indivisible goods and concentrate on the fairness notions that are guaranteed to exist. They
present an asymptotically tight upper bound of $\Theta(n)$ on the price of maximum Nash welfare
\cite{coleApproximatingNashSocial2015}, maximum egalitarian welfare
\cite{bramsFairDivisionCakecutting1996} and leximin. They also consider the price of EF1 but leave
a gap between the upper bound $O(n)$ and lower bound $\Omega(\sqrt{n})$. This gap is later closed
by \citet{barmanSettlingPriceFairness2020} with the results that, for both EF1 and
$\frac{1}{2}$-MMS, the price of fairness is $O(\sqrt{n})$. All the work reviewed above on the price
of fairness is on the additive setting. On the other hand, the price of fairness has been studied
in other multi-agent systems, such as machine scheduling \citep{agnetisPriceFairnessTwoagent2019a}
and kidney exchange \citep{dickersonPriceFairnessKidney2014}.

\section{Preliminaries}
\label{sec:preliminaries}

In the problem of a fair division of indivisible chores, we have a set $N = \{1,2, \ldots,n\}$ of
agents and a set $E = \{e_1, \ldots, e_m \}$ of indivisible chores. As chores are the {items with
non-positive values}, each agent $i\in N$ is associated with a cost function $ c_i : 2^ {E}
\rightarrow R_{\geq 0}$, which maps any subset of $E$ into a non-negative real number. Throughout
this paper, we assume $c_i(\emptyset) = 0$ and $c_i$ is monotone, that is, $c_i (S ) \leq c_i(T)$
for any $S\subseteq T \subseteq E$. We say a (set) function $c(\cdot)$ is:
\begin{itemize}\vspace{-\medskipamount}
\item \emph{Additive}, if $c(S) = \sum _ { e \in S} c (e)$ for any $S \subseteq E$.
    \vspace{-\medskipamount}
\item \emph{Submodular},\footnote{An equivalent definition is as follows: $c(\cdot)$ is
    submodular if for any $S\subseteq T \subseteq E $ and $ e \in E \setminus T, c ( T \cup \{ e
    \}) - c (T) \leq c ( S \cup \{ e\}) - c(S)$.} if for any $S, T \subseteq E $, $c(S\cup T)+
    c(S\cap T) \leq c(S) + c(T)$. 
\item \emph{Subadditive}, if for any $S,T \subseteq E, c(S \cup T ) \leq c ( S ) + c ( T )$.
    \vspace{-\medskipamount}
\end{itemize}
Clearly, additivity implies submodularity, which in turn implies subadditivity. For simplicity,
instead of $c_i (\{ e_j \})$, we use $c_i (e_j)$ to represent the cost of chore $e_j$ for agent
$i$.

An \emph{allocation} $\mathbf{A}:=(A_1,\ldots,A_n)$ is an $n$-partition of $E$ among agents in $N$,
i.e., $A _ i \cap A _ j = \emptyset$ for any $i\neq j$ and $\cup _ {i\in N} A _ i = E$. Each subset
$S \subseteq E$ also refers to a \emph{bundle} of chores. For any bundle $S$ and $k \in
\mathbb{N}^+$, we denote by $\Pi _ k (S)$ the set of all $k$-partition of $S$, and $|S|$ the number
of chores in $S$.

\subsection{Fairness criteria}

We study envy-freeness and its (additive) relaxations and are concerned with both exact and approximate
versions of these fairness notions.

\begin{definition}
	For any $\alpha\ge 1$, an allocation $\mathbf{A} = ( A _1,\ldots,A _ n )$ is $\alpha$-EF if for any
	$i,j \in N, c _ i (A _ i ) \leq \alpha \cdot c _ i (A _ j)$. In particular, 1-EF is simply called
	EF.
\end{definition}

\begin{definition}
	For any $\alpha\ge 1$, an allocation $\mathbf{A}= ( A _1,\ldots,A _ n )$ is $\alpha$-EF1 if for any
	$ i , j \in N$, there exists $ e \in A _ i $ such that $ c _ i ( A _i \setminus \left\{ e\right\})
	\leq \alpha \cdot c _ i ( A_ j )$. In particular, 1-EF1 is simply called EF1.
\end{definition}

\begin{definition}
	For any $\alpha\ge 1$, an allocation $\mathbf{A}= ( A _1,\ldots,A _ n )$ is $\alpha$-EFX if for any
	$ i , j \in N, c _ i ( A _ i \setminus \left\{ e \right\}) \leq \alpha \cdot c _ i ( A _j )$ for
	any $e \in A _ i $ with $ c _ i (e ) > 0 $. In particular, 1-EFX is simply called EFX.
\end{definition}

Clearly, EFX\footnote{Note {\citet{plautAlmostEnvyFreenessGeneral2020}} consider a stronger version
of EFX by dropping the condition $c_i(e)>0$. In this paper, all results about EFX, except
Propositions~\ref{prop::5.1} and \ref{prop:my-add2}, still hold under the stronger version.} is
stricter than EF1. Next, {we formally introduce the notion of maximin share}. For any $k \in
[n]=\{1,\ldots, n\}$ and {bundle} $S \subseteq E$, the maximin share of agent $i$ on $S$ among $k$
agents is
\[
\textnormal{MMS}_i(k,S) = \min_{A \in \Pi _ k (S)} \max_{j \in [k] } c _ i ( A_ j ).
\]
{We are interested in the allocation in which} each agent receives cost no more than his maximin
share.

\begin{definition}
	For any $\alpha\ge 1$, an allocation $\mathbf{A} = ( A _1,\ldots,A _ n )$ is $\alpha$-MMS if for
	any $i\in N, c _ i ( A _ i ) \leq \alpha \cdot \textnormal{MMS} _ i (n, E)$. In particular, 1-MMS
	is called MMS.
\end{definition}

\begin{definition}\label{def2.5}
	For any $\alpha\ge 1$, an allocation $\mathbf{A}= ( A _1,\ldots,A _ n )$ is $\alpha$-PMMS if for
	any $i, j \in N$,
	\[
	c _ i ( A_ i )\leq \alpha \cdot \min_{\mathbf{B}\in \Pi_2 ( A_ i \cup A _ j )}
	\max \left\{ c _ i ( B _ 1), c _ i ( B _ 2)\right\}.
	\]
	In particular, 1-PMMS is called PMMS.
\end{definition}

Note that the right-hand side of the above inequality is equivalent to $\alpha \cdot
\textnormal{MMS}_ i (2, A _ i \cup A _ j )$.

\begin{example}
Let us consider an example with three agents and a set $E = \{ e_1, \ldots, e_7 \}$ of seven
chores. Agents have additive cost functions, displayed in the table below.
\begin{table}[H]
\centering
\begin{tabular}{c|ccccccc}
	\hline
	& $ e_ 1$ & $ e_ 2$ & $ e_ 3$ & $ e_ 4$ & $ e_ 5$ & $ e_ 6$ & $ e_ 7$  \\
	\hline
	Agent 1 & 2 & 3 & 3 & 0 & 4 & 2 & 1 \\
	Agent 2 & 3 & 1 & 3 & 2 & 5 & 0 & 5 \\
	Agent 3 & 1 & 5 & 10 & 2 & 3 & 1 & 3 \\
	\hline
\end{tabular}
\end{table}
It is not hard to verify that $\textnormal{MMS}_1 (3, E) = 5, \textnormal{MMS}_2( 3, E) = 7,$ $
\textnormal{MMS} _ 3 (3, E) = 10$. For instance, agent 2 can partition $E$ into three bundles: $\{e
_ 1, e_3\}, \{ e_ 2, e _ 7\},$ $\{ e _ 4, e _ 5, e _ 6 \}$, so that the maximum cost of any single
bundle for her is 7. Moreover, there is no other partitions that can guarantee a better worst-case
cost.

We now examine allocation $\mathbf{A} $ with $A_1 = \{ e _1, e _ 4, e _ 7 \}, A _ 2 =\{ e _ 2, e _
3, e _6\}, A _ 3 =\{ e _ 5\}$. We can verify that $c _ i(  A_ i ) \leq  c_ i ( A _ j )$ for any $ i
, j \in [3]$ and thus allocation $\mathbf{A}$ is \textnormal{EF} that is then also
\textnormal{EFX}, \textnormal{EF1}, \textnormal{MMS} and \textnormal{PMMS}. For another allocation
$\mathbf{B}$ with $B _1 =\{ e _ 1, e _5, e _ 7 \}, B _ 2$ $=\{ e _ 2, e _ 4, e _ 6 \}, B _ 3 =\{ e
_3\}$, agent 1 would still envy agent 2 even if chore $ e _ 7 $ is eliminated from her bundle, and
hence, allocation $\mathbf{B}$ is neither exact \textnormal{EF} nor \textnormal{EFX}. One can
verify that $\mathbf{B}$ is indeed $\frac{7}{3}$-\textnormal{EF} and $2$-\textnormal{EFX}.
Moreover, allocation $\mathbf{B}$ is \textnormal{EF1} because agent 1 would not envy others if
chore $ e _ 5$ is eliminated from her bundle and agent 3 would not envy others if chore $ e _ 3 $
is eliminated from her bundle. As for the approximation guarantee on the notions of MMS and PMMS,
it is not hard to verify that allocation $\mathbf{B}$ is $\frac{7}{5}$-\textnormal{MMS} and
$\frac{7}{5}$-\textnormal{PMMS}.
\end{example}

\subsection{Price of fairness}

Let $I = \langle N,E, (c_i)_{i\in N}\rangle$ be an instance of the problem for allocating
{indivisible chores} and let $\mathcal{I}$ be the set of all such instances. The \emph{social cost}
of an allocation $\mathbf{A} = (A_1,\ldots, A_n)$ is defined as $\textnormal{SC}(\mathbf{A}) =
\sum_{i \in N} c_i(A_i)$. The optimal social cost for an instance $I$, denoted by
$\textnormal{OPT}(I)$, is the minimum social cost over all allocations for this instance. Following
previous work \cite{caragiannisEfficiencyFairDivision2012,beiPriceFairnessIndivisible2019}, when
study the price of fairness, we assume that agents cost functions are normalized to one, i.e.,
$c_i(E) = 1$ for all $i\in N$.

The \emph{price of fairness} is the supremum ratio over all instances between the social cost of
the ``best'' fair allocation and the optimal social cost, where ``best'' refers to the one with the
minimum cost. Since we consider several fairness criteria, let $F$ be any given fairness criterion
and define by $F(I)$ as the set (possibly empty) of all allocations for instance $I$ that satisfy
fairness criterion $F$.

\begin{definition}
For any given fairness property $F$, the price of fairness with respect to $F$ is defined as
\begin{equation*}
	\textnormal{PoF} = \sup_{ I \in \mathcal{I} } \min_{\mathbf{A} \in F(I) }
	\frac{\textnormal{SC}(\mathbf{A})}{\textnormal{OPT}(I)}.
\end{equation*}
\end{definition}
Note that in the case where it is unclear whether $F(I)\neq\emptyset$ for any instances, we only
consider those instances $I$ with $F(I)\neq\emptyset$.

\subsection{Some simple observations}

We begin with some initial results, which reveal some intrinsic difference in allocating goods and
allocating chores as far as approximation guarantee is concerned. Our proof of any result in this
paper either immediately follows the statement of the result, or can be found in the Appendix
clearly indicated. First, we state a simple lemma concerning lower bounds of the maximin share.

\begin{lemma}\label{obs::3.1}
When agents have subadditive cost functions, for any $i\in N$ and $S \subseteq E$, we have
\[
\textnormal{MMS} _ i (k,S) \geq \frac{1}{k} c _ i (S), \forall k\in [n]; \qquad
\textnormal{MMS} _ i (k, S) \geq c _ i (e), \forall e \in S, \forall k\in [n].
\]
\end{lemma}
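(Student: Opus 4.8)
The plan is to prove the two inequalities separately, both relying only on subadditivity and the monotonicity assumption $c_i(\emptyset)=0$. Fix an agent $i\in N$, a bundle $S\subseteq E$, and some $k\in[n]$, and let $\mathbf{A}=(A_1,\ldots,A_k)\in\Pi_k(S)$ be a $k$-partition achieving the minimum in the definition of $\textnormal{MMS}_i(k,S)$, so that $\textnormal{MMS}_i(k,S)=\max_{j\in[k]}c_i(A_j)$.

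For the first inequality, the idea is that a subadditive cost cannot be ``shrunk'' by partitioning. Concretely, repeated application of subadditivity to the partition $(A_1,\ldots,A_k)$ of $S$ gives $c_i(S)=c_i\bigl(\bigcup_{j\in[k]}A_j\bigr)\leq\sum_{j\in[k]}c_i(A_j)$. Since each term $c_i(A_j)$ is at most the maximum $\max_{j\in[k]}c_i(A_j)=\textnormal{MMS}_i(k,S)$, the sum is bounded by $k\cdot\textnormal{MMS}_i(k,S)$. Rearranging yields $\textnormal{MMS}_i(k,S)\geq\frac{1}{k}c_i(S)$, as desired.

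For the second inequality, fix any $e\in S$. Since $\mathbf{A}$ partitions $S$, the chore $e$ lies in exactly one bundle, say $e\in A_{j^\ast}$. By monotonicity, $c_i(A_{j^\ast})\geq c_i(\{e\})=c_i(e)$. But $c_i(A_{j^\ast})\leq\max_{j\in[k]}c_i(A_j)=\textnormal{MMS}_i(k,S)$, so chaining these gives $\textnormal{MMS}_i(k,S)\geq c_i(e)$. Both claims hold for every $k\in[n]$ since $k$ was arbitrary.

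I do not anticipate a genuine obstacle here, as the statement is a direct consequence of subadditivity and monotonicity; the only point requiring mild care is justifying the union-bound step $c_i\bigl(\bigcup_j A_j\bigr)\leq\sum_j c_i(A_j)$ by induction on the number of bundles, using the two-set subadditivity inequality $c_i(X\cup Y)\leq c_i(X)+c_i(Y)$ together with the fact that a finite union of bundles is again a bundle to which subadditivity applies. Since additivity and submodularity both imply subadditivity (as noted in the preliminaries), the lemma then holds a fortiori in those more restrictive settings, which is exactly how it will be invoked later.
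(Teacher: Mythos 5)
Your proof is correct and follows essentially the same route as the paper's: take the optimal $k$-partition defining $\textnormal{MMS}_i(k,S)$, bound $c_i(S)$ by the sum of bundle costs via subadditivity (hence by $k$ times the maximum) for the first inequality, and use monotonicity on the bundle containing $e$ for the second. The only difference is cosmetic — the paper sorts the bundles and names the largest one rather than writing $\max_j$ — so nothing further is needed.
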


\begin{proof}
Let $\mathbf{T} = (T _1,\ldots,T_ k)$ be the $k$-partition of S defining $\textnormal{MMS} _ i (k ,
S)$; that is $\max_{T _ j} c _ i (T _ j ) = \textnormal{MMS}_ i (k, S)$. We start with the lower
bound $\frac{1}{k}c_i(S)$. Without loss of generality, assume $c_i(T_1) \geq c _i(T_2) \geq \cdots
\geq c_i(T_k)$ and as a result, we have $c_i(T_1) = \textnormal{MMS} _ i (k, S)$. Then, the
following holds
\begin{equation*}
	kc _ i (T _ 1) \geq \sum_{j=1}^{k} c _ i (T _ j ) \geq c _ i (\bigcup_{j=1}^{k} T _ j) = c _ i (S),
\end{equation*}
where the second transition is due to subadditivity. Due to $c_i(T_1) = \textnormal{MMS} _ i (k,
S)$, we have $\textnormal{MMS}_i (k, S) \geq \frac{1}{k} c _ i (S)$. As for the lower bound
$c_i(e)$, for any given chore $e \in S $, there must exist a bundle $T _ { j ^ {\prime} }$
containing $e$. Due to the monotonicity of cost function, we have $ c _ i (T _ { j^{\prime}}) \geq
c _ i (e)$, which combines $\textnormal{MMS} _ i ( k, S ) = c_1(T_1) \geq c_1(T_{j^{\prime}})$,
implying $ \textnormal{MMS} _ i ( k, S ) \geq c _ i (e)$.
\end{proof}

Based on the lower bounds in Lemma~\ref{obs::3.1}, we provide a trivial approximation guarantee for
PMMS and MMS.

\begin{lemma}\label{lemma::3.2}
When agents have subadditive cost functions, any allocation is 2-\textnormal{PMMS} and $n$-\textnormal{MMS}.
\end{lemma}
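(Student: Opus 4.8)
The plan is to derive both claims as immediate consequences of the first lower bound established in Lemma~\ref{obs::3.1}, namely $\textnormal{MMS}_i(k,S) \geq \frac{1}{k} c_i(S)$, together with the assumed monotonicity of the cost functions. Fix an arbitrary allocation $\mathbf{A} = (A_1,\ldots,A_n)$ and an arbitrary agent $i \in N$; both parts will follow by choosing the right set $S$ and the right number of parts $k$ in the lower bound.

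For the $n$-\textnormal{MMS} claim, I would instantiate Lemma~\ref{obs::3.1} with $k = n$ and $S = E$, which gives $\textnormal{MMS}_i(n,E) \geq \frac{1}{n} c_i(E)$. Since $A_i \subseteq E$, monotonicity yields $c_i(A_i) \leq c_i(E)$, and chaining these two inequalities produces
\[
c_i(A_i) \leq c_i(E) \leq n \cdot \textnormal{MMS}_i(n,E),
\]
which is exactly the defining inequality of $n$-\textnormal{MMS}. As this holds for every $i$, the allocation is $n$-\textnormal{MMS}.

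For the $2$-\textnormal{PMMS} claim, I would fix an additional agent $j \in N$ and apply Lemma~\ref{obs::3.1} with $k = 2$ and $S = A_i \cup A_j$, obtaining $\textnormal{MMS}_i(2, A_i \cup A_j) \geq \frac{1}{2} c_i(A_i \cup A_j)$. Again using $A_i \subseteq A_i \cup A_j$ and monotonicity, I get $c_i(A_i) \leq c_i(A_i \cup A_j) \leq 2 \cdot \textnormal{MMS}_i(2, A_i \cup A_j)$. Recalling the remark after Definition~\ref{def2.5} that the right-hand side of the \textnormal{PMMS} inequality equals $\alpha \cdot \textnormal{MMS}_i(2, A_i \cup A_j)$, this is precisely the $2$-\textnormal{PMMS} condition for the pair $(i,j)$; since $i,j$ were arbitrary, the allocation is $2$-\textnormal{PMMS}.

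I do not anticipate any genuine obstacle here: the entire argument is a two-step combination of the averaging bound from Lemma~\ref{obs::3.1} and monotonicity, and subadditivity enters only through its use in proving that lemma. The only point requiring a moment's care is matching the right-hand side of the \textnormal{PMMS} definition to $\textnormal{MMS}_i(2, A_i \cup A_j)$, which is already flagged in the text immediately following Definition~\ref{def2.5}.
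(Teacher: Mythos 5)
Your proposal is correct and follows exactly the paper's own argument: both parts combine the averaging bound $\textnormal{MMS}_i(k,S) \geq \frac{1}{k}c_i(S)$ from Lemma~\ref{obs::3.1} (with $k=n$, $S=E$ for MMS and $k=2$, $S=A_i\cup A_j$ for PMMS) with monotonicity of the cost functions.
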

\begin{proof}
	Let $\mathbf{A} = (A _ 1, \ldots, A _ n)$ be an arbitrary allocation without any specified
	properties. We first show it's already an $n$-MMS allocation. By Lemma~\ref{obs::3.1}, for each agent $i$, we have $c _ i (E) \leq n\cdot\textnormal{MMS} _i(n,E)$. Then,
	due to the monotonicity of the cost function, $c_i(A_i) \leq c_i (E) \leq n \cdot
	\textnormal{MMS}_i(n, E)$ holds. 	
	
	Next, by a similar argument, we prove the result about 2-PMMS. By Lemma~\ref{obs::3.1}, $ c_i(A_i \cup A_j ) \leq 2 \textnormal{MMS}_i(2, A_i \cup A_j)$ holds for any $i,j
	\in N$. Again, due to the monotonicity of the cost function, we have $c_i(A_i) \leq c_i(A_i \cup
	A_j)$ that implies $c_i(A_i ) \leq 2 \textnormal{MMS}_i(2, A_i \cup A_j)$. Therefore, allocation
	$\mathbf{A}$ is also 2-PMMS, completing the proof.
\end{proof}

As can be seen from the proof of Lemma~\ref{lemma::3.2}, in allocating chores, if one assigns all
chores to one agent, then the allocation still has a bounded approximation for PMMS and MMS.
However, when allocating goods, if an agent receives nothing but his maximin share is positive,
then clearly the corresponding allocation has an infinite approximation guarantee for PMMS and MMS.

\section{Bounds on EF, EFX, and EF1 under additive setting}

Let us start with EF. According to the definitions, for any $\alpha \geq 1$, $\alpha$-EF is
stronger than $\alpha$-EFX and $\alpha$-EF1. The following propositions present the approximation
guarantee of $\alpha$-EF for MMS and PMMS.

\begin{proposition}\label{prop::4.1}
When agents have additive cost functions, for any $\alpha \geq 1$, an $\alpha$-\textnormal{EF}
allocation is also $\frac{n \alpha }{ n - 1 +\alpha}$-\textnormal{MMS}, and this result is tight.
\end{proposition}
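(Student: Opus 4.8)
The plan is to prove the two halves separately: first that every $\alpha$-EF allocation is $\frac{n\alpha}{n-1+\alpha}$-MMS, and then a matching tightness example.

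For the upper bound I would fix an agent $i$ and an $\alpha$-EF allocation $\mathbf{A}$. The defining inequality $c_i(A_i)\le\alpha\,c_i(A_j)$ rearranges to $c_i(A_j)\ge\tfrac{1}{\alpha}c_i(A_i)$ for every $j\neq i$. Summing the costs of all $n$ bundles and using additivity of $c_i$ gives
\[
c_i(E)=\sum_{j\in N}c_i(A_j)\ge c_i(A_i)+(n-1)\cdot\tfrac{1}{\alpha}c_i(A_i)=\frac{n-1+\alpha}{\alpha}\,c_i(A_i),
\]
so $c_i(A_i)\le\frac{\alpha}{n-1+\alpha}c_i(E)$. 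I would then invoke the proportional lower bound of Lemma~\ref{obs::3.1}, namely $\textnormal{MMS}_i(n,E)\ge\frac{1}{n}c_i(E)$, i.e.\ $c_i(E)\le n\,\textnormal{MMS}_i(n,E)$, and chain the two estimates to obtain $c_i(A_i)\le\frac{n\alpha}{n-1+\alpha}\textnormal{MMS}_i(n,E)$. Since $i$ is arbitrary, $\mathbf{A}$ is $\frac{n\alpha}{n-1+\alpha}$-MMS. This direction is routine.

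For tightness, the key observation is that the argument above uses exactly two inequalities, so both must become equalities simultaneously: every envy constraint $c_i(A_i)=\alpha\,c_i(A_j)$ must bind, and the maximin share must collapse to the proportional share, $\textnormal{MMS}_i(n,E)=\frac{1}{n}c_i(E)$. I would therefore take $n$ identical additive agents and equal-cost chores, placing $\alpha$ times as many chores in one agent's bundle as in each of the others', and arranging the total number of chores to be a multiple of $n$ so that the chores split into $n$ equal bundles. Concretely, for rational $\alpha=p/q$ one can use $m=n\bigl(p+(n-1)q\bigr)$ unit-cost chores, assigning $np$ to agent $1$ and $nq$ to each remaining agent: then agent $1$'s envy is exactly $\alpha$ (so $\mathbf{A}$ is $\alpha$-EF), while $\textnormal{MMS}_i(n,E)=m/n=p+(n-1)q$, and the ratio $c_1(A_1)/\textnormal{MMS}_1(n,E)=\frac{np}{p+(n-1)q}=\frac{n\alpha}{n-1+\alpha}$. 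Irrational $\alpha$ is handled by taking the chore multiplicities $np,nq$ to finer integer approximations, making the ratio approach the bound.

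The main obstacle is the tightness construction rather than the upper bound. I must check that the allocation is genuinely $\alpha$-EF for \emph{every} agent, not only the overloaded agent $1$, and that the maximin share really equals the proportional share. The first point reduces to noting that each non-overloaded agent holds the cheapest bundle and hence envies no one (ratio $\le 1\le\alpha$), and the second reduces to exhibiting the explicit equal $n$-partition; both are short once the chore multiplicities are chosen so the set divides evenly into $n$ equal bundles. The only real care needed is this integrality/divisibility bookkeeping.
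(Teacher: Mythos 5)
Your upper-bound argument is exactly the paper's: sum the envy inequalities over $j\neq i$, use additivity to get $c_i(A_i)\le\frac{\alpha}{n-1+\alpha}c_i(E)$, and chain this with the proportional bound $\textnormal{MMS}_i(n,E)\ge\frac{1}{n}c_i(E)$ from Lemma~\ref{obs::3.1}. Where you genuinely diverge is the tightness instance. The paper keeps all bundles the same size ($n$ chores each) and makes the chore costs non-uniform: $n$ chores of cost $\alpha$ (all given to agent 1) and $n(n-1)$ chores of cost $1$; the MMS partition then pairs one $\alpha$-chore with $n-1$ unit chores per bundle, giving $\textnormal{MMS}_1(n,E)=n-1+\alpha$ and the ratio $\frac{n\alpha}{n-1+\alpha}$ exactly, for every real $\alpha\ge 1$ in one shot. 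You instead keep the chores uniform (unit cost) and make the bundle sizes non-uniform; this is arguably more elementary, but it forces $\alpha$ to enter as a ratio of integers, hence your rational-$\alpha$ restriction and the limiting argument for irrational $\alpha$. The limiting argument does establish tightness in the standard sense (no smaller constant works), but there is one point of care you glossed over: the rational approximations $p/q$ must approach $\alpha$ \emph{from below}. If $p/q>\alpha$, agent 1's envy ratio is exactly $p/q$ and the allocation is not $\alpha$-EF at all, so such instances witness nothing about $\alpha$-EF allocations. With that one-line fix your proof is complete; the paper's construction simply sidesteps the issue by letting the chore costs themselves carry the (possibly irrational) value $\alpha$.
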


\begin{proof}
We first prove the upper bound and focus on agent $i$. Let $\mathbf{A} = (A _ 1,\ldots,A _ n )$ be
an $\alpha$-EF allocation, then according to its definition, $c _ i (A_ i ) \leq \alpha \cdot c _ i
(A_ j )$ holds for any $j \in N$. By summing up $j$ over $ N\setminus\{ i \}$, we have $(n-1)c _
i(A_ i )\leq \alpha \cdot \sum_{ j \in N\setminus\{ i \}} c _ i (A _ j ) $ and as a result,
$(n-1+\alpha) c _ i (A_ i ) \leq  \alpha \cdot \sum_{ j \in N} c _ i (A _ j )  = \alpha \cdot c _ i
(E)$ where the last transition owing to the additivity of cost functions. On the other hand, from Lemma~\ref{obs::3.1}, it holds that $\textnormal{MMS} _ i (n ,E) \geq
\frac{1}{n}c _ i (E)$, implying the ratio $$\frac{c _ i (A _ i )}{\textnormal{MMS}_i (n,E)} \leq
\frac{n \alpha}{ n -1 + \alpha}.$$ 	

Regarding tightness, consider the following instance with $n$ agents and $n^2$ chores denoted as
$\{ e_1,\ldots,e_{n^2}\}$. Agents have an identical cost profile and for every $i \in [n]$, $ c_i (
e _ j )  = \alpha$ for $ 1 \leq  j \leq n  $ and $c _ i ( e_ j ) = 1$ for $ n+1  \leq j \leq n ^
2$. Consider allocation $\mathbf{B} = (B_1,\ldots,B_n)$ with $B_i = \{ e_{(i-1)n+1},\ldots,
e_{in}\}$ for any $i\in N$. It is not hard to verify that allocation $\mathbf{B}$ is $\alpha$-EF.
As for $\textnormal{MMS}_1(n,E)$, since in total we have $n$ chores with each cost $\alpha$ and
$(n-1)n$ chores with each cost 1, then in the partition defining $\textnormal{MMS}_1(n,E)$, each
bundle contains exactly one chore with cost $\alpha$ and $n-1$ chores with cost 1. Consequently, we
have $\textnormal{MMS}_1(n,E) = n-1+\alpha$ and the ratio $\frac{c_1(B_1)}{\textnormal{MMS}_1(n,E)
} = \frac{n\alpha}{n-1+\alpha}$.
\end{proof}

\begin{proposition}\label{prop::4.2}
When agents have additive cost functions, for any $\alpha \geq 1$, an $\alpha$-\textnormal{EF}
allocation is also $\frac{2\alpha}{1 +	\alpha}$-\textnormal{PMMS}, and this result is tight.
\end{proposition}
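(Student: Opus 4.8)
The plan is to mirror the proof of Proposition~\ref{prop::4.1}, but replace the global maximin-share bound by its pairwise analogue, working inside the combined bundle $A_i\cup A_j$ for a fixed pair $i,j\in N$. By Definition~\ref{def2.5} it suffices to show $c_i(A_i)\le \frac{2\alpha}{\alpha+1}\,\textnormal{MMS}_i(2,A_i\cup A_j)$ for every pair. The two ingredients I would use are additivity, which gives $c_i(A_i\cup A_j)=c_i(A_i)+c_i(A_j)$, and the first lower bound of Lemma~\ref{obs::3.1} applied with $k=2$ and $S=A_i\cup A_j$, namely $\textnormal{MMS}_i(2,A_i\cup A_j)\ge \tfrac12\,c_i(A_i\cup A_j)$.

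For the upper bound I would then invoke $\alpha$-EF in the form $c_i(A_j)\ge \tfrac1\alpha\,c_i(A_i)$ to estimate the right-hand side from below:
\[
\textnormal{MMS}_i(2,A_i\cup A_j)\ \ge\ \tfrac12\big(c_i(A_i)+c_i(A_j)\big)\ \ge\ \tfrac12\Big(c_i(A_i)+\tfrac1\alpha\,c_i(A_i)\Big)\ =\ \tfrac{\alpha+1}{2\alpha}\,c_i(A_i).
\]
Rearranging gives exactly $c_i(A_i)\le \frac{2\alpha}{\alpha+1}\,\textnormal{MMS}_i(2,A_i\cup A_j)$, i.e.\ the claimed $\frac{2\alpha}{\alpha+1}$-PMMS guarantee. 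Since $i,j$ were arbitrary, this holds for all pairs.

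For tightness, because PMMS is a pairwise notion, a two-agent instance with identical additive costs suffices. I would take three chores $e_1,e_2,e_3$ with $c(e_1)=\frac{\alpha+1}{2}$, $c(e_2)=\frac{\alpha-1}{2}$, $c(e_3)=1$ (one may scale by $2$ to avoid fractions), and the allocation $A_1=\{e_1,e_2\}$, $A_2=\{e_3\}$. Then $c(A_1)=\alpha$ and $c(A_2)=1$, so the allocation is $\alpha$-EF (tight for agent $1$, since $\alpha\le\alpha\cdot 1$, and slack for agent $2$, since $1\le\alpha^2$). The pair $(e_1)\mid(e_2,e_3)$ splits $A_1\cup A_2$ into two bundles each of cost $\frac{\alpha+1}{2}$, which equals half of $c(E)=\alpha+1$; hence by the same lower bound of Lemma~\ref{obs::3.1} this split is optimal and $\textnormal{MMS}_1(2,A_1\cup A_2)=\frac{\alpha+1}{2}$. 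The resulting ratio $\frac{c(A_1)}{\textnormal{MMS}_1(2,A_1\cup A_2)}=\frac{\alpha}{(\alpha+1)/2}=\frac{2\alpha}{\alpha+1}$ matches the upper bound.

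The upper bound is essentially a one-line chain, so the only real care lies in the tightness construction: I must pick the combined bundle so that it simultaneously admits a \emph{perfectly balanced} $2$-partition (forcing the Lemma~\ref{obs::3.1} bound to be tight, hence fixing the maximin share) and an $\alpha$-EF allocation whose envious side has exactly the $\frac{\alpha}{\alpha+1}$ fraction of the total cost. The chore costs above are chosen precisely to make both conditions hold at once, so I expect this matching of the two extremal conditions to be the main (though still routine) obstacle.
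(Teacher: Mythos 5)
Your proof is correct and follows essentially the same route as the paper's: the upper bound combines additivity, the definition of $\alpha$-EF, and the $k=2$ case of Lemma~\ref{obs::3.1} exactly as the paper does, and your tightness instance is a two-agent miniature of the paper's construction (there, $n$ agents with identical costs $\alpha,\alpha,1,\ldots,1$ over $2n$ chores, agent $1$ holding both $\alpha$-chores, so that each combined pair of bundles again splits perfectly into two halves of cost $\alpha+1$). The only difference worth noting is that the paper's family witnesses tightness for every number of agents $n$ while yours does so only for $n=2$; since the bound $\frac{2\alpha}{\alpha+1}$ is independent of $n$ and PMMS is a pairwise notion, this still establishes tightness of the stated result, and your example extends to larger $n$ by padding with dummy agents and chores.
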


\begin{proof}
We first prove the upper bound. Let $\mathbf{A} = ( A _ 1, A _ 2, \ldots, A _ n)$ be an $\alpha$-EF
allocation, then according to the definition, for any $i,j \in N$, $c _ i ( A _ i )\leq
\alpha \cdot c _ i ( A _ j )$ holds. By additivity, we have $c _ i ( A _ i \cup A _ j ) = c _ i ( A
_ i ) + c _ i ( A _j ) \geq (1 + \frac{1}{\alpha}) \cdot c _ i ( A _ i )$, and consequently, $ c _
i( A _ i ) \leq \frac{\alpha}{ \alpha + 1 }\cdot c _ i ( A _ i \cup A _ j ) $ holds. On the other
hand, from Lemma \ref{obs::3.1}, we know $c _ i ( A _ i \cup A _ j ) \leq 2
\cdot \textnormal{MMS} _i (2, A _ i \cup A _ j)$, and therefore the following holds
\begin{equation*}
c _ i ( A _ i ) \leq \frac{2\alpha}{ \alpha + 1} \cdot\textnormal{MMS} _i (2, A _ i \cup A _ j).
\end{equation*}
As for tightness, consider an instance with $n$ agents and $2n$ chores denoted as $ \left\{e_1,
e_2,\ldots, e_{2n}\right\}$. Agents have identical cost profile and for every $  i \in [n]$, $c_i (
e _ 1 ) =  c _i ( e _ 2 ) =\alpha $ and $ c _ i ( e _ j ) =1 $ for $ 3 \leq j \leq 2n$. Now,
consider an allocation $\mathbf{B} = ( B_1,\ldots, B _n)$ where $B_i = \{ e_{2i-1}, e_{2i}\}$ for
any $i \in N$. It is not hard to verify that allocation $\mathbf{B}$ is $\alpha$-EF and except for
agent 1, no one else will violate the condition of PMMS. For any $j \geq 2$, one can calculate
$\textnormal{MMS}_1(2, B_1\cup B_j) = 1 + \alpha$, yielding the ratio
$\frac{c_1(B_1)}{\textnormal{MMS}_1(2, B_1\cup B_j)} = \frac{2\alpha}{1+\alpha}$, as required.
\end{proof}

Proposition~\ref{prop::4.2} indicates that the approximation guarantee of $\alpha$-EF for PMMS is
independent of the number of agents. However, according to Proposition~\ref{prop::4.1}, its
approximation guarantee for MMS is affected by the number of agents. Moreover, this guarantee ratio
converges to $\alpha$ as $n$ goes to infinity.

We remark that none of EFX, EF1, PMMS and MMS has a bounded guarantee for EF. We show this by a
simple example. Consider an instance of two agents and one chore, and the chore has a positive cost
for both agents. Assigning the chore to an arbitrary agent results in an allocation that satisfies
EFX, EF1, PMMS and MMS, simultaneously. However, since one agent has a positive cost on his own
bundle and zero cost on other agents' bundle, such an allocation has an infinite approximation
guarantee for EF.

Next, we consider approximation of EFX and EF1.
\begin{proposition}\label{prop::4.3}
When agents have additive cost functions, an $\alpha$-\textnormal{EFX} allocation is
$\alpha$-\textnormal{EF1} for any $\alpha \geq 1$. On the other hand, an \textnormal{EF1}
allocation is not $\beta$-\textnormal{EFX} for any $\beta \geq 1$.
\end{proposition}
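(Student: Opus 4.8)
The statement splits into a one-line implication and a separation. For the first part (``$\alpha$-EFX $\Rightarrow$ $\alpha$-EF1'') I would argue directly from the definitions, exploiting that $\alpha$-EFX is a \emph{universal} requirement over all positive-cost chores while $\alpha$-EF1 only asks for a single \emph{existential} witness. Fix agents $i,j\in N$. If the bundle $A_i$ contains at least one chore $e$ with $c_i(e)>0$, then $\alpha$-EFX applied to that particular $e$ yields $c_i(A_i\setminus\{e\})\le\alpha\,c_i(A_j)$, so $e$ itself is exactly the witness required by $\alpha$-EF1. If instead every chore in $A_i$ has zero cost to $i$ (in particular if $A_i=\emptyset$), then additivity forces $c_i(A_i)=0$, so $i$ does not envy $j$ at all; removing any chore keeps the cost at $0\le\alpha\,c_i(A_j)$, and the $\alpha$-EF1 condition holds trivially. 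Since $i,j$ were arbitrary, the allocation is $\alpha$-EF1. Note that additivity is genuinely used in the second case to pass from zero-cost singletons to $c_i(A_i)=0$.

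For the second part I would exhibit, for each fixed $\beta\ge1$, an EF1 allocation that fails $\beta$-EFX; this shows EF1 yields \emph{no} finite EFX guarantee. The design principle is that EF1 permits discarding the \emph{most} expensive chore, whereas $\beta$-EFX must survive discarding the \emph{cheapest} positive-cost chore, so I want a bundle holding one cheap and one very expensive chore measured against a reference bundle whose cost equals the cheap chore. Concretely, take two agents and three chores $e_1,e_2,e_3$ with identical costs $c(e_1)=c(e_3)=1$ and $c(e_2)=M$ for a large parameter $M$, together with the allocation $A_1=\{e_1,e_2\}$, $A_2=\{e_3\}$. Then agent $1$ removing $e_2$ leaves cost $c_1(e_1)=1=c_1(A_2)$, so the allocation is EF1 for agent $1$ (and agent $2$, holding a single chore, is trivially EF1 since removing $e_3$ leaves cost $0$). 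But the only $\beta$-EFX-relevant removal of the positive-cost chore $e_1$ leaves cost $c_1(e_2)=M$, which would force $M\le\beta\,c_1(A_2)=\beta$; choosing $M>\beta$ defeats $\beta$-EFX.

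Neither direction is technically deep; the one subtlety I would watch for sits in the separation. One must ensure the reference bundle $A_2$ has \emph{strictly positive} cost to agent $1$: if $c_1(A_2)=0$, then EF1 would itself force $A_1$ to contain at most one positive-cost chore, at which point EFX already holds and no separation is possible. Calibrating $c_1(A_2)$ to match the cheap chore exactly keeps the EF1 inequality tight while letting the expensive chore inflate the EFX ratio without bound. I would also double-check that the allocation is EF1 for \emph{every} ordered pair of agents, not merely the pair witnessing the EFX failure, which here is immediate since agent $2$ holds a single chore.
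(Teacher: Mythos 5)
Your proposal is correct and takes essentially the same route as the paper: the first part is the same direct definitional argument (you are in fact more careful than the paper in handling the case where $A_i$ contains only zero-cost chores), and the second part rests on the identical counterexample design, namely a bundle pairing one cheap chore with one very expensive chore so that EF1's removal of the expensive chore succeeds while $\beta$-EFX's removal of the cheap one fails. The only cosmetic difference is that you use two agents and three chores with $M>\beta$ chosen per fixed $\beta$, whereas the paper uses $n$ agents, $2n$ chores, and lets the large cost $p\rightarrow\infty$; these are interchangeable.
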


\begin{proof}
We first show the positive part. Let $\mathbf{A} = ( A _ 1 , A _ 2,\ldots, A _ n )$ be an
$\alpha$-EFX allocation, then according to its definition, $\forall i ,j \in N, \forall e \in A _
i$ with $c_i(e) >0$, $c _ i ( A _ i \setminus \left\{ e \right\}) \leq \alpha \cdot c _ i ( A _ j )$ holds. This
implies $\mathbf{A}$ is also $\alpha$-EF1. 	

For the impossibility result, consider an instance with $n$ agents and $2n$ chores denoted as  $\{
e_1,e_2,\ldots,$ $e_{2n}\}$. Agents have identical cost profile. The cost function of agent 1
is: $c_1(e_1) = p, c_1(e_j)=1, \forall j \geq 2$ where $p\gg 1$. Now consider an allocation
$\mathbf{B} = (B_1,\ldots,B_n)$ with $B_i = \{ e_{2i-1}, e_{2i}\}, \forall i \in N$. It is not hard
to see allocation $\mathbf{B}$ is EF1 and except for agent 1, no one else will envy the bundle of
others. Thus, we only concern agent 1 when calculate the approximation guarantee for EFX. By
removing chore $e_2$ from bundle $B_1$, $\frac{c_1(B_1\setminus \{ e_2\})}{c_1(B_j)} = \frac{p}{2}$
holds for any $j \in N\setminus\{ 1\}$, and the ratio $\frac{p}{2} \rightarrow \infty$ as $p
\rightarrow \infty$.
\end{proof}

Next, we consider the approximation guarantee of EF1 for MMS. In allocating goods, {\citet{amanatidisComparingApproximateRelaxations2018} present} a tight result that an $\alpha$-EF1
allocation is $O(n)$-MMS. In contrast, in allocating chores, $\alpha$-EF1 can have a much better
guarantee for MMS.

\begin{proposition}\label{prop::4.6}
When agents have additive cost functions, for any $\alpha \geq 1$ and $n \geq 2$, an
$\alpha$-\textnormal{EF1} allocation is also $\frac{n\alpha + n -1}{ n -1 +
\alpha}$-\textnormal{MMS}, and this result is tight.
\end{proposition}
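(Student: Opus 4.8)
The plan is to establish the upper bound directly from the definition of $\alpha$-EF1 together with the two lower bounds of Lemma~\ref{obs::3.1}, and then to construct a single instance realizing every inequality with equality. The argument closely parallels Proposition~\ref{prop::4.1}, the only new ingredient being a ``removed item'' term that we control via the second bound of Lemma~\ref{obs::3.1}.

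\textit{Upper bound.} Fix an agent $i$ with $A_i\neq\emptyset$ (the empty case is trivial since then $c_i(A_i)=0$) and write $c_{\max}:=\max_{e\in A_i}c_i(e)$. For each $j\neq i$, if $e$ is the chore witnessing $\alpha$-EF1 of $i$ against $j$, then $c_i(A_i)-c_{\max}\le c_i(A_i)-c_i(e)=c_i(A_i\setminus\{e\})\le\alpha\,c_i(A_j)$, where the first step uses additivity and $c_i(e)\le c_{\max}$. I would sum this inequality over the $n-1$ agents $j\neq i$ and use $\sum_{j\neq i}c_i(A_j)=c_i(E)-c_i(A_i)$ to obtain
\[
(n-1)\bigl(c_i(A_i)-c_{\max}\bigr)\le\alpha\bigl(c_i(E)-c_i(A_i)\bigr),
\]
i.e.\ $(n-1+\alpha)\,c_i(A_i)\le\alpha\,c_i(E)+(n-1)c_{\max}$. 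Feeding in $c_i(E)\le n\,\textnormal{MMS}_i(n,E)$ and $c_{\max}\le\textnormal{MMS}_i(n,E)$ from Lemma~\ref{obs::3.1} and dividing by $n-1+\alpha$ yields $c_i(A_i)\le\frac{n\alpha+n-1}{n-1+\alpha}\,\textnormal{MMS}_i(n,E)$, as desired.

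\textit{Tightness.} Tracing the chain of inequalities, equality requires three things at once: a perfectly balanced MMS partition (so $\textnormal{MMS}_i=\frac1n c_i(E)$), a single chore of cost equal to $\textnormal{MMS}_i$ (so $c_{\max}=\textnormal{MMS}_i$), and one agent whose EF1 constraint binds against all others with equal complementary bundles. I would realize all three with $n$ identical agents and the items: one heavy chore of cost $M:=n-1+\alpha$, exactly $n-1$ chores of cost $\alpha$, and $(n-1)^2$ chores of cost $1$. Then $\textnormal{MMS}(n,E)=M$, since the heavy chore forms one bundle and the remaining items split into $n-1$ bundles each consisting of one cost-$\alpha$ chore and $n-1$ unit chores, of total cost $\alpha+(n-1)=M$. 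Allocating to agent~$1$ the heavy chore together with all $n-1$ cost-$\alpha$ chores, and to each other agent $n-1$ unit chores, gives an $\alpha$-EF1 allocation whose only binding constraint is agent~$1$'s (removing the heavy chore leaves residual cost $\alpha(n-1)$, matching $\alpha$ times each other bundle's cost $n-1$), and $c_1(A_1)/\textnormal{MMS}_1(n,E)=(n\alpha+n-1)/(n-1+\alpha)$.

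The upper bound is routine once one spots that removing the most costly chore is the right way to invoke EF1; the genuinely non-trivial step is the tightness instance, where the obstacle is to make the three equality conditions compatible in one realizable allocation. The key design idea that resolves it is to use \emph{two} filler sizes (the cost-$\alpha$ and cost-$1$ chores) so that both the balanced MMS partition and the overloaded allocation can be assembled from atomic items, which also makes the construction valid for every real $\alpha\ge1$ without any divisibility or limiting argument.
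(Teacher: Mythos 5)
Your proof is correct and takes essentially the same route as the paper: the upper bound is obtained by summing the EF1 inequality (controlled via the maximum-cost chore in $A_i$) over all $j \neq i$ and then invoking both lower bounds of Lemma~\ref{obs::3.1}, exactly as in the paper's derivation. Your tightness instance---one heavy chore of cost $n-1+\alpha$, $n-1$ chores of cost $\alpha$, and $(n-1)^2$ unit chores, with agent~1 receiving the heavy chore plus all cost-$\alpha$ chores---is precisely the paper's construction as well.
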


\begin{proof}
We first prove the upper bound.  Let $\mathbf{A}= ( A _ 1,\ldots, A _ n)$ be an $\alpha$-EF1
allocation and the approximation guarantee for MMS is determined by agent $i$. We can further assume $c_i(A_i)>0$; otherwise agent $i$ meets the condition of MMS and we are done. Let {$\bar{e}$} be the chore with largest cost for
agent $i$ in bundle $A _ i $, i.e., ${\bar{e}} \in \arg\max_{e \in A _ i } c _ i (e)$.

By the definition of $\alpha$-EF1, for any $ j \in N\setminus \{ i \}$, $c _ i (A _ i \setminus \{{\bar{e}}\}) \leq \alpha \cdot c _ i (A _ j )$ holds. Then, by summing up $j$ over $N\setminus \{ i \}$
and adding a term $\alpha c _ i (A _ i )$ on both sides, the following holds,
\begin{equation}\label{eq::9}
	\alpha \cdot \sum _ { j \in N} c _ i (A _ j ) \geq (n - 1 + \alpha ) c _ i (A _ i ) - ( n -1) c _ i ({\bar{e}}).
\end{equation}
From Lemma~\ref{obs::3.1}, we have $\textnormal{MMS} _ i (n,E) \geq \max \{ \frac{1}{n}c _ i (E), c
_ i ({\bar{e}}) \}$, and by additivity, it holds that
\begin{equation}\label{eq::10}
	n\alpha \textnormal{MMS}_i(n, E) \geq (n-1+\alpha) c_i(A_i) - (n-1) \textnormal{MMS}_i(n,E).
\end{equation}
Inequality (\ref{eq::10}) is equivalent to $\frac{c _ i (A_ i )}{ \textnormal{MMS} _ i (n,M)} \leq
\frac{n\alpha + n -1}{ n -1 + \alpha}$, as required.

As for tightness, consider the following instance with $n$ agents and a set
$E=\{e_1,\ldots,e_{n^2-n+1}\}$ of $n^2-n+1$ chores. Agents have an identical cost profile and for
every $ i \in [n]$, $c _ i ( e _ 1 ) = \alpha +  n - 1$, $c _ i ( e _ j ) = \alpha$ for any $2 \leq
j \leq n $ and $ c _ i ( e  _ j ) = 1 $ for $  j \geq  n + 1 $. Now, consider an allocation
$\mathbf{B} = \left\{ B _ 1,\ldots, B_  n \right\}$ with $B _ 1 =\left\{ e _1, \ldots, e _ n
\right\}$ and $B_j =\{ e_{n+(n-1)(j-2)+1}, \ldots,$ $e_{n+(n-1)(j-1)}\}$ for any $j\geq 2$. Then,
we have $c _i(B_j ) = n-1$ for any $ i \in [n]$ and $  j \geq 2$. Accordingly, except for agent 1,
no one else will violate the condition of $\alpha$-EF1 and MMS. As for agent 1, since
$c_1(B_1\setminus\{e_1\}) = (n-1)\alpha = \alpha c_1(B_j), \forall j\geq 2$, then we can claim that
allocation $\mathbf{B}$ is $\alpha$-EF1. To calculate $\textnormal{MMS}_1(n,E)$, consider an
allocation $\mathbf{T} = ( T _1 ,\ldots, T _ n )$ with $T_1 = \{ e_1\}$ and $T_ j = \{ B _ j \cup
\left\{ e _ j \right\} \}$ for any  $ 2\leq j \leq n $. It is not hard to verify that $c_1(T_j) =
\alpha + n -1$ for any $j\in N$. Therefore, we have $\textnormal{MMS}_1(n,E) = \alpha + n -1 $
implying the ratio $\frac{c _ 1 (B _ 1 )}{ \textnormal{MMS} _ 1 (n,E)} = \frac{n\alpha + n -1}{ n
-1 + \alpha}$, completing the proof.
\end{proof}

We now study $\alpha$-EFX in terms of its approximation guarantee for MMS and provide upper and
lower bounds for general $\alpha \geq 1$ or $n \geq 2$.

\begin{proposition}\label{prop::4.5}
When agents have additive cost functions, for any $\alpha \geq 1$ and $n\geq  2$, an
$\alpha$-\textnormal{EFX} allocation is $ \min\left\{ \frac{2n\alpha}{n-1+2\alpha}, \frac{n\alpha +
n -1}{n - 1+ \alpha} \right\}$-\textnormal{MMS}, while it is not $\beta$-\textnormal{MMS} for any
$\beta < \max\left\{ \frac{2n\alpha}{2\alpha + 2n -3}, \frac{2n}{n + 1}\right \}$.
\end{proposition}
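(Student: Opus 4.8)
The plan is to bound $c_i(A_i)$ for the agent $i$ who determines the MMS guarantee, using the two separate lower bounds on $\textnormal{MMS}_i(n,E)$ from Lemma~\ref{obs::3.1}, namely $\textnormal{MMS}_i(n,E)\ge \frac{1}{n}c_i(E)$ and $\textnormal{MMS}_i(n,E)\ge c_i(\bar e)$, where $\bar e\in\arg\max_{e\in A_i}c_i(e)$. Since an $\alpha$-EFX allocation is in particular $\alpha$-EF1 (Proposition~\ref{prop::4.3}), the argument of Proposition~\ref{prop::4.6} already delivers the $\frac{n\alpha+n-1}{n-1+\alpha}$ term of the minimum. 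The work is therefore to establish the other branch $\frac{2n\alpha}{n-1+2\alpha}$, which must exploit the stronger EFX condition. First I would apply the EFX inequality to the \emph{largest} chore: $c_i(A_i\setminus\{\bar e\})\le\alpha\cdot c_i(A_j)$ for every $j\neq i$. Because $\bar e$ is the max-cost chore, we have $c_i(\bar e)\ge\frac{1}{|A_i|}c_i(A_i)$, but more usefully, $c_i(A_i\setminus\{\bar e\})\ge c_i(A_i)-c_i(\bar e)\ge c_i(A_i)-\textnormal{MMS}_i(n,E)$. Summing over $j\in N\setminus\{i\}$ and adding $\alpha\,c_i(A_i)$ to both sides, then using additivity $\sum_j c_i(A_j)=c_i(E)\le n\,\textnormal{MMS}_i(n,E)$, gives a linear inequality in $c_i(A_i)$ and $\textnormal{MMS}_i(n,E)$; solving it should yield the $\frac{2n\alpha}{n-1+2\alpha}$ bound. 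The final guarantee is the minimum of the two, since both hold simultaneously.

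**Proving the lower bound (tightness).**
For the impossibility side I would construct two separate instances, one matching each term of the $\max$, and argue that no better bound than the larger of the two can hold. For the $\frac{2n}{n+1}$ bound (which is independent of $\alpha$, so it should be the binding constraint when $\alpha$ is small), I expect a construction with a few high-cost chores forcing the EFX agent's bundle to be large relative to her maximin share; the design should make removal of \emph{any} positive-cost chore still leave envy within the factor $\alpha$ while the MMS partition balances costs more evenly. For the $\frac{2n\alpha}{2\alpha+2n-3}$ bound I would use a cost profile calibrated to $\alpha$, with chores of cost roughly $\alpha$ and $1$ chosen so that the allocation is exactly $\alpha$-EFX (tight at the largest removable chore) while $\textnormal{MMS}_i(n,E)$ is computed from an even split. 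In each case I would verify $\alpha$-EFX directly and compute the MMS value exactly, then read off the ratio.

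**Main obstacle.**
The hardest part will be the lower-bound constructions, specifically ensuring that removing \emph{any} positive-cost chore (not just the largest) keeps the envy within the factor $\alpha$ — this is the defining feature of EFX and is more delicate than the EF1 constructions, since every chore in the envious agent's bundle, not merely the maximum one, must satisfy the bound after deletion. I anticipate this forces the offending bundle to consist of nearly equal-cost chores, which simultaneously constrains both the numerator $c_i(A_i)$ and the MMS denominator; balancing these so that the ratio hits exactly $\frac{2n\alpha}{2\alpha+2n-3}$ or $\frac{2n}{n+1}$ is where the calculation will be most sensitive. On the upper-bound side, the subtlety is making sure the EFX-based derivation of $\frac{2n\alpha}{n-1+2\alpha}$ correctly uses $c_i(\bar e)\le\textnormal{MMS}_i(n,E)$ rather than the weaker EF1 estimate; this substitution is what sharpens the EF1 bound into the EFX bound, and it is only an improvement when $\alpha$ is small, which is consistent with the $\min$ appearing in the statement.
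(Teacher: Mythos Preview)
Your upper-bound plan has a genuine gap. Removing the \emph{largest} chore $\bar e$ and invoking $c_i(\bar e)\le\textnormal{MMS}_i(n,E)$ does not sharpen anything beyond EF1: carrying out your outlined computation,
\[
(n-1)\bigl(c_i(A_i)-\textnormal{MMS}_i(n,E)\bigr)\le\alpha\bigl(c_i(E)-c_i(A_i)\bigr)\le\alpha\,n\,\textnormal{MMS}_i(n,E),
\]
gives exactly $c_i(A_i)\le\frac{n\alpha+n-1}{n-1+\alpha}\,\textnormal{MMS}_i(n,E)$, i.e.\ the EF1 bound of Proposition~\ref{prop::4.6} again. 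This is not an accident: removing the max-cost chore is precisely what EF1 already permits, so your argument never uses the extra strength of EFX.

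The paper's derivation of the $\frac{2n\alpha}{n-1+2\alpha}$ term goes the opposite way: it applies the EFX inequality to the \emph{smallest} positive-cost chore $e^*\in\arg\min_{e\in A_i}c_i(e)$. The point is that when $|A_i|\ge2$, removing the minimum leaves at least half the bundle, so $c_i(A_i\setminus\{e^*\})\ge\tfrac12 c_i(A_i)$. Combined with $\alpha\,c_i(A_j)\ge c_i(A_i\setminus\{e^*\})$ for every $j\neq i$, this yields $\alpha\,c_i(A_j)\ge\tfrac12 c_i(A_i)$; summing over $j$ and using $\textnormal{MMS}_i(n,E)\ge\tfrac1n c_i(E)$ then gives the desired bound. (The case $|A_i|=1$ is handled separately and trivially.) So the EFX-specific improvement comes from the lower bound $c_i(A_i\setminus\{e^*\})\ge\tfrac12 c_i(A_i)$, which is unavailable when you delete $\bar e$. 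Your remark that ``this substitution is what sharpens the EF1 bound into the EFX bound'' is therefore mistaken; the sharpening is in the choice of which chore to remove, not in which estimate for $\textnormal{MMS}_i$ is plugged in.

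Your lower-bound sketch is in the right spirit and matches the paper's strategy: two separate instances, one for each term of the $\max$, with the offending bundle composed of equal-cost chores so that removing \emph{any} of them leaves the same residual cost (guaranteeing exact $\alpha$-EFX). The paper's concrete constructions are $2n$ chores with costs $\lceil j/2\rceil$ for the $\frac{2n}{n+1}$ bound, and a bundle of $n$ chores each of cost $2\alpha$ against bundles of unit-cost chores for the $\frac{2n\alpha}{2\alpha+2n-3}$ bound.
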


\begin{proof}
We first prove the upper bound. Let $\mathbf{A} = ( A _1,\ldots, A _ n )$ be an $\alpha$-EFX
allocation with $\alpha \geq 1$ and the approximation guarantee for MMS is determined by agent $i$. The upper bound $\frac{n\alpha + n -1 }{
n - 1 + \alpha}$ directly follows from Proposition \ref{prop::4.3} and \ref{prop::4.6}. In what
follows, we prove the upper bound $\frac{2n\alpha}{n-1+2\alpha}$. We assume $c_i(A_i)>0$; otherwise agent $i$ meets the condition of MMS and we are done. We further assume that every chore in $A_i$ has positive cost for agent $i$ since zero-cost chore does not affect the approximation guarantee for EFX or MMS. Let
$e^{*} $ be the chore in bundle $A_i$ having the minimum cost for agent $i$, i.e., $e^{*} \in \arg
\min _ {e \in A _ i } c _ i (e)$. Next, we divide the proof into two cases. 	

\emph{Case 1}: $ |A _ i | = 1$. Then $ e^{*}$ is the unique element in $A _ i $, and thus $ c _ i (
A _ i ) = c _ i ( e ^ {*})$. By Lemma \ref{obs::3.1}, $c_i(e^*) \leq
\textnormal{MMS}_i(n, E)$ holds, and thus, $ c _ i ( A _ i ) \leq \textnormal{MMS}_i(n, E) $. 	

\emph{Case 2}: $ |A _ i | \geq 2$. By the definition of $\alpha$-EFX, for any $j \in N
\setminus \left\{ i\right\}$, $c _ i ( A _ i \setminus \left\{ e ^ {*} \right\}) \leq \alpha \cdot c
_ i ( A _ j )$. Since $e ^ {*} \in \arg \min _ {e \in A _ i } c _ i (e)$ and $|A_i| \geq 2$, we
have $ c _ i (e ^ {*}) \leq \frac{1}{2}c _ i ( A _ i )$. Then, the following holds,
\begin{equation}\label{eq::6}
	\alpha \cdot c _ i ( A _ j ) \geq c _ i ( A _ i) - c _ i (e^{*}) \geq
	\frac{1}{2} c _ i ( A _ i ), \qquad \forall j \in N\setminus \left\{ i \right\}.
\end{equation}
By summing up $j$ over $ N \setminus \left\{ i \right\}$ and adding a term $\alpha c _ i ( A_ i )$
on both sides of inequality (\ref{eq::6}), the following holds
\begin{equation}\label{eq::7}
	\alpha \cdot c _ i ( E ) = \alpha\cdot \sum_{ j \in N\setminus \left\{i\right\} } c _ i ( A _ j )
	+ \alpha \cdot c _ i ( A _ i ) \geq \frac{n-1 + 2\alpha}{2}c _ i ( A _i ).
\end{equation}
On the other hand, from Lemma~\ref{obs::3.1}, we know $\textnormal{MMS} _ i (n ,
E) \geq \frac{1}{n} c _ i ( E)$, which combines inequality (\ref{eq::7}) yielding the ratio
\begin{equation*}
	\frac{c _ i ( A _ i )}{ \textnormal{MMS} _ i (n , M)} \leq \frac{2n\alpha}{n - 1 + 2\alpha}.
\end{equation*}
Regarding the lower bound $\frac{2n}{n +1 }$, consider an instance with $n$ agents and a set $E =
\left\{ e_1 , e _ 2 ,..., e_ {2n}\right\}$ of $2n$ chores. Agents have identical cost profile and $
c  _i  ( e _ j ) =\lceil \frac{j}{2} \rceil $ for any $i,j$. It is not hard to verify that for any
$i\in [n]$, $\textnormal{MMS}_i(n, E) = n+1$. Then, consider the allocation $\mathbf{B} = ( B_ 1,
..., B_n)$ with $B _ 1 = \left\{ e _ {2n -1}, e _ {2n}\right\}$ and $B _ i = \{ e _ {i -1}, e _ {2n
- i}\}$ for any $i \geq 2$. Accordingly, we have $ c  _ i( B _ j ) = n $ for any $  i \in [n]$ and $
j \geq 2$. Thus, except for agent 1, no one else will violate the condition of MMS and EFX. As for
agent 1, since $c_1(B_1\setminus\{e_{2n}\}) = c_1(B_1\setminus\{e_{2n-1}\}) = n $, envy can be
eliminated by removing any single chore . Hence, the allocation $\mathbf{B}$ is EFX and its
approximation guarantee for MMS equals to $\frac{c_ 1 ( B _ 1 )}{\textnormal{MMS} _ 1 ( n ,E)} =
\frac{2n}{n + 1}$, as required. 	

Next, for lower bound $\frac{2n\alpha}{ 2\alpha + 2n -3}$, let us consider an instance with $n$
agents and a set $E=\{ e_1,...,e_{2n^2-2n}\}$ of $2n^2-2n$ chores. We focus on agent 1 with cost
function $c _ 1( e _ j ) = 2 \alpha $ for $1\leq j \leq  n $ and $ c  _ 1 ( e _ j ) = 1 $ for $ j
\geq  n +1$. Consider the allocation $\mathbf{B} = (B _ 1, ...,B _ n )$ with $B _ 1 =\{ e _ 1,...,e
_ n \}, B _ 2 =\{e_{n+1},...,e_{3n-2} \}$ and $B_j=\{ e_{3n-1+(j-3)(2n-1)}, \ldots,
e_{3n-2+(j-2)(2n-1)}\}$ for any $j\geq 3$. Accordingly, bundle $B_2$ contains $2n-2$ chores and
$B_j$ contains $2n-1$ chores for any $j\geq 3$. For any agent $i\geq 2$, her cost functions is $ c_
i ( e ) = 0 $ for $ e \in B _ i $ and $c_ i ( e ) = 1 $ for $ e \in E\setminus  B _ i $.
Consequently, except for agent 1, no one else violate the condition of MMS and $\alpha$-EFX. As for
agent 1, his cost on $B_2$ is the smallest over all bundles and $c_ 1(B_1\setminus\{ e _ 1\}) =
2\alpha(n-1) = \alpha c _ 1 (B_ 2 )$, as a result, the allocation $\mathbf{B}$ is $\alpha$-EFX. For
$\textnormal{MMS}_1(n, E)$, it happens that $E$ can be evenly divided into $n$ bundles of the same
cost (for agent 1), so we have $\textnormal{MMS}_1(n, E) = 2\alpha + 2n-3$ implying the ratio
$\frac{c_1(B_1)}{\textnormal{MMS}_1(n, E)} = \frac{2n\alpha}{2\alpha + 2n-3}$, completing the
proof.
\end{proof}

The performance bound in Proposition~\ref{prop::4.5} is almost tight since  $\frac{n\alpha + n - 1}{ n -1
+ \alpha} - \frac{2n\alpha}{2\alpha + 2n - 3} < \frac{n-1}{n-1+\alpha} < 1$. In addition, we
highlight that the upper and lower bounds provided in Proposition~\ref{prop::4.5} are tight in two
interesting cases: (i) $\alpha=1$ and (ii) $n=2$.

On the approximation of EFX and EF1 for PMMS, we have the following propositions.

\begin{proposition}\label{prop::4.7}
When agents have additive cost functions, for any $\alpha \geq 1$, an $\alpha$-\textnormal{EFX}
allocation {is also $\frac{4\alpha}{ 2\alpha + 1}$-\textnormal{PMMS}}, and this guarantee is tight.
\end{proposition}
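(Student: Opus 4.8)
The plan is to fix an arbitrary ordered pair of agents $i,j$ and bound $c_i(A_i)$ against $\textnormal{MMS}_i(2,A_i\cup A_j)$, using the two lower bounds from Lemma~\ref{obs::3.1}, namely $\textnormal{MMS}_i(2,A_i\cup A_j)\ge \tfrac12 c_i(A_i\cup A_j)$ and $\textnormal{MMS}_i(2,A_i\cup A_j)\ge c_i(e)$ for every $e\in A_i\cup A_j$. First I would record that $\tfrac{4\alpha}{2\alpha+1}\ge 1$ whenever $\alpha\ge 1$ (since $4\alpha\ge 2\alpha+1$), so inequalities that only produce a factor $1$ are already good enough. As in the proof of Proposition~\ref{prop::4.5}, I would discard the zero-cost chores of $A_i$ (they affect neither the EFX condition nor $c_i(A_i)$), let $e^{*}\in\arg\min_{e\in A_i}c_i(e)$ be the cheapest remaining chore, and split into the cases $|A_i|=1$ and $|A_i|\ge 2$.

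In the case $|A_i|=1$, the single chore $e^{*}$ satisfies $c_i(A_i)=c_i(e^{*})\le \textnormal{MMS}_i(2,A_i\cup A_j)$ directly by the per-item bound of Lemma~\ref{obs::3.1}, and we are done. The substantive case is $|A_i|\ge 2$, which I expect to be the crux. Here the minimum-cost chore is at most the average, so $c_i(e^{*})\le \tfrac12 c_i(A_i)$. Applying the $\alpha$-EFX condition to the pair $(i,j)$ at $e^{*}$ gives $c_i(A_i)-c_i(e^{*})=c_i(A_i\setminus\{e^{*}\})\le \alpha\,c_i(A_j)$, and combining this with $c_i(e^{*})\le\tfrac12 c_i(A_i)$ yields the key lower bound $\tfrac12 c_i(A_i)\le \alpha\,c_i(A_j)$, i.e.\ $c_i(A_j)\ge \tfrac{1}{2\alpha}c_i(A_i)$. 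Hence $c_i(A_i\cup A_j)=c_i(A_i)+c_i(A_j)\ge \tfrac{2\alpha+1}{2\alpha}c_i(A_i)$, and feeding this into $\textnormal{MMS}_i(2,A_i\cup A_j)\ge \tfrac12 c_i(A_i\cup A_j)$ gives $c_i(A_i)\le \tfrac{4\alpha}{2\alpha+1}\,\textnormal{MMS}_i(2,A_i\cup A_j)$, as required. The only delicate point is squeezing out the lower bound on $c_i(A_j)$ from the interaction of the ``half the bundle'' estimate with EFX; once that is in hand the rest is a one-line substitution.

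For tightness I would exhibit a small instance forcing equality at every inequality above: all three steps ($c_i(e^{*})=\tfrac12 c_i(A_i)$, the EFX inequality, and the even split of $A_i\cup A_j$) must be tight simultaneously. Take $n$ agents with an identical additive cost profile, two ``heavy'' chores of cost $2\alpha$ each and $2(n-1)$ ``light'' chores of cost $1$ each; allocate both heavy chores to agent $1$ and distribute the light chores in pairs among the remaining agents, so $c_1(A_1)=4\alpha$ and $c_i(A_j)=2$ for each $j\ge 2$. Removing either heavy chore leaves cost $2\alpha=\alpha\cdot 2=\alpha\,c_1(A_j)$, so the allocation is exactly $\alpha$-EFX (the light-chore agents satisfy the condition comfortably since $\alpha\ge 1$). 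Finally $A_1\cup A_2=\{2\alpha,2\alpha,1,1\}$ splits evenly into $\{2\alpha,1\},\{2\alpha,1\}$, whence $\textnormal{MMS}_1(2,A_1\cup A_2)=2\alpha+1$ and the ratio $\tfrac{c_1(A_1)}{\textnormal{MMS}_1(2,A_1\cup A_2)}=\tfrac{4\alpha}{2\alpha+1}$; already $n=2$ with these four chores suffices.
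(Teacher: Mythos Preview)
Your proof is correct and essentially identical to the paper's own argument: the same case split on $|A_i|$, the same use of $c_i(e^{*})\le\tfrac12 c_i(A_i)$ combined with the $\alpha$-EFX inequality to get $c_i(A_j)\ge\tfrac{1}{2\alpha}c_i(A_i)$, and the same tightness instance with two chores of cost $2\alpha$ and pairs of unit-cost chores. Your explicit handling of zero-cost chores (discarding them before taking the minimum) is slightly more careful than the paper's presentation, and your remark that $n=2$ already suffices for tightness is a nice simplification.
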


\begin{proof}
We first prove the upper bound. Let $\mathbf{A} = ( A _1, A _ 2, \ldots, A _
n )$ be an $\alpha$-EFX allocation and the approximation guarantee for PMMS is determined by agent $i$. We can assume $c_i(A_i)>0$; otherwise agent $i$ meets the condition of PMMS and we are done. Let $e ^ {*} $ be the chore in $A _ i $ having the minimum
cost for agent $i$, i.e., $ e ^ {*} \in \arg\min_{e \in A _ i } c _ i (e)$. Then, we divide the
proof into two cases.

\emph{Case 1}: $ |A _ i | = 1$. Then chore $ e ^ {*}$ is the unique element in $A _ i $, and thus $
c _ i ( e ^ {*}) = c _ i ( A _ i )$. By Lemma \ref{obs::3.1}, $ c _ i ( e ^
{*}) \leq \textnormal{MMS} _ i ( 2 , A _ i \cup A _ j)$ holds for any $j\in N\setminus \left\{ i
\right\}$. As a result, we have $ c _ i ( A_ i )  \leq \textnormal{MMS} _ i ( 2 , A _ i \cup A _
j), \forall j\in N\setminus \left\{ i \right\}$.

\emph{Case 2}: $ |A _ i | \geq 2$. Since $e ^ {*} \in \arg\min_{e \in A _ i } c _ i (e)$ and $
|A_i| \geq 2$, we have $ c _ i ( e ^ {*}) \leq \frac{1}{2} c _ i ( A_ i )$, and equivalently, $ c _
i ( A _ i \setminus \left\{ e ^ {*}\right\}) = c _ i (  A _i ) - c _ i ( e ^ {*}) \geq \frac{1}{2}
c _ i ( A_ i )$. Then, based on the definition of $\alpha$-EFX allocation, for any $j\in N\setminus
\left\{ i \right\}$, the following holds
\begin{equation}\label{eq::12}
	\alpha \cdot c _ i ( A_ j ) \geq c _ i ( A _i \setminus \left\{ e ^ {*}\right\}) \geq \frac{1}{2}
	\cdot c _ i ( A_ i).
\end{equation}
Combining Lemma \ref{obs::3.1} and Inequality (\ref{eq::12}), for any $j\in
N\setminus \left\{ i \right\}$, we have
\begin{equation*}
	\textnormal{MMS} _ i (2, A _ i \cup A _ j ) \geq \frac{1}{2} (c _ i ( A _ i ) + c _ i ( A_ j ))
	\geq \frac{2\alpha + 1}{4\alpha} c _ i ( A _ i ).
\end{equation*}
Therefore, for any $ j \in N\setminus \left\{ i \right\}$, $ c _ i ( A _ i ) \leq
\frac{4\alpha}{2\alpha + 1} \cdot \textnormal{MMS} _ i ( 2, A _ i \cup A _ j)$ holds, as required. 	

As for the tightness, consider an instance with $n$ agents and a set $E=\{ e_1,\ldots,e_{2n}\}$ of
$2n$ chores. Agents have identical cost profile and for every $i\in [n]$, $ c_i  (  e  _ 1 )  = c _
i ( e _ 2 ) = 2\alpha$ and $ c _ i ( e _ j )  =  1 $ for $3\leq j \leq  2n$. Consider the
allocation $\mathbf{B}=(B_1,\ldots,B_n)$ with $B_i=\{ e_{2i-1}, e_{2i}\}, \forall i \in N$. It is
not hard to verify that, except for agent 1, no one else would violate the condition of EFX and
PMMS. For agent 1, by removing any single chore from his bundle, the remaining cost is $\alpha$
times of the cost on others' bundle. Thus, allocation $\mathbf{B}$ is $\alpha$-EFX. Notice that for
any $j \geq 2$, bundle $B_1\cup B_j$ contains exactly two chores with cost $2\alpha$ and two chores
with cost 1, then $\textnormal{MMS}_1(2, B_1\cup B_j) = 2\alpha + 1$, implying for any $  j \neq
1$, $\frac{c_1(B_1)}{\textnormal{MMS}_1(2, B_1\cup B_j)} = \frac{4\alpha}{2\alpha+1}$.
\end{proof}

\begin{proposition}\label{prop::4.8}
When agents have additive cost functions, for any $\alpha \geq 1$, an $\alpha$-\textnormal{EF1}
allocation is also $\frac{2\alpha + 1}{\alpha+ 1}$-\textnormal{PMMS}, and this guarantee is tight.
\end{proposition}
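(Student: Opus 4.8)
The plan is to fix the pair $(i,j)$ of agents that determines the PMMS guarantee and to bound the single ratio $c_i(A_i)/\textnormal{MMS}_i(2,A_i\cup A_j)$. As in the proofs of Propositions~\ref{prop::4.6} and \ref{prop::4.7}, I would first dispose of the trivial case $c_i(A_i)=0$, and otherwise let $\bar e\in\arg\max_{e\in A_i}c_i(e)$ be a largest-cost chore in $A_i$. Since $\alpha$-EF1 only guarantees the existence of \emph{some} removable chore, and removing the most expensive one can only decrease the residual cost, I get $c_i(A_i)-c_i(\bar e)=c_i(A_i\setminus\{\bar e\})\le\alpha\,c_i(A_j)$. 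Writing $a=c_i(A_i)$, $b=c_i(A_j)$, $x=c_i(\bar e)$ and $M=\textnormal{MMS}_i(2,A_i\cup A_j)$, this is the first relation $a-x\le\alpha b$.

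The key step is then to extract two lower bounds on $M$ from Lemma~\ref{obs::3.1} and combine them so that the unknown $x$ cancels. The averaging bound gives $2M\ge c_i(A_i\cup A_j)=a+b$ (using additivity), and feeding in $b\ge(a-x)/\alpha$ yields $2\alpha M\ge(\alpha+1)a-x$; the single-chore bound gives $M\ge x$. Adding these two inequalities removes $x$ and leaves $(2\alpha+1)M\ge(\alpha+1)a$, i.e.\ $c_i(A_i)\le\frac{2\alpha+1}{\alpha+1}\,\textnormal{MMS}_i(2,A_i\cup A_j)$, which is exactly the claimed guarantee. I expect the only subtlety here to be seeing that the single-chore bound $M\ge c_i(\bar e)$ should be added to the derived bound $2\alpha M\ge(\alpha+1)a-c_i(\bar e)$ so that the $c_i(\bar e)$ terms cancel; everything else is a one-line manipulation, with no case split on $|A_i|$ needed.

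For tightness I would reverse-engineer the three equality conditions (EF1 tight, the averaging bound tight, and $M=c_i(\bar e)$). A clean witnessing instance already appears at $n=2$ with an identical additive profile: take $E=\{g,h,e\}$ with costs $c(g)=\alpha+1$, $c(h)=\alpha$, $c(e)=1$, and the allocation $A_1=\{g,h\}$, $A_2=\{e\}$. Then removing $g$ gives $c(A_1\setminus\{g\})=\alpha=\alpha\,c(A_2)$, so the allocation is $\alpha$-EF1 (agent $2$ is trivially satisfied since $c(A_2)=1\le\alpha\,c(A_1)$), while the best $2$-partition of $A_1\cup A_2$ is $\{g\}$ versus $\{h,e\}$, both of cost $\alpha+1$, so $\textnormal{MMS}_1(2,E)=\alpha+1$ and the ratio $c_1(A_1)/\textnormal{MMS}_1(2,E)=\frac{2\alpha+1}{\alpha+1}$ is attained exactly. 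The main obstacle is thus not the algebra but the construction: one must choose the three costs $\alpha+1,\alpha,1$ so that the heavy chore $g$ alone balances the remaining two chores in the maximin partition, which is precisely what forces the extremal ratio.
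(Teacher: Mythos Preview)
Your upper-bound argument is correct and is in fact cleaner than the paper's. The paper lets $e^*$ be a chore witnessing the EF1 condition and then splits into two cases according to whether $c_i(e^*)>c_i(A_i\cup A_j\setminus\{e^*\})$ or not; in the first case it argues that $\{\{e^*\},A_i\cup A_j\setminus\{e^*\}\}$ actually \emph{defines} $\textnormal{MMS}_i(2,A_i\cup A_j)$ and then manipulates, and in the second case it derives $c_i(A_i)\le(2\alpha+1)c_i(A_j)$ before applying the averaging bound. You instead take $\bar e$ to be a heaviest chore, extract the two Lemma~\ref{obs::3.1} inequalities $2M\ge a+b$ and $M\ge x$, and combine them with $a-x\le\alpha b$ so that the unknown $x$ cancels in one line. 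This is strictly more economical: no case split, and you never need to identify the partition that realizes the maximin share.

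Your tightness construction is the $n=2$ specialization of the paper's example, which for general $n$ adds $n-2$ further agents each receiving a single cost-$1$ chore. Since the bound $\frac{2\alpha+1}{\alpha+1}$ is independent of $n$, your two-agent instance already shows the guarantee cannot be improved; if you want tightness for every fixed $n$ (as the paper establishes), just pad with these extra singleton bundles---none of the new agents violates EF1 or PMMS, and agent~1's ratio is unchanged.
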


\begin{proof}
We first prove the upper bound part. Let $\mathbf{A}= ( A _ 1,\ldots, A _ n)$ be an $\alpha$-EF1
allocation and the approximation guarantee for PMMS is determined by agent $i$. We can assume $c_i(A_i) >0 $; otherwise agent $i$ meets the condition of PMMS and we are done. To study PMMS, we fix another agent $j
\in N\setminus \left\{ i \right\}$, and let $e ^ {*} \in A _ i $ be the chore such that $ c _ i ( A
_ i \setminus \left\{ e ^ {*}\right\}) \leq \alpha \cdot c _ i  ( A _ j )$. We divide our proof
into two cases.

\emph{Case 1}: $ c _ i (e ^ {*}) > c _ i ( A _ i \cup A _ j \setminus \left\{ e ^ {*}\right\})$.
Consider $\left\{ \left\{ e ^ {*} \right\}, A _ i \cup A_j \setminus \left\{ e ^ {*}\right\}
\right\}$, a 2-partition of $A_i\cup A_j$. Since $ c _ i (e ^ {*}) > c _ i ( A _ i \cup A _ j
\setminus \left\{ e ^ {*}\right\})$, we can claim that this partition defining
$\textnormal{MMS}_i(2, A _ i \cup A _ j )$, and accordingly, $\textnormal{MMS}_i(2, A _ i \cup A _
j ) = c _ i ( e ^ {*})$ holds. From Lemma \ref{obs::3.1} and the definition of
$\alpha$-EF1, the following holds
\begin{equation}\label{eq::13}
	c _ i ( e ^ {*}) \geq \frac{1}{2} (c _ i ( A_ i ) +  c _ i ( A _j )) \geq \frac{1}{2} c _ i ( A _i )
	+ \frac{1}{2\alpha} \cdot c _ i ( A _ i \setminus\left\{ e ^ {*}\right\}).
\end{equation}
Then, based on (\ref{eq::13}) and the fact $\textnormal{MMS}_i(2, A _ i \cup A _ j ) = c _ i ( e ^
{*})$, we have
\begin{equation*}
\frac{c _ i ( A _ i )}{\textnormal{MMS}_i(2, A _ i \cup A _ j ) } \leq \frac{2\alpha + 1}{\alpha + 1}.
\end{equation*}

\emph{Case 2}: $ c _ i (e^ {*}) \leq c _ i ( A _ i \cup A _ j \setminus \left\{ e ^ {*}\right\})$.
By the definition of $\alpha$-EF1, we have $ c _ i ( A _ i \setminus \left\{ e ^ {*}\right\}) \leq
\alpha \cdot c _ i ( A_ j)$. As a consequence,
\begin{equation}\label{eq::14}
c _ i ( A_ i ) = c _ i ( e ^ {*}) + c _ i ( A _ i \setminus \left\{ e ^ {*}\right\})
\leq 2 c _ i ( A_ i \setminus \left\{ e ^ {*}\right\}) + c _ i ( A _j )
\leq (2\alpha + 1 ) \cdot c _ i ( A _j ),
\end{equation}
where the first inequality transition is due to $ c _ i (e ^ {*}) \leq c _ i ( A _ i \cup A _ j
\setminus \left\{ e ^ {*}\right\})$. Using Inequality (\ref{eq::14}) and additivity of cost
function, we have $c _ i ( A_ i ) \leq \frac{2\alpha + 1}{2\alpha + 2} \cdot c _ i ( A _ i\cup A _
j )$. By Lemma \ref{obs::3.1}, we have $\textnormal{MMS} _ i ( 2, A_ i \cup A_ j
) \geq \frac{1}{2} c _ i ( A_ i \cup A_ j )$ and then, the following holds,
\[
\frac{c_i ( A_ i )}{\textnormal{MMS} _ i ( 2, A_ i \cup A_ j ) } \leq \frac{2\alpha + 1}{\alpha + 1}.
\]

As for tightness, consider the following instance of $n$ agents and a set $E=\{
e_1,\ldots,e_{n+1}\}$ of $n+1$ chores. Agents have an identical cost profile and for every $i \in
[n]$, $ c_i ( e_1 ) = \alpha + 1, c _i ( e _ 2 ) = \alpha$ and $ c _i  ( e _ j ) = 1 $ for $ j \geq
3$. Then, consider the allocation $\mathbf{B} = (B_1,\ldots,B_n)$ with $B_1=\{e_1,e_2\}$ and $B_j =
\{ e_{j+1}\}, \forall j\geq 2$. It is not hard to verify that allocation $\mathbf{B}$ satisfying
$\alpha$-EF1, and moreover, the guarantee for PMMS is determined by agent 1. Notice that for any
$j\geq 2$, the combined bundle $B_1\cup B_j$ contains three chores with cost $\alpha+1, \alpha, 1$,
respectively. Thus, for any $j \geq 2$, we have $\textnormal{MMS}_1(2, B_1\cup B_j) = \alpha +1$,
implying the ratio $\frac{c_1(B_1)}{\textnormal{MMS}_1(2, B_1\cup B_j)} = \frac{2\alpha +1}{\alpha
+1}$.
\end{proof}

In addition to the approximation guarantee for PMMS, Proposition~\ref{prop::4.8} also has a direct
implication in approximating PMMS algorithmically. It is known that an EF1 allocation can be found
efficiently by allocating chores in a \emph{round-robin} fashion --- each of the agent $1,\ldots, n$ picks her most preferred item in that order, and repeat until all chores are assigned \citep{azizFairAllocationIndivisible2019}. Therefore,
Proposition~\ref{prop::4.8} with $\alpha =1$ leads to the following corollary, which is the only
algorithmic result for PMMS (in chores allocation), to the best of our knowledge.

\begin{corollary}
When agents have additive  cost functions, the round-robin algorithm outputs a
$\frac{3}{2}$-\textnormal{PMMS} allocation in polynomial time.
\end{corollary}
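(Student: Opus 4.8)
The plan is to obtain the corollary as an immediate consequence of Proposition~\ref{prop::4.8} together with the known behaviour of the round-robin procedure. The key observation is that the guarantee ratio $\frac{2\alpha+1}{\alpha+1}$ from Proposition~\ref{prop::4.8} evaluates to exactly $\frac{3}{2}$ when $\alpha=1$, so the only thing I need to supply is a polynomial-time algorithm that produces a genuine (i.e.\ $1$-)EF1 allocation.

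First I would recall, as cited in the text preceding the statement, that when cost functions are additive the round-robin allocation --- in which agents $1,\ldots,n$ take turns, in this fixed order, each picking a most preferred (here, minimum-cost) remaining chore until $E$ is exhausted --- is EF1 \citep{azizFairAllocationIndivisible2019}. I would also observe that this procedure runs in polynomial time: it performs $m$ assignment steps, and each step only requires scanning the acting agent's costs over the remaining chores to select a minimiser, which is $O(m)$ work per step and hence $O(m^2)$ overall.

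I would then apply Proposition~\ref{prop::4.8} with $\alpha=1$ to the EF1 allocation produced above. Since $1$-EF1 is precisely EF1, the proposition guarantees that this allocation is $\frac{2\cdot 1 + 1}{1+1} = \frac{3}{2}$-PMMS, which is exactly the claimed bound, and the whole construction is polynomial as noted. This completes the argument.

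The main obstacle, such as it is, is essentially bookkeeping rather than estimation: one must check that the round-robin guarantee is the \emph{exact} EF1 property, matching $\alpha=1$ in Proposition~\ref{prop::4.8}, and that the cited round-robin result is stated for chores under additive cost functions, which is exactly our setting. All of the quantitative content is carried by Proposition~\ref{prop::4.8}, so no additional inequalities need to be established here.
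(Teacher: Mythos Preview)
Your proposal is correct and follows essentially the same approach as the paper: the corollary is obtained directly by combining the known fact that round-robin produces an EF1 allocation of chores in polynomial time \citep{azizFairAllocationIndivisible2019} with Proposition~\ref{prop::4.8} at $\alpha=1$.
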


\section{Bounds on PMMS and MMS under additive setting}

Note that PMMS implies EFX in goods allocation according to
\citet{caragiannisUnreasonableFairnessMaximum2019}. This implication also holds in allocating
chores as stated in our proposition below.

\begin{proposition}\label{prop::5.1}
When agents have additive cost functions, a \textnormal{PMMS} allocation is also \textnormal{EFX}.
\end{proposition}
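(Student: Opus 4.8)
The plan is to fix an arbitrary pair of agents $i,j\in N$ and an arbitrary chore $e\in A_i$ with $c_i(e)>0$, and to derive the EFX inequality $c_i(A_i\setminus\{e\})\le c_i(A_j)$ directly from the PMMS guarantee applied to this pair. The driving idea is to exhibit one conveniently chosen $2$-partition of $A_i\cup A_j$ and to exploit the fact that $\textnormal{MMS}_i(2,A_i\cup A_j)$ is defined as a \emph{minimum} over all such partitions, so that plugging in any particular partition yields an upper bound on it.

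First I would dispose of the degenerate case $A_i=\{e\}$, where $c_i(A_i\setminus\{e\})=c_i(\emptyset)=0\le c_i(A_j)$ makes EFX trivial; so I may assume $|A_i|\ge 2$ and in particular $A_i\setminus\{e\}\ne\emptyset$. Then I would invoke PMMS to write $c_i(A_i)\le \textnormal{MMS}_i(2,A_i\cup A_j)$, and bound the right-hand side from above by evaluating the objective at the specific partition $\{A_i\setminus\{e\},\,A_j\cup\{e\}\}\in\Pi_2(A_i\cup A_j)$. (One checks this is a genuine $2$-partition of $A_i\cup A_j$, using $e\in A_i$ and the disjointness of $A_i,A_j$.) This gives $\textnormal{MMS}_i(2,A_i\cup A_j)\le \max\{c_i(A_i\setminus\{e\}),\,c_i(A_j\cup\{e\})\}$, and chaining the two estimates yields $c_i(A_i)\le\max\{c_i(A_i\setminus\{e\}),\,c_i(A_j\cup\{e\})\}$.

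The crucial observation -- and the single place where both additivity and the hypothesis $c_i(e)>0$ are used -- is that the first term of the maximum cannot be the larger one: by additivity $c_i(A_i)=c_i(A_i\setminus\{e\})+c_i(e)>c_i(A_i\setminus\{e\})$, so the maximum must be attained at $c_i(A_j\cup\{e\})$. Hence $c_i(A_i)\le c_i(A_j\cup\{e\})=c_i(A_j)+c_i(e)$, and substituting $c_i(A_i)=c_i(A_i\setminus\{e\})+c_i(e)$ and cancelling the common term $c_i(e)$ leaves exactly $c_i(A_i\setminus\{e\})\le c_i(A_j)$, the EFX condition. Since $i,j$ and $e$ were arbitrary, this establishes the claim.

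I do not anticipate any genuine obstacle: the argument is short once the right partition is identified. The only subtlety worth flagging is that the collapse of the case analysis to the favourable term $c_i(A_j\cup\{e\})$ relies essentially on the strictness $c_i(A_i)>c_i(A_i\setminus\{e\})$, which is precisely why EFX is formulated to constrain only removal of positive-cost chores. Additivity is equally indispensable, as it is what permits splitting $c_i(A_i)$ and $c_i(A_j\cup\{e\})$ into the relevant summands; the same reasoning would not carry over to merely submodular costs, consistent with the paper's later contrast between the additive and submodular settings.
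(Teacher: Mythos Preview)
Your proof is correct and takes essentially the same approach as the paper: both use the specific $2$-partition $\{A_i\setminus\{e\},\,A_j\cup\{e\}\}$ together with additivity and the strict positivity $c_i(e)>0$ to derive the EFX inequality from the PMMS guarantee. The only cosmetic difference is that the paper phrases the argument by contradiction whereas you argue directly, and your separate treatment of the case $|A_i|=1$ is harmless but not strictly needed.
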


\begin{proof}
Let $\mathbf{A} = ( A _ 1, \ldots, A _ n )$ be a PMMS allocation. For the sake of contradiction, assume
$\mathbf{A}$ is not EFX and agent $i$ violates the condition of EFX, which implies $c_i(A_i) > 0$.

As agent $i$ violates the condition of EFX, there must exist an agent $ j \in N$ and $e ^ {*} \in A
_ i $ with $ c _ i ( e ^ {*}) > 0 $ such that $ c _ i ( A _ i \setminus \left\{ e ^ {*}\right\}) >
c _ i ( A _ j )$. Note chore $e ^ {*}$ is well-defined owing to $c_i(A_i) > 0$. Now, consider the
$2$-partition $\left\{ A _ i \setminus \left\{ e ^ {*} \right\}, A _ j \cup \left\{e ^ {*} \right\}
\right\} \in \Pi_2(A _ i \cup A _ j )$. By $c _ i ( A _ i \setminus \left\{e^{*}\right\}) > c _ i (
A _ j )$, the following holds:
\begin{equation}\label{eq::16}
	\begin{aligned}
		c _ i ( A _i ) &> \max \left\{ c _ i ( A _ i \setminus \left\{ e ^ {*}\right\}),
		c _ i ( A _ j \cup \left\{ e ^{*}\right\})  \right\} \\
		&\geq \min \limits_{\mathbf{B} \in \Pi _ 2 ( A_ i \cup A _ j )} \max
		\left\{ c _ i ( B _ 1), c _ i ( B _ 2)\right\} \geq c _ i ( A _i ),
	\end{aligned}
\end{equation}
where the last transition is by the definition of PMMS. Inequality (\ref{eq::16}) is a
contradiction, and therefore, $\mathbf{A}$ must be an EFX allocation.
\end{proof}

Since EFX implies EF1, Proposition~\ref{prop::5.1} directly leads to the following result.

\begin{proposition}\label{coro::5.1}
When agents have additive cost functions, a \textnormal{PMMS} allocation is also \textnormal{EF1}.
\end{proposition}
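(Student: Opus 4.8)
The plan is straightforward: I will chain together two results already established in the excerpt. Proposition~\ref{prop::5.1} tells us that under additive cost functions, a PMMS allocation is also an EFX allocation. Proposition~\ref{prop::4.3} (the positive direction, with $\alpha=1$) tells us that an EFX allocation is also an EF1 allocation. Composing these two implications immediately yields that a PMMS allocation is EF1.

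Concretely, I would begin by letting $\mathbf{A} = (A_1,\ldots,A_n)$ be an arbitrary PMMS allocation under additive cost functions. By Proposition~\ref{prop::5.1}, $\mathbf{A}$ is EFX. Then, invoking the positive part of Proposition~\ref{prop::4.3} with $\alpha = 1$ (an $\alpha$-EFX allocation is $\alpha$-EF1), $\mathbf{A}$ is EF1. This completes the argument in essentially one line.

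If one prefers a self-contained derivation not routed through EFX, the alternative is to unwind the definitions directly: fix agents $i,j\in N$; if $c_i(A_i)=0$ the EF1 condition holds trivially (remove any chore, or note $c_i(A_i\setminus\{e\})\le c_i(A_i)=0\le c_i(A_j)$), so assume $c_i(A_i)>0$ and pick any $e\in A_i$ with $c_i(e)>0$. From the PMMS condition applied to the $2$-partition $\{A_i\setminus\{e\},\, A_j\cup\{e\}\}$ of $A_i\cup A_j$, one can show $c_i(A_i\setminus\{e\})\le c_i(A_j)$, giving EF1. However, this essentially reproves Proposition~\ref{prop::5.1}, so the clean route is simply to cite the two prior propositions.

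There is no real obstacle here, which is precisely why the statement is flagged as a corollary in the paper's structure (the \texttt{corollary} environment): the content is fully contained in the combination of Propositions~\ref{prop::5.1} and~\ref{prop::4.3}. The only point requiring any care is the implicit transitivity of the exact-fairness implications, and verifying that Proposition~\ref{prop::4.3} is being used with $\alpha=1$ rather than a general approximation factor; both are immediate.
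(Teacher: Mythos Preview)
Your proposal is correct and matches the paper's approach exactly: the paper simply notes that since EFX implies EF1, Proposition~\ref{prop::5.1} directly yields the result. The only minor inaccuracy is that the statement is actually typeset as a \texttt{proposition} (not a \texttt{corollary}) despite its label, but this is cosmetic and irrelevant to the mathematics.
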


For approximate version of PMMS, when allocating goods it is shown in
\citet{amanatidisComparingApproximateRelaxations2018} that for any $\alpha$, $\alpha$-PMMS can imply
$\frac{\alpha}{2-\alpha}$-EF1. However, in the case of chores, our results indicate that
$\alpha$-PMMS has no bounded guarantee for EF1.

\begin{proposition}\label{prop::5.2}
When agents have additive cost functions, an $\alpha$-\textnormal{PMMS} allocation with
$1<\alpha\leq 2$ is not necessarily $\beta$-\textnormal{EF1} for any $\beta \geq 1$.
\end{proposition}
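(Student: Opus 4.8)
The plan is to refute $\beta$-EF1 outright by exhibiting a single instance together with an allocation that is $\alpha$-PMMS yet violates EF1 by an unbounded margin. The guiding intuition comes from Lemma~\ref{lemma::3.2}: in chores allocation, dumping all chores onto one agent never breaks PMMS too badly (it is always $2$-PMMS), whereas it can destroy EF1 completely once the ``envied'' agent is assigned a bundle of zero cost while the overloaded agent still holds at least two positive-cost chores after any single removal. This asymmetry — sharply in contrast with goods, where an empty bundle would also blow up PMMS — is exactly what makes the lower bound $\infty$ plausible.

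Concretely, I would fix $\alpha \in (1,2]$ and take just two chores $e_1, e_2$ with $c_i(e_1)=1$ and $c_i(e_2)=\alpha-1$ for every agent $i$, and consider the allocation $A_1=\{e_1,e_2\}$ with $A_j=\emptyset$ for all $j\ge 2$. The verification then splits into two short checks. For the PMMS side, the only non-trivial pair is agent $1$ against any $j\ge 2$, where $A_1\cup A_j=\{e_1,e_2\}$. Since $\alpha\le 2$ gives $c_1(e_2)=\alpha-1\le 1=c_1(e_1)$, the optimal $2$-partition separates the two chores, so $\textnormal{MMS}_1(2,A_1\cup A_j)=\max\{1,\alpha-1\}=1$, whence $c_1(A_1)/\textnormal{MMS}_1(2,A_1\cup A_j)=\alpha$. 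Every other agent holds an empty bundle of zero cost and so satisfies PMMS trivially; hence the allocation is exactly $\alpha$-PMMS.

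For the EF1 side, $c_1(A_j)=0$ for each $j\ge 2$, while removing either chore from $A_1$ leaves cost at least $\alpha-1>0$ (this is where $\alpha>1$ enters); thus no finite $\beta$ can satisfy $c_1(A_1\setminus\{e\})\le \beta\,c_1(A_j)$, and the allocation fails $\beta$-EF1 for every $\beta\ge 1$. Note the construction works verbatim for any $n\ge 2$ because the extra agents simply receive empty bundles.

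The calculations are routine; the only point worth flagging is that the interval $(1,2]$ is precisely the range in which both checks succeed, so there is no genuine ``hard step,'' only a matched pair of tight constraints. The lower endpoint $\alpha>1$ is needed so that the residual chore $e_2$ has strictly positive cost, which is what drives the EF1 ratio to infinity against a zero-cost bundle (for $\alpha=1$ the bound is $1$ by Proposition~\ref{prop::5.1} and Corollary~\ref{coro::5.1}). The upper endpoint $\alpha\le 2$ ensures $c_1(e_1)\ge c_1(e_2)$, so that $\textnormal{MMS}_1$ equals $c_1(e_1)$ rather than $\tfrac12 c_1(A_1)$, which keeps the PMMS ratio exactly at the target $\alpha$. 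If a reviewer objected that the instance is degenerate, one could replace the empty bundles by singletons that agent~$1$ values at $0$ without changing any of the estimates.
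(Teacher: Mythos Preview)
Your proof is correct. The construction with two chores $e_1,e_2$ of costs $1$ and $\alpha-1$, all assigned to agent~1, cleanly yields $\textnormal{MMS}_1(2,A_1\cup A_j)=1$ and hence exactly $\alpha$-PMMS, while the zero-cost bundles of the other agents force an infinite EF1 ratio whenever $\alpha>1$.

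The paper takes a slightly different route: it uses $n+1$ chores, giving agent~1 two large chores of costs $\tfrac{1}{\alpha-1}$ and $1$, and each remaining agent a singleton of cost $\epsilon>0$, then lets $\epsilon\to 0$ so that the EF1 ratio $1/\epsilon$ diverges. Your version is more economical --- only two chores, no limiting argument --- because you exploit empty bundles directly rather than approximating them. The paper's variant avoids empty bundles (and indeed keeps every $c_i(A_j)$ strictly positive), which some readers prefer; your final remark that one can pad the empty bundles with zero-cost singletons recovers that aesthetic without touching the estimates. Either way, the mechanism is the same: in chores allocation the ``overloaded'' agent still satisfies $\alpha$-PMMS (Lemma~\ref{lemma::3.2} makes this inevitable for $\alpha\le 2$), while EF1 collapses once the comparison bundle has vanishing cost.
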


\begin{proof}
It suffices to show an $\alpha$-PMMS allocation with $\alpha \in (1,2)$ can not have a bounded
guarantee for the notion of EF1. Consider an instance with $n$ agents and $n+1$ chores $e_1$
\ldots, $e_{n+1}$. Agents have identical cost profile and for any $i$, we let $c_i(e_1) =
\frac{1}{\alpha -1}$, $c_i(e_2) = 1$ and $c_i(e_j) = \epsilon$ for $ 3\leq j \leq n+1$ where
$\epsilon$ takes arbitrarily small positive value. Then, consider an allocation $\mathbf{B} =
(B_1,\ldots,B_n)$ with $B_1= \{ e_1,e_2\}$ and $B_j=\{ e_{j+1}\}$ for $ 2 \leq j \leq n$.
Consequently, except for agent 1, other agents violate neither EF1 nor $\alpha$-PMMS. As for agent
1, notice that $\frac{1}{\alpha-1} > 1+\epsilon$ and thus, for any $j\geq 2$, the combined bundle
$B_1\cup B_j$ admits $\textnormal{MMS}_1(2, B_1\cup B_j) = \frac{1}{\alpha -1}$ implying
$\frac{c_1(B_1)}{\textnormal{MMS}_1(2, B_1\cup B_j) } = \alpha$. Thus, allocation $\mathbf{B}$ is
$\alpha$-PMMS. For the guarantee on EF1, as $c_1(B_j) = \epsilon$ for any $j\geq 2$, then removing
the chore with the largest cost from $B_2$ still yields the ratio
$\frac{c_1(B_1\setminus\{e_1\})}{c_1(B_j)} = \frac{1}{\epsilon} \rightarrow \infty$ as $\epsilon
\rightarrow 0$.
\end{proof}

Since for any $\alpha \geq 1$, $\alpha$-EFX is stricter than $\alpha$-EF1, the impossibility result
on EF1 in Proposition~\ref{prop::5.2} is also true for EFX.

\begin{proposition}\label{prop::5.3}
When agents have additive cost functions, an $\alpha$-\textnormal{PMMS} allocation
with $1 < \alpha \leq 2$ is not necessarily a $\beta$-\textnormal{EFX} allocation for any $\beta
\geq 1$.
\end{proposition}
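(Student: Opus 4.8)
The plan is to reuse the very construction from the proof of Proposition~\ref{prop::5.2}, since $\alpha$-EFX is a strengthening of $\alpha$-EF1 (Proposition~\ref{prop::4.3}): any $\alpha$-EFX allocation is $\alpha$-EF1. Hence if an allocation fails to be $\beta$-EF1 for every $\beta\ge 1$, it certainly fails to be $\beta$-EFX for every $\beta\ge 1$ as well, because a $\beta$-EFX allocation would be $\beta$-EF1. So the impossibility for EF1 automatically transfers to EFX, provided we can exhibit an instance whose allocation is $\alpha$-PMMS but has unbounded EFX-approximation. The instance from Proposition~\ref{prop::5.2} is already a candidate, so the main task is to verify that the same allocation $\mathbf{B}$ is not $\beta$-EFX for any $\beta$.

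Concretely, I would take the instance with $n$ agents and $n+1$ chores where, for each agent $i$, $c_i(e_1)=\tfrac{1}{\alpha-1}$, $c_i(e_2)=1$, and $c_i(e_j)=\epsilon$ for $3\le j\le n+1$, together with the allocation $B_1=\{e_1,e_2\}$ and $B_j=\{e_{j+1}\}$ for $2\le j\le n$. By Proposition~\ref{prop::5.2} this allocation is already established to be $\alpha$-PMMS for $\alpha\in(1,2)$, so I would simply cite that fact rather than re-derive it. Only agent~1 is relevant, and I would then compute the EFX-approximation for agent~1. The point of EFX (as opposed to EF1) is that envy must be removable by deleting \emph{any} positive-cost chore, not just the largest; so I would remove the \emph{smaller} positive-cost chore from $B_1$, namely $e_2$, leaving the bundle $\{e_1\}$ whose cost is $\tfrac{1}{\alpha-1}$.

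The key computation is then the ratio
\[
\frac{c_1(B_1\setminus\{e_2\})}{c_1(B_j)}=\frac{c_1(e_1)}{\epsilon}=\frac{1}{\epsilon(\alpha-1)},
\]
which tends to infinity as $\epsilon\to 0$ for any fixed $\alpha\in(1,2)$. Since EFX requires the bundle-minus-any-positive-chore to be within a $\beta$ factor of every other bundle, this single violation certifies that $\mathbf{B}$ is not $\beta$-EFX for any fixed $\beta$. I should double-check that $e_2$ indeed has positive cost (it does, $c_1(e_2)=1>0$), so it is a legitimate chore to remove under the EFX definition, and that agents $j\ge 2$ create no obstruction — they already satisfy both $\alpha$-PMMS and exact EFX as noted for EF1 in the earlier proof.

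There is no real obstacle here; the result is essentially a corollary, and indeed the remark immediately preceding the statement already flags this reduction. The only subtlety worth stating carefully is that the transfer from EF1 to EFX goes in the correct logical direction: we are using that EFX is \emph{stronger}, so a failure of boundedness for the weaker notion (EF1) would normally not imply failure for the stronger one, but here the witnessing allocation violates EFX directly through the smaller chore $e_2$, giving the unbounded ratio independently. I would phrase the proof so as to make this explicit rather than appealing to a nonexistent monotonicity of impossibility under strengthening.
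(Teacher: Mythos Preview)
Your proof is correct and takes essentially the same approach as the paper, which simply notes that since $\beta$-EFX implies $\beta$-EF1 (Proposition~\ref{prop::4.3}), the impossibility in Proposition~\ref{prop::5.2} transfers immediately. One remark: the worry in your last paragraph is misplaced --- the contrapositive you stated correctly in your first paragraph (if an allocation fails $\beta$-EF1 for every $\beta$, it fails $\beta$-EFX for every $\beta$) \emph{is} the right logical direction, so the direct EFX computation you give, while correct, is not needed.
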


We now study the approximation guarantee of PMMS for MMS. Since these two notions coincide when
{there are} only two agents, we consider the situation where $n\geq 3$. We first provide a tight
bound for $n=3$ and then give an almost tight bound for general $n$.

\begin{proposition}\label{prop:my-add1}
When agents have additive cost functions, for $n=3$, a \textnormal{PMMS} allocation is also
$\frac{4}{3}$-\textnormal{MMS}, and moreover, this bound is tight.
\end{proposition}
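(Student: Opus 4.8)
The plan is to fix the agent whose approximation ratio is worst, call her agent $i$, and to bound $c_i(A_i)$ against $M:=\textnormal{MMS}_i(3,E)$; it suffices to prove $c_i(A_i)\le \tfrac43 M$. I would fix an optimal $3$-partition $T_1,T_2,T_3$ of $E$ for agent $i$, so that $c_i(T_k)\le M$ for each $k$; by Lemma~\ref{obs::3.1} every single chore costs at most $M$, and summing gives $c_i(E)\le 3M$. The only facts I will use about the allocation are the two \textnormal{PMMS} inequalities $c_i(A_i)\le \textnormal{MMS}_i(2,A_i\cup A_j)$ for the two indices $j\ne i$. If $c_i(A_i)\le M$ we are already done, so throughout I assume $c_i(A_i)>M$.

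The engine of the proof is to upper bound each pairwise maximin share $\textnormal{MMS}_i(2,A_i\cup A_j)$ by exhibiting explicit $2$-partitions of $A_i\cup A_j$ and then invoking \textnormal{PMMS}. First I would restrict the fixed $3$-partition to $A_i\cup A_j$: the three pieces $T_k\cap(A_i\cup A_j)$ each cost at most $M$ and sum to $c_i(A_i)+c_i(A_j)$, so merging the two cheapest yields $\textnormal{MMS}_i(2,A_i\cup A_j)\le \max\{M,\ \tfrac23(c_i(A_i)+c_i(A_j))\}$; since $c_i(A_i)>M$, \textnormal{PMMS} forces the second term and gives $c_i(A_i)\le 2c_i(A_j)$. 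Second, a balanced greedy $2$-partition gives $\textnormal{MMS}_i(2,A_i\cup A_j)\le \tfrac12\big(c_i(A_i)+c_i(A_j)+g\big)$, where $g$ is the largest chore cost, so $c_i(A_i)\le c_i(A_j)+g$ with $g\le M$. Applying $c_i(A_i)\le 2c_i(A_j)$ to both $j\ne i$ and using $\sum_{j\ne i}c_i(A_j)=c_i(E)-c_i(A_i)\le 3M-c_i(A_i)$ already delivers a bound, but only $c_i(A_i)\le \tfrac32 M$ (the cruder $g\le M$ version only gives $\tfrac53 M$).

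The hard part is squeezing the constant from $\tfrac32$ down to the tight $\tfrac43$, and I expect this to require more than summing the pairwise inequalities. The extra leverage is the per-bundle identity $\sum_{\ell} c_i(T_k\cap A_\ell)=c_i(T_k)\le M$ for each $k$ (the two other agents share, inside each $T_k$, exactly what agent $i$ does not take), which couples the two pairs through agent $i$'s own pieces $a_k:=c_i(A_i\cap T_k)$. I would combine this with a short case analysis on the largest chore $g$: when $g\le \tfrac23 M$ the balanced-partition bound $c_i(A_i)\le c_i(A_j)+g$ is already strong enough, whereas when $g>\tfrac23 M$ that chore almost fills an entire $T_k$, so the remainder of that bundle is tiny and the $2$-partition that isolates this chore beats $c_i(A_i)$ unless the masses are arranged exactly as in the extremal configuration. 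Carrying out this bookkeeping with \emph{item-level} rather than merely piece-level $2$-partitions is where $\tfrac43$ emerges, and is the main obstacle.

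Finally, for tightness I would use identical additive agents on $E=\{e_1,\dots,e_6\}$ with $c(e_1)=c(e_2)=c(e_3)=2$ and $c(e_4)=c(e_5)=c(e_6)=1$, so that $\textnormal{MMS}_i(3,E)=3$ via $\{e_1,e_4\},\{e_2,e_5\},\{e_3,e_6\}$, together with the allocation $A_1=\{e_1,e_2\}$, $A_2=\{e_3\}$, $A_3=\{e_4,e_5,e_6\}$. Here $c_1(A_1)=4$, while $A_1\cup A_2$ (three chores of cost $2$) and $A_1\cup A_3$ (total cost $7$, hence not splittable into two bundles each of cost at most $3$) both have pairwise maximin share exactly $4$; one checks analogously that agents $2$ and $3$ satisfy their \textnormal{PMMS} constraints, so the allocation is \textnormal{PMMS}. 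This yields the ratio $\tfrac{c_1(A_1)}{\textnormal{MMS}_1(3,E)}=\tfrac{4}{3}$, matching the upper bound.
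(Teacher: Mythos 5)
Your tightness example is correct and is in fact the same instance the paper uses (three identical agents, chores of cost $2,2,2,1,1,1$, allocation $\{e_1,e_2\},\{e_3\},\{e_4,e_5,e_6\}$, ratio $\tfrac43$). Your preliminary bounds are also sound: restricting the MMS-defining $3$-partition to $A_i\cup A_j$ and merging the two cheapest pieces does give $c_i(A_i)\le 2c_i(A_j)$ once $c_i(A_i)>M$, the greedy balanced $2$-partition does give $c_i(A_i)\le c_i(A_j)+g$, and summing these yields $c_i(A_i)\le\tfrac32 M$, exactly as you say.

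But the proof of the upper bound is not complete, and the gap is the entire substance of the proposition. You acknowledge that getting from $\tfrac32$ to $\tfrac43$ ``is the main obstacle,'' and the case split you sketch does not close it: in the branch $g\le\tfrac23 M$, combining $c_i(A_i)\le c_i(A_j)+g$ over both $j\ne i$ with $c_i(E)\le 3M$ gives only $c_i(A_i)\le M+\tfrac23 g\le\tfrac{13}{9}M$, which exceeds $\tfrac43 M$; the simple summation argument only suffices when $g\le\tfrac12 M$, so your threshold is wrong and the branch fails as stated. The other branch ($g$ large) is left entirely vague (``unless the masses are arranged exactly as in the extremal configuration''). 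The paper's proof shows why this cannot be waved away: after establishing that $A_i$ is agent $i$'s most costly bundle, it runs a delicate contradiction argument organized by the cardinality of $A_i$ ($|A_i|=1$ is impossible by Lemma~\ref{obs::3.1}; $|A_i|\ge 3$ forces every chore in $A_i$ to exceed $\tfrac16 c_i(E)$ and derives a contradiction; $|A_i|=2$ requires decomposing $A_2$ and $A_3$ via extremal subsets $S_2^*,S_3^*$, introducing slack variables $\Delta_1,\Delta_2$, and checking how the three large chores can be distributed among the bundles of the MMS-defining partition). That item-level bookkeeping — precisely what you defer — is where $\tfrac43$ actually comes from, so the proposal as written establishes only a $\tfrac32$-MMS guarantee plus the matching lower bound.
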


\begin{proof}
See Appendix \ref{proof of prop:my-add1}.
\end{proof}

For general $n$, we use the connections between PMMS, EFX and MMS to find the approximation
guarantee of PMMS for MMS. According to Proposition~\ref{prop::5.1}, a PMMS allocation is also EFX,
and by Proposition~\ref{prop::4.5}, EFX implies $\frac{2n}{n+1}$-MMS. As a result, we can claim
that PMMS also implies $\frac{2n}{n+1}$-MMS. With the following proposition we show that this
guarantee is almost tight.

\begin{proposition}\label{prop:my-add2}
When agents have additive cost functions, for $n\geq 4$, a \textnormal{PMMS} allocation is
$\frac{2n}{n+1}$-\textnormal{MMS} but not necessarily $\beta$-\textnormal{MMS} for any $\beta < \frac{2n+2}{n+3}$.
\end{proposition}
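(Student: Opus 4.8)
The plan is to prove the two halves separately: the $\frac{2n}{n+1}$-MMS upper bound is immediate from results already in hand, whereas the lower bound calls for an explicit instance together with a PMMS allocation whose MMS-ratio is pushed up to $\frac{2n+2}{n+3}$. For the upper bound I would simply chain Proposition~\ref{prop::5.1} with Proposition~\ref{prop::4.5}: a PMMS allocation is EFX, and an EFX (i.e.\ $\alpha=1$) allocation is $\min\{\frac{2n}{n+1},2-\frac1n\}$-MMS, which equals $\frac{2n}{n+1}$ since $\frac{2n}{n+1}<\frac{2n-1}{n}$. This is exactly the observation recorded in the paragraph preceding the statement, so nothing new is needed there.

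For the lower bound I would build an instance in which only agent~$1$'s cost function is active, giving every other agent cost $0$ on its own assigned bundle so that its PMMS constraints hold vacuously; the entire problem then reduces to a single cost function $c_1$. I would use $n$ \emph{large} chores of common cost $a$ together with a family of \emph{small} chores of cost $\sigma=\frac{2a}{n+1}$, and take the allocation $A_1=\{\text{two large chores}\}$ (so $c_1(A_1)=2a$), one large chore in each of $A_2,\dots,A_{n-1}$, and all of the small chores in $A_n$. Two things must be checked. First, PMMS: for the pair $(1,j)$ the condition reduces (via $A_1=\{a,a\}$) to $\mathrm{MMS}_1(2,A_j)\ge a$, which is automatic when $A_j$ is a single large chore, and for $A_n$ becomes the requirement that $A_n$ cannot be $2$-split into two parts each below $a$. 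Second, the global maximin calculation: one wants the optimal $n$-partition to pair each large chore with exactly $\sigma$ worth of small chores, so that
\[
\mathrm{MMS}_1(n,E)=a+\sigma=\frac{(n+3)a}{n+1},\qquad \frac{c_1(A_1)}{\mathrm{MMS}_1(n,E)}=\frac{2a}{a+\sigma}=\frac{2(n+1)}{n+3}=\frac{2n+2}{n+3}.
\]

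The delicate and genuinely hardest point is reconciling these two demands on $A_n$. Non-splittability below $a$ forces $c_1(A_n)$ to be essentially $2a-\sigma=n\sigma$ and its chores to be ``lumpy'' (no subset summing into the open interval $(a-\sigma,a)$), while the maximin partition needs the very same chores to fan out into the $n$ bundles in equal $\sigma$-sized shares. These clash on a parity knife-edge: for \emph{odd} $n$ the uniform choice of $n$ small chores of cost $\sigma$ satisfies both, since its subset sums are the multiples $k\sigma$ and none lands in $(a-\sigma,a)=\bigl(\tfrac{n-1}{2}\sigma,\tfrac{n+1}{2}\sigma\bigr)$; but for \emph{even} $n$ the would-be even $\sigma$-partition itself exhibits the forbidden sub-bundle of cost $\tfrac n2\sigma=\tfrac{na}{n+1}\in(a-\sigma,a)$, certifying splittability and destroying PMMS. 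The main effort of the proof therefore goes into the even case: one must engineer a small-chore collection (with unequal sizes, or a bundle whose maximin share is attained without an exactly even split) that simultaneously blocks every balanced $2$-partition of each $A_1\cup A_j$ and still admits an $n$-partition of value $a+\sigma$, so that the ratio reaches — or in the limit approaches — $\frac{2n+2}{n+3}$. I expect this calibration, rather than either the PMMS verification or the final one-line ratio computation, to be the crux of the argument.
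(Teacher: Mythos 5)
You follow essentially the same route as the paper: the upper bound by chaining Propositions~\ref{prop::5.1} and \ref{prop::4.5}, and the identical lower-bound instance --- with your parameters $\sigma=1$, $a=\frac{n+1}{2}$, your construction is exactly the paper's ($n$ chores of cost $\frac{n+1}{2}$, $n$ unit chores, $A_1$ holding two large chores, $A_2,\dots,A_{n-1}$ one large chore each, $A_n$ all unit chores, and every other agent given zero cost on its own bundle so its constraints are vacuous). Your PMMS verification and the computation $\textnormal{MMS}_1(n,E)=a+\sigma=\frac{n+3}{2}$, giving ratio $\frac{2n+2}{n+3}$, coincide with the paper's, so for odd $n$ your argument is complete and matches the paper.

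The one divergence is the parity issue, and there you are ahead of the paper rather than behind it: the paper's proof literally begins ``consider an instance with $n$ (odd) agents'' and never addresses even $n$, although the proposition is asserted for all $n\geq 4$. Your diagnosis of why evenness breaks the construction is correct: the balanced split of $A_1\cup A_n$ (one large chore plus $n/2$ unit chores on each side) has maximum cost $n+\frac{1}{2}<n+1=c_1(A_1)$, so PMMS fails. Moreover, the repair you defer cannot be carried out within this template. If $\textnormal{MMS}_1(n,E)$ is to equal $a+\sigma$, every chore of $A_n$ must cost at most $\sigma$ and $c_1(A_n)\leq n\sigma$; if additionally no subset of $A_n$ may have cost in the open interval $(c_1(A_n)-a,\,a)$ (which is what PMMS for the pair $(1,n)$ amounts to), then taking a maximum-cost subset $S^*$ with $c_1(S^*)\leq c_1(A_n)-a$ forces every chore outside $S^*$ to cost at least $a-c_1(S^*)\geq 2a-c_1(A_n)\geq\sigma$, hence exactly $\sigma$, which in turn forces $c_1(A_n)=n\sigma$ and $c_1(S^*)=a-\sigma=\frac{n-1}{2}\sigma$; the chores outside $S^*$ then total $\frac{n+1}{2}\sigma$, an impossible multiple of $\sigma$ when $n$ is even. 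So the even case would require a structurally different instance (or a different bound), and the paper supplies neither: treat this as a gap in the paper itself, not as a known trick you failed to find.
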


\begin{proof}
The positive part directly follows from Propositions~\ref{prop::5.1} and \ref{prop::4.5}. As for
the lower bound, consider an instance with $n$ (odd) agents and a set $E = \{ e _1,\ldots,e
_{2n}\}$ of $2n$ chores. We focus on agent 1 and his cost function is $c_1 ( e _ j ) =
\frac{n+1}{2}$ for $   1 \leq j \leq n$ and $ c _ 1 ( e_ j ) = 1 $ for $  n +  1 \leq  j \leq 2n$.
Consider the allocation $\mathbf{B} = ( B_1,\ldots,B_n)$ with $B_1 = \{ e_1, e_2\}$, $B_n= \{
e_{n+1},\ldots,e_{2n}\}$ and $ B_j =\{ e_{j+1} \}$ for any $j =2,\ldots,n-1$. For agents $i \geq
2$, her cost function is $ c_  i ( e )  = 0 $ for any $ e \in B _ i $ and $ c_ i (e) = 1 $ for any
$e \in E\setminus B  _ i $, and thus agent $ i $ has zero cost under allocation $\mathbf{B}$. As a
result, except for agent 1, other agents violate neither MMS nor PMMS. For agent 1, we have
$c_1(B_1) \leq \textnormal{MMS}_1(2, B_1\cup B_j)$ holds for any $ j \geq 2$, which implies
allocation $\mathbf{B}$ is PMMS. For $\textnormal{MMS}_1(n, E)$, it happens that $E$ can be evenly
divided into $n$ bundles of the same cost (for agent 1), so we have $\textnormal{MMS}_1(n,E) =
\frac{n+3}{2}$ yielding the ratio $\frac{c_1(B_1)}{\textnormal{MMS}_1(n,E) } = \frac{2n+2}{n+3}$.
\end{proof}

Next, we investigate the approximation guarantee of approximate PMMS for MMS. Let us start with an
example of six chores $E = \{ e_1,\ldots,e_6\}$ and three agents. We focus on agent 1 and the cost
function of agent 1 is $c_1(e_j)=1$ for $j=1,2,3$ and $c_1(e_j) = 0$ for $j = 4,5,6$, thus clearly,
$\textnormal{MMS}_1(3, E) = 1$. Consider an allocation $\mathbf{A}=(A_1,A_2,A_3)$ with
$A_1=\{e_1,e_2,e_3\}$. It is not hard to verify that allocation $\mathbf{A}$ is a
$\frac{3}{2}$-PMMS allocation and also a 3-MMS allocation. Combining the result in
Lemma~\ref{lemma::3.2}, we observe that allocation $\mathbf{A}$ only has a trivial guarantee on the
notion of MMS. {Motivated} by this example, we focus on $\alpha$-PMMS allocations with $\alpha <
\frac{3}{2}$.

\begin{proposition}\label{prop:4.9}
When agents have additive cost functions, for $n\geq 3$ and $1< \alpha < \frac{3}{2}$, an
$\alpha$-\textnormal{PMMS} allocation is $\frac{n\alpha}{ \alpha +
(n-1)(1-\frac{\alpha}{2})}$-\textnormal{MMS}, but not necessarily $(\frac{n\alpha}{\alpha +
(n-1)(2-\alpha)}-\epsilon)$-\textnormal{MMS} for any $\epsilon > 0$.
\end{proposition}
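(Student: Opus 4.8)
The plan is to handle the upper and lower bounds separately, using for the upper bound the same two-case structure as in Proposition~\ref{prop::4.5} (the EFX-to-MMS argument), and for the lower bound an explicit instance of the kind used throughout Section~4, tailored to make $E$ split evenly into $n$ equal-cost bundles.

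For the upper bound, I would fix the agent $i$ that determines the guarantee and assume $c_i(A_i)>0$ and that every chore in $A_i$ has positive cost (zero-cost chores affect neither the PMMS nor the MMS ratio). Let $e^{*}\in\arg\min_{e\in A_i}c_i(e)$. If $|A_i|=1$ then $c_i(A_i)=c_i(e^{*})\le\textnormal{MMS}_i(n,E)$ by Lemma~\ref{obs::3.1}, and we are done. The substantive case is $|A_i|\ge 2$, where $c_i(e^{*})\le\tfrac12 c_i(A_i)$ so that $c_i(A_i\setminus\{e^{*}\})\ge\tfrac12 c_i(A_i)$. For each $j\ne i$, the $\alpha$-PMMS condition gives $c_i(A_i)\le\alpha\cdot\textnormal{MMS}_i(2,A_i\cup A_j)\le\tfrac{\alpha}{2}\,c_i(A_i\cup A_j)$, hence $c_i(A_j)\ge\bigl(\tfrac{2}{\alpha}-1\bigr)c_i(A_i)=\tfrac{2-\alpha}{\alpha}c_i(A_i)$. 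Summing this over the $n-1$ agents $j\ne i$, adding $c_i(A_i)$, and invoking additivity together with $\textnormal{MMS}_i(n,E)\ge\tfrac{1}{n}c_i(E)$ from Lemma~\ref{obs::3.1} should yield
\[
n\cdot\textnormal{MMS}_i(n,E)\ge c_i(E)\ge\Bigl(1+(n-1)\tfrac{2-\alpha}{\alpha}\Bigr)c_i(A_i),
\]
which rearranges to exactly $c_i(A_i)\le\frac{n\alpha}{\alpha+(n-1)(2-\alpha)}\,\textnormal{MMS}_i(n,E)$. I note that $\alpha<\tfrac32$ guarantees $2-\alpha>\tfrac12>0$, so the bound is finite and the denominator is positive; this is where the hypothesis $\alpha<\tfrac32$ enters. (The figure records the matching \emph{upper} bound as $\frac{n\alpha}{\alpha+(n-1)(1-\alpha/2)}$, so I would expect to sharpen the per-agent estimate — the main obstacle — by not discarding $c_i(e^{*})$: since $e^{*}$ is the minimum-cost chore in a bundle of size $\ge 2$, one gets the stronger $\textnormal{MMS}_i(2,A_i\cup A_j)\ge\tfrac12 c_i(A_i\cup A_j)$ combined with a refined accounting of $c_i(e^{*})$ against the half-bundle, tightening $2-\alpha$ to $1-\alpha/2$ in the aggregate.)

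For the lower bound, I would build the standard single-critical-agent instance: $n$ agents, and a chore set that agent $1$ values so that its $n$-partition defining $\textnormal{MMS}_1(n,E)$ is perfectly balanced while a designated bundle $B_1$ is slightly overloaded and $\alpha$-PMMS-feasible against every other (zero-cost-for-its-owner) bundle. Following the template of Propositions~\ref{prop::4.5} and~\ref{prop:my-add2}, agents $i\ge 2$ have cost $0$ on their own bundle and $1$ elsewhere, so only agent~$1$ is binding. I would choose the costs of agent~$1$ (a few ``heavy'' chores of equal value packed into $B_1$, plus ``light'' unit chores filling the remaining bundles) so that, for every $j\ge 2$, $\textnormal{MMS}_1(2,B_1\cup B_j)$ equals $c_1(B_1)/\alpha$ exactly — certifying $\alpha$-PMMS — while $E$ divides evenly into $n$ bundles of common cost $\tfrac{1}{n}c_1(E)$, so that $\textnormal{MMS}_1(n,E)=\tfrac1n c_1(E)$ and the ratio $c_1(B_1)/\textnormal{MMS}_1(n,E)$ approaches $\frac{n\alpha}{\alpha+(n-1)(2-\alpha)}$ as the $\epsilon$-sized light chores are refined.

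The delicate step, and the main obstacle, is the simultaneous fit: the heavy/light cost magnitudes and the bundle sizes must be calibrated so that (i) $B_1\cup B_j$ has its optimal $2$-partition pinned at the value $c_1(B_1)/\alpha$ for the PMMS certificate, and (ii) the global $n$-partition is genuinely balanced at $\tfrac1n c_1(E)$ for the MMS computation — the same integer-divisibility juggling that appears in the tightness arguments of Propositions~\ref{prop::4.5}–\ref{prop::4.8}. Once the instance is specified, both share computations are routine verifications of best partitions, and taking $\epsilon\to 0$ delivers the stated lower bound $\frac{n\alpha}{\alpha+(n-1)(2-\alpha)}-\epsilon$.
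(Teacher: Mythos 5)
Your upper-bound argument breaks at its central step: you write $c_i(A_i)\le\alpha\cdot\textnormal{MMS}_i(2,A_i\cup A_j)\le\tfrac{\alpha}{2}c_i(A_i\cup A_j)$, but the second inequality runs in the wrong direction. Lemma~\ref{obs::3.1} gives $\textnormal{MMS}_i(2,S)\ge\tfrac12 c_i(S)$, not $\le$: for chores the min--max share always sits at or \emph{above} half the total cost (a single positive-cost chore already makes it strictly above). Consequently your deduction $c_i(A_j)\ge\tfrac{2-\alpha}{\alpha}c_i(A_i)$ is unfounded and in fact false: take $A_i$ to consist of one heavy chore plus one light chore and $A_j=\emptyset$ (other agents having zero costs). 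Then $\textnormal{MMS}_i(2,A_i\cup A_j)=\textnormal{MMS}_i(2,A_i)$ equals the heavy chore's cost, the allocation is $\alpha$-PMMS for $\alpha$ close to $1$, yet $c_i(A_j)=0$. Your $|A_i|=1$ versus $|A_i|\ge 2$ split does not catch this configuration, and this is exactly why the EFX argument of Proposition~\ref{prop::4.5} cannot be transplanted: the PMMS condition bounds $c_i(A_i)$ by $\textnormal{MMS}_i(2,A_i\cup A_j)$, which in general yields no lower bound on $c_i(A_j)$ whatsoever. The paper instead splits on whether $\textnormal{MMS}_i(2,A_i\cup A_j)=\textnormal{MMS}_i(2,A_i)$ for some $j$. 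When it does, Lemma~\ref{lemma::5.7} forces the bundle defining $\textnormal{MMS}_i(2,A_i)$ to be a singleton --- this, not positivity of a denominator, is where $\alpha<\tfrac32$ is genuinely used --- whence $c_i(A_i)\le\alpha\,c_i(T_1)\le\alpha\,\textnormal{MMS}_i(n,E)$ by the second bound of Lemma~\ref{obs::3.1}. When it does not, Lemma~\ref{lemma::5.8} gives $\textnormal{MMS}_i(2,A_i\cup A_j)\le\tfrac12 c_i(A_i)+c_i(A_j)$ (note: the full $c_i(A_j)$, not half of it), which only yields $c_i(A_j)\ge\tfrac{2-\alpha}{2\alpha}c_i(A_i)$ and hence the stated, weaker constant $\frac{n\alpha}{\alpha+(n-1)(1-\alpha/2)}$. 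Your ``sharpening'' remark has the directions reversed: the constant your invalid chain produces, $\frac{n\alpha}{\alpha+(n-1)(2-\alpha)}$, is \emph{smaller} than the stated upper bound, so passing from $2-\alpha$ to $1-\alpha/2$ would be a weakening, not a refinement; whether the smaller constant is actually a valid upper bound is precisely the gap the paper leaves open (it calls the result ``almost tight'').

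Your lower-bound plan is in the right spirit but stops short of a construction, and it is more complicated than needed. The paper's instance uses \emph{identical} additive costs for all agents: $n$ even, $n^2$ chores with $c(e_j)=\alpha$ for $j\le n$ and $c(e_j)=2-\alpha$ otherwise, and $B_i$ the $i$-th block of $n$ chores. Evenness of $n$ pins $\textnormal{MMS}_1(2,B_1\cup B_j)=n$ while $c_1(B_1)=n\alpha$, certifying $\alpha$-PMMS, and $E$ splits evenly into $n$ bundles of cost $\alpha+(n-1)(2-\alpha)$, so the ratio $\frac{n\alpha}{\alpha+(n-1)(2-\alpha)}$ is attained exactly; no $\epsilon$-refinement limit is needed.
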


Before we can prove the above proposition, we need the following two lemmas.

\begin{lemma}\label{lemma::5.7}
For any $i\in N$ and $S\subseteq E$, suppose $\textnormal{MMS}_i(2, S)$ is defined by a
2-partition $\mathbf{T}=(T_1,T_2)$ with $c_i(T_1) = \textnormal{MMS}_i(2, S)$. If the number of
chores in $T_1$ is at least two, then $\frac{c_i(S)}{\textnormal{MMS}_i(2,S)} \geq \frac{3}{2}$.
\end{lemma}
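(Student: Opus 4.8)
The plan is to reduce the claimed inequality to a statement about the \emph{lighter} bundle $T_2$, and then to extract that statement from the optimality of the MMS-defining partition via a single exchange argument. Since the cost function is additive, $c_i(S) = c_i(T_1) + c_i(T_2)$, and because $c_i(T_1) = \textnormal{MMS}_i(2,S)$ is the larger of the two bundle costs, the target $\frac{c_i(S)}{\textnormal{MMS}_i(2,S)} \geq \frac{3}{2}$ is equivalent to $c_i(T_2) \geq \frac{1}{2}\, c_i(T_1)$. So it suffices to show that the lighter bundle carries at least half the weight of the heavier one.

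First I would pick a minimum-cost chore $e^{*} \in \arg\min_{e \in T_1} c_i(e)$. Working under the (standard) assumption that every chore has strictly positive cost, so that $|T_1| \geq 2$ really does furnish two positive-cost chores, averaging gives $c_i(e^{*}) \leq \frac{1}{2}\, c_i(T_1)$, since at least one other chore of $T_1$ has cost at least $c_i(e^{*})$ and hence $c_i(T_1) \geq 2\,c_i(e^{*})$. Next I would move $e^{*}$ from $T_1$ to $T_2$, forming the alternative $2$-partition $(T_1 \setminus \{e^{*}\},\, T_2 \cup \{e^{*}\})$ of $S$. By additivity $c_i(T_1 \setminus \{e^{*}\}) = c_i(T_1) - c_i(e^{*}) < c_i(T_1)$. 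Because $\mathbf{T}$ attains $\textnormal{MMS}_i(2,S)$, the maximum bundle cost of this alternative partition cannot be smaller than $c_i(T_1)$; as its first bundle now has strictly smaller cost, the second bundle must carry the maximum, i.e.\ $c_i(T_2) + c_i(e^{*}) = c_i(T_2 \cup \{e^{*}\}) \geq c_i(T_1)$. Combining this with $c_i(e^{*}) \leq \frac{1}{2}\, c_i(T_1)$ yields $c_i(T_2) \geq c_i(T_1) - c_i(e^{*}) \geq \frac{1}{2}\, c_i(T_1)$, which is exactly what the reduction requires.

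I expect the only genuine subtlety, and the main obstacle, to be the handling of zero-cost chores. The exchange step crucially needs $c_i(e^{*}) > 0$ so that relocating $e^{*}$ \emph{strictly} decreases the first bundle's cost and forces the weight onto $T_2$; if $T_1$ were permitted to contain a zero-cost chore, one could have $|T_1| \geq 2$ while $c_i(S) = c_i(T_1)$, making the ratio equal to $1$ and falsifying the statement. I would therefore explicitly appeal to the convention used elsewhere in the paper that zero-cost chores may be discarded, so that the hypothesis $|T_1| \geq 2$ supplies two chores of positive cost; this is precisely what validates both the averaging bound $c_i(e^{*}) \leq \frac{1}{2}\,c_i(T_1)$ and the strict-decrease that drives the optimality argument.
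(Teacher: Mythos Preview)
Your proof is correct and rests on the same core idea as the paper's: move the smallest chore $e^*$ from $T_1$ to $T_2$ and invoke the optimality of the MMS-defining partition. The paper argues by contradiction and takes a slightly longer route---it first shows (under the negated conclusion) that every chore in $T_1$ costs more than $\tfrac{1}{3}c_i(S)$, deduces $|T_1|=2$, and only then performs the final exchange---whereas you go directly from $c_i(e^*)\le \tfrac{1}{2}c_i(T_1)$ and one exchange to $c_i(T_2)\ge \tfrac{1}{2}c_i(T_1)$, which is cleaner and avoids the intermediate bound on $|T_1|$. Your explicit flag about zero-cost chores is apt: the paper's own proof also tacitly relies on $c_i(e^*)>0$ to obtain a strict decrease after the move, so you are not introducing any assumption beyond what the paper already uses.
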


\begin{proof}
For the sake of contradiction, we assume $\frac{c_i(S)}{\textnormal{MMS}_i(2,S)} < \frac{3}{2}$.
Since $c_i(T_1) = \textnormal{MMS}_i(2, S)$, we have $c_i(T_1) > \frac{2}{3}c_i(S)$, and
accordingly, $c_i(T_2) < \frac{1}{3} c_i(S)$ due to additivity. Thus, $c_i(T_1) - c_i(T_2) >
\frac{1}{3} c_i(S)$ holds, and we claim that each single chore in $T_1$ has cost strictly larger
than $\frac{1}{3}c_i(S)$ for agent $i$; otherwise, by moving the chore with the smallest cost in
$T_1$ to $T_2$, one can find a 2-partition in which the cost of larger bundle is smaller than
$c_i(T_1)$, contradiction. Based on our claim, we have $|T_1| = 2$. Notice that for any $e \in
T_1$, $c_i(e) >c_i(T_2)$ holds. As a result, moving one chore from $T_1$ to $T_2$ results in a
2-partition, in which the cost of larger bundle is strictly smaller than $c_i(T_1)$, contradicting
to the construction of allocation $\mathbf{T}$.
\end{proof}

\begin{lemma}\label{lemma::5.8}
For any $i\in N$ and $S_1, S_2 \subseteq E$, if $\textnormal{MMS}_i(2, S_1 \cup S_2) >
\textnormal{MMS}_i(2, S_1)$, then $\textnormal{MMS}_i(2, S_1 \cup S_2) \leq \frac{1}{2} c_i(S_1) +
c_i(S_2)$.
\end{lemma}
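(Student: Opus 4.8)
The plan is to build an explicit good 2-partition of $S_1\cup S_2$ out of an optimal 2-partition of $S_1$, by attaching $S_2$ to the cheaper of the two pieces, and then to use the strict inequality in the hypothesis to pin down which bundle controls the minimax. First I would fix a 2-partition $(P_1,P_2)$ of $S_1$ that attains $\textnormal{MMS}_i(2,S_1)$, labelled so that $c_i(P_1)\ge c_i(P_2)$; then $c_i(P_1)=\textnormal{MMS}_i(2,S_1)=:m$. Additivity gives $c_i(S_1)=c_i(P_1)+c_i(P_2)$, and combined with $c_i(P_1)\ge c_i(P_2)$ this yields $m\ge\tfrac12 c_i(S_1)$, hence $c_i(P_2)=c_i(S_1)-m\le\tfrac12 c_i(S_1)$. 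This last estimate is precisely the reason for dumping $S_2$ onto $P_2$ rather than $P_1$.

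Next I would consider the 2-partition $\big(P_1,\,P_2\cup(S_2\setminus P_1)\big)$ of $S_1\cup S_2$ (removing the possible overlap $S_2\cap P_1$ keeps the two parts disjoint, and by additivity the second part costs at most $c_i(P_2)+c_i(S_2)$). Since this is a feasible witness for the minimax defining $\textnormal{MMS}_i(2,S_1\cup S_2)$, I obtain
\[
\textnormal{MMS}_i(2,S_1\cup S_2)\ \le\ \max\{\,c_i(P_1),\ c_i(P_2)+c_i(S_2)\,\}=\max\{\,m,\ c_i(P_2)+c_i(S_2)\,\}.
\]

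The key step is to invoke the hypothesis $\textnormal{MMS}_i(2,S_1\cup S_2)>m$ to discard the first branch of the maximum: were the maximum equal to $m$, the displayed inequality would contradict the hypothesis. Therefore the maximum must be the second term, so $\textnormal{MMS}_i(2,S_1\cup S_2)\le c_i(P_2)+c_i(S_2)$, and substituting $c_i(P_2)\le\tfrac12 c_i(S_1)$ from the first paragraph completes the proof. The argument is short, and the only real obstacle is conceptual rather than computational: one must attach $S_2$ to the smaller bundle (so that the added load rides on a piece of cost at most $\tfrac12 c_i(S_1)$) and then use the strict hypothesis exactly to eliminate the case where $P_1$ still governs the minimax.
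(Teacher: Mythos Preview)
Your proof is correct and follows essentially the same approach as the paper: take an optimal 2-partition of $S_1$, attach $S_2$ to the lighter side, and use the strict hypothesis to rule out the branch where the original heavier side still dominates the maximum. Your version is in fact a bit more streamlined---the paper splits into the cases $c_i(P_1)=\tfrac12 c_i(S_1)$ and $c_i(P_1)>\tfrac12 c_i(S_1)$, whereas you handle both at once---and your care with $S_2\setminus P_1$ even covers the (unused) possibility that $S_1$ and $S_2$ overlap.
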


\begin{proof}
Suppose $\textnormal{MMS}_i(2, S_1)$ is defined by partition $(T_1,T_2)$ and we have
$\textnormal{MMS}_i(2, S_1) = c_i(T_1)$. We distinguish two cases according to the value of
$c_i(T_1)$. If $c_i(T_1) = \frac{1}{2} c_i(S_1)$, then consider $(T_1 \cup S_2 , T_2)$, a
2-partition of $S_1\cup S_2$. Clearly, $\textnormal{MMS}_i(2, S_1 \cup S_2) \leq c_i(T_1 \cup S_2)
= \frac{1}{2} c_i(S_1) + c_i(S_2)$. If $c_i(T_1) > \frac{1}{2} c_i(S_1)$, since
$\textnormal{MMS}_i(2, S_1 \cup S_2) > \textnormal{MMS}_i(2, S_1)$, we can claim that $c_i(T_1) -
c_i(T_2) < c_i(S_2)$; otherwise, considering partition $\{ T_1, T  _  2  \cup S _ 2 \}$, we have $
\textnormal{MMS} _ i ( 2 , S _ 1 \cup S _ 2 ) \leq  c _ i  ( T _ 1 ) = \textnormal{MMS} _ i (  2, S
_ 1 )$, a contradiction. Now let us consider $\{T_2\cup S_2, T_1\}$, another 2-partition of
$S_1\cup S_2$. According to our claim, we have $c_i(T_2 \cup S_2) > c_i(T_1)$, and thus,
$\textnormal{MMS}_i(2, S_1 \cup S_2) \leq c_i(T_2\cup S_2) < \frac{1}{2}c_i(S_1) + c_i(S_2)$, where
the last inequality is due to $c_i(T_2) = c_i(S_1) - c_i(T_1) < \frac{1}{2}c_i(S_1)$.
\end{proof}


\begin{proof}[Proof of Proposition~\ref{prop:4.9}]
We first prove the upper bound. Let $\mathbf{A} = (A_1,...,A_n)$ be an $\alpha$-PMMS allocation and
we focus our analysis on agent $i$. Let $\alpha ^{(i)} = \max_{j\neq i}
\frac{c_i(A_i)}{\textnormal{MMS}_i(2,A_i \cup A_j)}$ and $j^{(i)} $ be the index such that
$\textnormal{MMS}_i(2, A_i\cup A_{j^{(i)} }) \leq \textnormal{MMS}_i(2, A_i\cup A_{j})$ for any
$j\in N \setminus\{ i \}$ (tie breaks arbitrarily). By such a construction, clearly, $\alpha =
\max_{i \in N} \alpha ^{(i)}$ and $c_i(A_i) = \alpha ^{(i)} \cdot \textnormal{MMS}_i(2, A_i \cup
A_{j^{(i)}})$. Then, we split our proof into two different cases.

\emph{Case 1:} $\exists j \neq i$ such that $\textnormal{MMS}_i(2, A_i \cup A_j) =
\textnormal{MMS}_i(2, A_i)$. Then $\alpha ^{(i)} = \frac{c_i(A_i)}{\textnormal{MMS}_i(2, A_i)}$
holds. Suppose $\textnormal{MMS}_i(2, A_i)$ is defined by the 2-partition $(T_1,T_2)$ with
$c_i(T_1) = \textnormal{MMS}_i(2, A_i)$. If $|T_1| \geq 2$, by Lemma \ref{lemma::5.7}, we have
$\alpha ^{(i)} = \frac{c_i(A_i)}{\textnormal{MMS}_i(2, A_i)} \geq \frac{3}{2}$, contradicting to
$\alpha ^{(i)} \leq \alpha < \frac{3}{2}$. As a result, we can further assume $|T_1| = 1$. Then, by Lemma \ref{obs::3.1}, we have $\textnormal{MMS}_i(n, E) \geq c_i(T_1)$ and
accordingly, $\frac{c_i(A_i)}{\textnormal{MMS}_i(n, E)} \leq \frac{c_i(A_i)}{c_i(T_1)} = \alpha
^{(i)} \leq \alpha$. For $1<\alpha<\frac{3}{2}$ and $n \geq 3$, it is not hard to verify that
$\alpha \leq \frac{n\alpha}{ \alpha + (n-1)(1-\frac{\alpha}{2})}$, completing the proof for this
case. 	

\emph{Case 2:} $\forall j \neq i $, $\textnormal{MMS}_i(2, A_i \cup A_j) > \textnormal{MMS}_i(2,
A_i)$ holds. According to Lemma \ref{lemma::5.8}, for any $j\neq i$, the following holds
\begin{equation}\label{ieq::17}
	\textnormal{MMS}_i(2, A_i\cup A_j) \leq \frac{1}{2} c_i(A_i) + c_i(A_j).
\end{equation}
Due to the construction of $\alpha^{(i)}$, for any $j \neq i$, we have $c_i(A_i) \leq \alpha ^{(i)}
\cdot \textnormal{MMS}_i(2, A_i \cup A_j)$. Combining Inequality (\ref{ieq::17}), we have $c_i(A_j)
\geq \frac{2-\alpha ^{(i)}}{2\alpha ^{(i)}}c_i(A_i)$ for any $ j \neq i $. Thus, the following
holds,
\begin{equation}\label{ieq::18}
	\frac{c_i(A_i)}{\textnormal{MMS}_i(n, E)} \leq \frac{nc_i(A_i )}{ c_i (E)} \leq
	\frac{nc_i(A_i)}{c_i(A_i) + (n-1)\frac{2-\alpha^{(i)}}{2\alpha^{(i)}} c_i(A_i)}.
\end{equation}
The last expression in (\ref{ieq::18}) is monotonically increasing in $\alpha^{(i)}$, and
accordingly, we have
$$
\frac{c_i(A_i)}{\textnormal{MMS}_i(n, E)} \leq \frac{n\alpha}{\alpha +
	(n-1)(1-\frac{\alpha}{2})}.
$$ 	

As for the lower bound, consider an instance of $n$ (even) agents and a set $E=\{
e_1,...,e_{n^2}\}$ of $n^2$ chores. Agents have identical cost functions and for any $ i $, we let
$c_i (e_j) = \alpha$ for $1 \leq j \leq n $ and $c_ i (e_j) = 2 - \alpha$ for $n  +1 \leq j \leq n
^ 2$. Consider the allocation $\mathbf{B} = (B_1,...,B_n)$ with $B_i = \{ e_{ (i-1)n +1 },...,
e_{ni} \}$ for any $i \in [n]$. Since $\alpha > 1$, it is not hard to verify that, except for agent
1, no one else violates the condition of PMMS, and accordingly, the approximation guarantee for PMMS
is determined by agent 1. For agent 1, since $n$ is even, $\textnormal{MMS}_1(2,B_1\cup B_j) = n$
holds for any $j \geq 2$, and due to $c_1(B_1) = n\alpha$, we can claim that the allocation
$\mathbf{B}$ is $\alpha$-PMMS. Moreover, it is not hard to verify that $\textnormal{MMS}_1(n, E) =
\alpha + (n-1)(2-\alpha)$ and so $\frac{c_ 1(B _  1)}{\textnormal{MMS}_1(n, E)} =
\frac{n\alpha}{\alpha + (n-1)(2-\alpha)}$, completing the proof.
\end{proof}

The motivating example right before Proposition~\ref{prop:4.9}, unfortunately, only works for the
case of $n=3$. When $n$ becomes larger, an $\alpha$-PMMS allocation with $ \alpha \geq \frac{3}{2}$
is still possible to provide a non-trivial approximation guarantee on the notion of MMS.

We remain to consider the approximation guarantee of MMS for other fairness criteria. Notice that
all of EFX, EF1 and PMMS can have non-trivial guarantee for MMS (i.e., better than $n$-MMS).
However, the converse is not true and even the exact MMS does not provide any substantial guarantee
for the other three criteria.

\begin{proposition}\label{prop::5.6}
When agents have additive cost functions, for any $n\geq 3$, an \textnormal{MMS} allocation is not
necessarily $\beta$-\textnormal{PMMS} for any $1\leq \beta < 2$.
\end{proposition}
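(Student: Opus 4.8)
The plan is to prove this impossibility by exhibiting a single family of instances, one for each $n\ge 3$, together with an \textnormal{MMS} allocation whose \textnormal{PMMS} approximation ratio is exactly $2$. Since Lemma~\ref{lemma::3.2} already guarantees that \emph{every} allocation is $2$-\textnormal{PMMS}, it suffices to make some \textnormal{MMS} allocation attain ratio exactly $2$ on a single pair of agents: that immediately rules out $\beta$-\textnormal{PMMS} for any $\beta<2$, which is the claim (and shows the bound $2$ is tight). The mechanism I will engineer is this: for the distinguished agent $1$ I arrange a bundle $A_1$ consisting of two equal-cost chores together with a ``companion'' agent $2$ whose bundle $A_2$ has zero cost to agent $1$. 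Then $\textnormal{MMS}_1(2,A_1\cup A_2)=\tfrac12 c_1(A_1)$ — the two unit chores of $A_1$ get separated into the two halves of the optimal $2$-partition — and hence $c_1(A_1)/\textnormal{MMS}_1(2,A_1\cup A_2)=2$.

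For the construction, let agent $1$ have $n+1$ chores of cost $1$ and one further chore of cost $0$, so $c_1(E)=n+1$. By the pigeonhole principle, any $n$-partition of the $n+1$ unit chores must place two of them in a common bundle, so $\textnormal{MMS}_1(n,E)=2$. I take the allocation $\mathbf{A}$ with $A_1=\{e_1,e_2\}$ (two unit chores), $A_2=\{e_{n+2}\}$ (the zero-cost chore), and the remaining $n-1$ unit chores distributed arbitrarily among $A_3,\dots,A_n$ (empty bundles being permitted). For each agent $i\ge 2$ I set the cost function so that $c_i(e)=0$ for every $e\in A_i$ and $c_i(e)>0$ otherwise; this is the standard device ensuring these agents own a zero-cost bundle.

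It then remains to verify three things. First, $\mathbf{A}$ is an \textnormal{MMS} allocation: for agent $1$ we have $c_1(A_1)=2=\textnormal{MMS}_1(n,E)$, and for every $i\ge 2$ we have $c_i(A_i)=0\le \textnormal{MMS}_i(n,E)$. Second, $\textnormal{MMS}_1(2,A_1\cup A_2)=1$, since $A_1\cup A_2$ consists of two unit chores and one zero-cost chore, whose best $2$-partition places one unit chore in each part. Third, combining these gives $c_1(A_1)/\textnormal{MMS}_1(2,A_1\cup A_2)=2$, so the pair $(1,2)$ already violates $\beta$-\textnormal{PMMS} for every $\beta<2$.

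The only real difficulty is the tension between the two constraints $\mathbf{A}$ must simultaneously satisfy: the global requirement $c_1(A_1)\le \textnormal{MMS}_1(n,E)$ forces agent $1$'s total cost to be spread widely enough that it cannot be split below $2$, whereas driving the pairwise ratio up to $2$ needs a companion bundle on which agent $1$ has \emph{zero} cost together with an evenly splittable $A_1$. These pull in opposite directions — one cannot simply concentrate all of agent $1$'s cost in $A_1$, as that would collapse $\textnormal{MMS}_1(n,E)$ to $1$ and destroy the \textnormal{MMS} feasibility. The construction resolves this by inflating $\textnormal{MMS}_1(n,E)$ to $2$ using the extra unit chores, which are parked in bundles $A_3,\dots,A_n$ rather than in the companion bundle $A_2$, and by letting the owners of those bundles be indifferent to them so that the \textnormal{MMS} feasibility of the remaining agents is untouched.
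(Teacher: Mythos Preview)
Your construction is correct and is in fact cleaner than the paper's. The paper builds an instance with $p+2n-1$ chores where $p\gg 1$: agent~$1$ receives a bundle $B_1$ of $p+1$ unit-cost chores, the companion bundle $B_2$ contains a \emph{single unit-cost} chore, and the remaining bundles are padded so that $\textnormal{MMS}_1(n,E)=p+1$; then $\textnormal{MMS}_1(2,B_1\cup B_2)=\lceil (p+2)/2\rceil$ and the ratio $(p+1)/\lceil(p+2)/2\rceil$ only tends to $2$ as $p\to\infty$. You instead place a \emph{zero-cost} chore in the companion bundle $A_2$, which makes $\textnormal{MMS}_1(2,A_1\cup A_2)=1$ exactly and lets the ratio hit $2$ on the nose with just $n+2$ chores and no limiting argument. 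Both routes prove the proposition; yours is more elementary and yields the tight value $2$ directly rather than asymptotically, while the paper's construction has the incidental feature that $c_1(A_2)>0$ (so agent~$1$'s envy is ``genuine'' rather than driven by a worthless bundle), though that is not needed for the stated claim.
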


\begin{proof}
Consider an instance with $n$ agents and $p+2n-1$ chores denoted as $\{e_1,\ldots,e_{2n+p-1}\}$
where $p \in \mathbb{N}^+$ and $ p \gg 1$. We focus on agent 1 and his cost function is: $ c _  1 (
e_  j ) = 1$ for any $ 1 \leq j \leq n  + p$ and $ c  _ 1 ( e _ j ) = p $ for any $  j \geq  n + p
+ 1 $. Consider allocation $\mathbf{B} = ( B _1, \ldots,B _n)$ with $B _ 1 =\left\{ e _ 1, \ldots,
e _ {p+ 1}\right\}$, $B _ i =\left\{ e_{p+i}\right\}, \forall i =2,\ldots,n-2 $, $B_{n-1} = \{
e_{n+p-1}, e_{n+p}\}$ and $B_n = \{ e_{n+p+1},\ldots,e_{2n+p-1}\}$. For any agent $i\geq 2$, her
cost function is $ c _i (e) = 0  $ for any $ e \in B _ i $ and $  c  _ i ( e   ) =  1 $ for any $e
\notin B _ i $. Consequently, except for agent 1, other agents violate neither MMS nor PMMS, and
accordingly the approximation guarantee for PMMS and MMS is determined by agent 1. For
$\textnormal{MMS}_1(n, E)$, it happens that $E$ can be evenly divided into n bundles of the same
cost (for agent 1), so we have $\textnormal{MMS}_1(n, E) = p+1$. Accordingly, $c_1(B_1) =
\textnormal{MMS}_1(n, E) $ holds and thus, allocation $\mathbf{B}$ is MMS. As for the approximation
guarantee on PMMS, consider the combined bundle $B_1\cup B_2$ and it is not hard to verify that
$\textnormal{MMS}_1(2, B_1\cup B_2) = \lceil \frac{p+2}{2} \rceil$ implying
$\frac{c_1(B_1)}{\textnormal{MMS}_1(2, B_1\cup B_2)} = \frac{p+1}{\lceil \frac{p+2}{2} \rceil}
\rightarrow 2$ as $p \rightarrow \infty$.
\end{proof}

\begin{proposition}\label{prop::5.7}
When agents have additive cost functions, an \textnormal{MMS} allocation is not necessarily
$\beta$-\textnormal{EF1} or $\beta$-\textnormal{EFX} for any $\beta \geq1$.
\end{proposition}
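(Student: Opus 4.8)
The statement is an impossibility result, so it suffices to exhibit a single instance together with an \textnormal{MMS} allocation that fails $\beta$-\textnormal{EF1} for every finite $\beta$. Moreover, since Proposition~\ref{prop::4.3} shows that $\beta$-\textnormal{EFX} implies $\beta$-\textnormal{EF1}, any allocation that is not $\beta$-\textnormal{EF1} is automatically not $\beta$-\textnormal{EFX}; hence I would only construct a witness that breaks \textnormal{EF1}. The plan is to engineer an agent (say agent~1) whose bundle contains at least two chores of positive cost, alongside a second agent whose bundle has zero cost to agent~1. Then for any $\beta$, removing a single chore from agent~1's bundle still leaves a positive cost, whereas the comparison bundle costs $0$, so the \textnormal{EF1} ratio is infinite.

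First I would address the only real tension: the allocation must still be \textnormal{MMS}, i.e.\ $c_1(A_1)\le \textnormal{MMS}_1(n,E)$, and a careless choice lets agent~1 redistribute his positive chores to drive $\textnormal{MMS}_1(n,E)$ below $c_1(A_1)$. I would resolve this through indivisibility: let agent~1 value $2(n-1)$ chores at cost $1$ and one extra chore at cost $0$. Because $2(n-1)>n$ for $n\ge 3$, every $n$-partition must place two unit chores in some bundle, so $\textnormal{MMS}_1(n,E)=2$. This is exactly why the construction needs $n\ge 3$: for $n=2$ one has $\textnormal{MMS}=\textnormal{PMMS}$, which by Proposition~\ref{prop::5.1} implies \textnormal{EFX}, so no counterexample can exist there.

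Next I would specify the allocation $\mathbf{B}$: give agent~1 exactly two unit chores, so $c_1(B_1)=2=\textnormal{MMS}_1(n,E)$; distribute the remaining $2(n-2)$ unit chores among agents $2,\ldots,n-1$; and hand the single zero-cost chore to agent~$n$, so that $c_1(B_n)=0$. To make every other agent trivially satisfy \textnormal{MMS}, I would define, for each $j\ge 2$, the cost function $c_j$ to vanish on $B_j$ and equal $1$ elsewhere, giving $c_j(B_j)=0\le \textnormal{MMS}_j(n,E)$. This makes $\mathbf{B}$ a bona fide \textnormal{MMS} allocation, with agent~1 at equality and all others at zero cost.

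Finally I would verify the failure of \textnormal{EF1} through agent~1 against agent~$n$: since $c_1(B_n)=0$ while $c_1(B_1\setminus\{e\})=1>0$ for either chore $e\in B_1$, no finite $\beta$ satisfies the \textnormal{EF1} inequality, and by the reduction above the same witness rules out $\beta$-\textnormal{EFX}. The main obstacle is the compatibility of \textnormal{MMS}-validity with having two positive chores in a single bundle; essentially all of the work goes into the lumpiness argument that pins $\textnormal{MMS}_1(n,E)$ at $2$, after which the \textnormal{EF1}/\textnormal{EFX} failure is immediate.
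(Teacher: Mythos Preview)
Your proof is correct and takes a different route from the paper. The paper reuses the parametrized instance from the proof of Proposition~\ref{prop::5.6}: agent~1 receives $p+1$ unit-cost chores while some other bundle has cost $1$ or $2$ to agent~1, so the \textnormal{EF1} ratio $c_1(B_1\setminus\{e\})/c_1(B_2)$ is of order $p$ and grows without bound as $p\to\infty$. Your construction instead arranges, for each fixed $n\ge 3$, a single instance in which agent~$n$'s bundle has cost \emph{zero} to agent~1, making the \textnormal{EF1} ratio literally infinite rather than merely unbounded across a family of instances. Your approach is more self-contained and economical---no limiting argument is needed and the instance is smaller---while the paper's approach has the expository advantage of recycling the Proposition~\ref{prop::5.6} construction. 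One small point worth making explicit: the pigeonhole step only gives $\textnormal{MMS}_1(n,E)\ge 2$; the matching upper bound follows because your allocation $\mathbf{B}$ itself satisfies $\max_j c_1(B_j)=2$, so it witnesses $\textnormal{MMS}_1(n,E)\le 2$.
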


\begin{proof}
By Proposition~\ref{prop::4.3}, the notion $\beta$-EFX is stricter than $\beta$-EF1, and thus, we
only need to show the unbounded guarantee on EF1. Again, we consider the instance given in the
proof of Proposition \ref{prop::5.6}. As stated in that proof, $\mathbf{B}$ is an MMS allocation,
and except for agent 1, no one else will violate the condition of PMMS. Note that PMMS is stricter
than EF1, then no one else will violate the condition of EF1. As for agent 1, each chore in $B_1$
has the same cost for him, so we can remove any single chore in $B_1$ and check its performance in
terms of EF1. When comparing to bundle $B_2$, we have $\frac{c_1(B_1\setminus \{ e_1\})}{c_1(B_2)}
= p \rightarrow \infty$ as $p\rightarrow \infty$.
\end{proof}

\section{Bounds beyond additive setting}

The results in previous sections demonstrate the strong connections between the four (additive)
relaxations of envy-freeness in the setting of additive cost functions. Under this umbrella, what
would also be interesting is that whether there still exists certain connections when agents' cost
profile is no longer additive. In this section, we also study the connections between fairness
criteria and, instead of additive cost functions, we assume that agents have submodular cost
functions, which have also been widely concerned with in fair division literature
\citep{ghodsiFairAllocationIndivisible2018, chanMaximinAwareAllocationsIndivisible2019}.

As a starting point, we consider EF, the strongest notion in the setting of additive, and see
whether it can still provide guarantee on other fairness notions. According to the definitions, the
notion of EF is, clearly, still stricter than EFX and EF1 if cost functions are monotone. Then, we
study the approximation guarantee of EF on MMS and PMMS. As shown by our results below, in contrast
to the results under additive setting, PMMS and MMS are no longer the relaxations of EF, and even
worse, the notion of EF does not provide any substantial guarantee on PMMS and MMS.

\begin{proposition}\label{prop::6.1}
When agents have submodular cost functions, an \textnormal{EF} allocation is not necessarily
$\beta_1$-\textnormal{MMS} or $\beta _ 2$-\textnormal{PMMS} for any $1\leq \beta  _ 1 < n$, $1\leq
\beta_2 < 2$.
\end{proposition}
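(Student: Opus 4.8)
The plan is to produce a single instance with $n$ agents and a common submodular cost function on which some exact \textnormal{EF} allocation has \textnormal{MMS}-approximation exactly $n$ and \textnormal{PMMS}-approximation exactly $2$. Since Lemma~\ref{lemma::3.2} shows that \emph{every} allocation is $n$-\textnormal{MMS} and $2$-\textnormal{PMMS} in the subadditive (hence submodular) setting, such an example is simultaneously worst-possible and settles both impossibility claims. The natural cost function to use is a coverage function: I would partition the chores into ``types'' and set $c(S)$ equal to the number of distinct types represented in $S$. Coverage functions are monotone and submodular with $c(\emptyset)=0$, so this is an admissible cost profile, and I would take all agents to have this same function.

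Concretely, I would fix $2n$ types and give each type exactly $n$ identical copies, so $|E|=2n^2$, and let $A_i$ consist of the $i$-th copy of every type. Then $A_i$ meets all $2n$ types, so $c_i(A_i)=2n$; because the cost functions are identical and each $A_j$ likewise contains one copy of every type, $c_i(A_j)=2n$ for all $j$, so the allocation is (exactly) \textnormal{EF}. The next step is to evaluate the two shares. For $\textnormal{MMS}_i(n,E)$, the $n$-partition grouping all copies of a common type keeps type-counts low, while a pigeonhole/averaging argument shows that $2n$ types spread over $n$ bundles force some bundle to meet at least two types; hence $\textnormal{MMS}_i(n,E)=2$ and $c_i(A_i)/\textnormal{MMS}_i(n,E)=n$. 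For \textnormal{PMMS}, $A_i\cup A_j$ contains two copies of each of the $2n$ types, and the optimal $2$-partition splits the types evenly, giving $\textnormal{MMS}_i(2,A_i\cup A_j)=n$ and $c_i(A_i)/\textnormal{MMS}_i(2,A_i\cup A_j)=2$. Both ratios meet the trivial upper bounds, so the \textnormal{EF} allocation is not $\beta_1$-\textnormal{MMS} for any $\beta_1<n$ and not $\beta_2$-\textnormal{PMMS} for any $\beta_2<2$.

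The subtle point, and the part I would verify most carefully, is the tension between the two requirements: driving $\textnormal{MMS}_i(n,E)$ down demands that agent $i$ can cover $E$ with few ``pure'' bundles, whereas keeping the allocation \textnormal{EF} forces $c_i(A_i)$ to be large on the mixed bundle $A_i$. A more naive design with only $n$ types (and $n$ copies each) already separates the \textnormal{EF}-cost from the $n$-partition share and yields \textnormal{MMS}-ratio $n$, but it produces \textnormal{PMMS}-ratio $n/\lceil n/2\rceil$, which is strictly below $2$ for odd $n$. Choosing exactly $2n$ types is what repairs this parity: it makes both $2n/n$ and $2n/2$ integral, so the \textnormal{MMS}-ratio equals $n$ and the \textnormal{PMMS}-ratio equals $2$ for every $n$ simultaneously.

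Finally, to make the argument airtight I would state explicitly the two lower-bound estimates on the shares (each type must appear in at least one bundle, so the summed per-bundle type-counts are at least $2n$, forcing the maximum to be at least $\lceil 2n/k\rceil$ for a $k$-partition), confirm that the grouping partitions achieve these bounds, and check that the $A_i$ indeed form a valid $n$-partition of $E$. With submodularity of the coverage function being standard, no step should present a genuine difficulty beyond the bookkeeping of type-counts described above.
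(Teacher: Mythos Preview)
Your construction is essentially the paper's: both use identical agents with a coverage-type cost function counting how many ``types'' a bundle touches, give each type $n$ copies, allocate one copy of every type to each agent, and compute the shares by grouping types. The paper arranges $n$ types as the rows of an $n\times n$ matrix (columns are the bundles) and explicitly restricts to even $n$ so that the PMMS $2$-partition can split the rows evenly; your choice of $2n$ types is a clean tweak that removes this parity restriction and makes the construction work for every $n\ge 2$, while your explicit pigeonhole lower bounds on the shares are a nice addition over the paper's more ad hoc arguments.
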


\begin{proof}
It suffices to show that there exists an EF allocation with approximation guarantee $n$ and 2 for
MMS and PMMS, respectively. Consider an instance with $n$  (even) agents and a set $E$ of chores
with $|E| = n^ {2}$. Chores are placed in the form of $n\times n$  matrix
$E=\left[e_{ij}\right]_{n\times n}$. All agents have an identical cost function
$c(S)=\sum_{i=1}^{n} \min \left\{\left|E_{i} \cap S\right|, 1\right\}$ for any $S \subseteq E$,
where $E_i$ is the set of all elements in the $i$-th row of matrix $E$, i.e., $E_i=\{ e _ { i 1},
\ldots, e _ { i n} \}$. Since capped cardinality function $\left|E_{i} \cap S\right|$ of $S
\subseteq E$ is monotone and submodular for any fix $i$ ($1\le i\le n$), it follows that $c(\cdot)$
is also monotone and submodular.\footnote{More generally, if $f(\cdot)$ is submodular, then
$g(f(\cdot))$ is also submodular for any $g(\cdot)$ that is non-decreasing and concave.
Furthermore, conical combination (with sum as a special case) of submodular functions is also
submodular.}

Next, we prove that this instance permits an EF allocation, with which the approximation guarantee
for MMS and PMMS is $n$ and 2, respectively. Consider an allocation $\mathbf{B} = (B _ 1, ..., B_ n
)$ where for any $j$, bundle $B_j$ contains all elements in the $j$-th column of matrix E, i.e.,
$B_ j = (e _ {1j}, e _ { 2j},..., e _ { n j })$. One can compute that $c(B_j) = \sum_{i =1}^{n}
\min \left(\left|E_{i} \cap B_ j \right|, 1\right) = n$ holds for any $j \in [n]$, which implies
that allocation $\mathbf{B}$ is EF. Next, we check the approximation guarantee of $\mathbf{B}$ on
MMS. With a slightly abuse of notation, we let $\mathbf{E}$ be the allocation defined by
$n$-partition $E_1,...,E_n$, i.e., $\mathbf{E} = ( E_1,...,E_n)$. It is not hard to see that for
any $i\in N, c(E_i) = 1$. Then we claim that allocation $\mathbf{E}$ defines MMS for all agents;
otherwise, there exists another allocation in which each bundle has cost strictly smaller than 1,
and this never happens because $c  ( e ) = 1 $ for any $  e  \in E$ and $c(\cdot)$ is monotone.
Therefore, for any $i\in N, \textnormal{MMS}_i(n, E) = 1$, which implies
$\frac{c(B_i)}{\textnormal{MMS}_i(n, E)} = n$, as required.

Next, we argue that the allocation $\mathbf{B}$ is 2-PMMS. Fix $i,j \in N$ and $j \neq i$. Notice
that the combined bundle $B_i\cup B_j$ contains two columns of chores, so we can consider another
allocation $\mathbf{B}^{\prime} = (B ^{\prime}_ i,B ^{\prime}_ j)$ with $B ^{\prime}_  i = \left\{
e _ {1i}, \ldots, e _ {\frac{n}{2} i },e _ {1j},..., e _ {\frac{n}{2} j } \right\}$ and $B
^{\prime}_j = \{ e _ {\frac{n}{2} +1 i},..., e _ {n i },e _ {\frac{n}{2} +1 j},..., e _ {n j } \}$.
The idea of $\mathbf{B}^{\prime}$ is to split each column into two parts with equal size and one
part staring from the first row to $\frac{n}{2}$-th row while the other one containing the rest
half. By the definition of cost function $c(\cdot)$, we know $c(B^{\prime}_i)=c(B^{\prime}_j) =
\frac{n}{2}$ implying $\textnormal{MMS} _ i(2, B _ i \cup B _ j ) \leq \max \{ c (B^{\prime}_i), c
(B^{\prime}_j )\}  = \frac{1}{2} c _ i (B _ i)$. Therefore, $\mathbf{B}$ is a 2-PMMS allocation.
\end{proof}

In the aspect of worst-case analysis, combining Lemma~\ref{lemma::3.2} and
Proposition~\ref{prop::6.1}, EF can only have a trivial guarantee ($n$ and 2, respectively) on MMS
and PMMS, which is a sharp contrast to the results in additive setting where EF is strictly
stronger than these two notions. As we mentioned above, EF is stricter than EFX and EF1, then we
can directly argue that neither EFX nor EF1 can have better guarantees than trivial ones, namely,
2-PMMS and $n$-MMS.

\begin{proposition}\label{prop::sub_EFX_PMMS_MMS}
When agents have submodular cost functions, an \textnormal{EFX} allocation is not necessarily
$\beta _ 1$-\textnormal{MMS} or $\beta_2$-\textnormal{PMMS} for any $1 \leq \beta _  1 < n $, $1
\leq \beta_ 2 < 2$.
\end{proposition}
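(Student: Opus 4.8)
The plan is to reuse the same instance construction employed in the proof of Proposition~\ref{prop::6.1}, but now verify that the column-based allocation $\mathbf{B}$ is not merely EF but in fact EFX. The key observation is that in that instance all agents share the identical submodular cost function $c(S)=\sum_{i=1}^n \min\{|E_i \cap S|,1\}$, and every bundle $B_j$ (the $j$-th column) satisfies $c(B_j)=n$. Since all agents have identical cost and identical bundle costs, the allocation is trivially EF, and EFX is then immediate because removing any single chore from one's own bundle can only weaken the (already satisfied) envy condition. Once EFX is established, the approximation guarantees for MMS and PMMS --- namely $n$ and $2$ respectively --- carry over verbatim from the analysis in Proposition~\ref{prop::6.1}, since the instance and allocation are unchanged.

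First I would restate the instance: take $n$ (even) agents and the $n\times n$ chore matrix $E=[e_{ij}]_{n\times n}$ with the common cost function $c(\cdot)$ above, which is monotone and submodular as already argued. Then I would exhibit the column allocation $\mathbf{B}=(B_1,\ldots,B_n)$ with $B_j=\{e_{1j},\ldots,e_{nj}\}$ and recall that $c(B_j)=n$ for every $j$. To confirm EFX, fix agents $i,j$ and any chore $e\in B_i$; because $c$ is monotone we have $c(B_i\setminus\{e\})\le c(B_i)=n=c(B_j)$, so the EFX inequality $c_i(B_i\setminus\{e\})\le c_i(B_j)$ holds directly, establishing that $\mathbf{B}$ is EFX.

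Next I would simply cite the remainder of the computation from Proposition~\ref{prop::6.1}: the row allocation $\mathbf{E}=(E_1,\ldots,E_n)$ gives $c(E_i)=1$ and defines the MMS partition, so $\textnormal{MMS}_i(n,E)=1$, yielding $c(B_i)/\textnormal{MMS}_i(n,E)=n$; and the halved-column partition $\mathbf{B}'$ shows $\textnormal{MMS}_i(2,B_i\cup B_j)\le \frac{n}{2}=\frac12 c(B_i)$, so $\mathbf{B}$ is exactly $2$-PMMS and no better. Hence $\mathbf{B}$ is EFX yet fails to be $\beta_1$-MMS for any $\beta_1<n$ and fails to be $\beta_2$-PMMS for any $\beta_2<2$, as claimed.

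The proof involves essentially no new obstacle --- the only substantive point is recognizing that EF together with identical bundle costs immediately upgrades to EFX by monotonicity, so that the entire heavy lifting (the MMS and PMMS calculations) can be inherited wholesale from the preceding proposition. I would therefore keep the argument short and explicitly point the reader back to Proposition~\ref{prop::6.1} for the $n$-MMS and $2$-PMMS verifications rather than repeating them.
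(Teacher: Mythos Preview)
Your proposal is correct and essentially matches the paper's approach: the paper simply observes that EF is stricter than EFX, so the EF allocation $\mathbf{B}$ from Proposition~\ref{prop::6.1} is automatically EFX, and the $n$-MMS and $2$-PMMS bounds carry over unchanged. Your explicit monotonicity verification of EFX is a harmless elaboration of the same one-line implication.
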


\begin{proposition}\label{prop::sub_EF1_PMMS_MMS}
When agents have submodular cost functions, an \textnormal{EF1} allocation is not necessarily
$\beta _ 1$-\textnormal{MMS} or $\beta_2$-\textnormal{PMMS} for any $ 1 \leq \beta_1 < n $, $1
\leq \beta _ 2 < 2$.
\end{proposition}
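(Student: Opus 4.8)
The plan is to reuse the construction from Proposition~\ref{prop::6.1} verbatim, exploiting the fact---noted just before that proposition---that EF implies EF1 whenever the cost functions are monotone. Since every EF allocation is automatically an EF1 allocation in the monotone setting, it suffices to exhibit a single EF1 allocation whose approximation guarantees for MMS and PMMS are exactly $n$ and $2$, respectively; the EF allocation built in Proposition~\ref{prop::6.1} already does precisely this.

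Concretely, first I would recall the instance of Proposition~\ref{prop::6.1}: $n$ (even) agents, $n^2$ chores arranged as an $n\times n$ matrix $E=[e_{ij}]_{n\times n}$, with the identical submodular cost function $c(S)=\sum_{i=1}^{n}\min\{|E_i\cap S|,1\}$, where $E_i$ denotes the $i$-th row. The column partition $\mathbf{B}=(B_1,\ldots,B_n)$ with $B_j=\{e_{1j},\ldots,e_{nj}\}$ was shown there to be EF with $c(B_j)=n$ for every $j$, while $\textnormal{MMS}_i(n,E)=1$ (attained by the row partition) and $\textnormal{MMS}_i(2,B_i\cup B_j)=\frac{n}{2}=\frac{1}{2}c(B_i)$. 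These give the ratios $c(B_j)/\textnormal{MMS}_i(n,E)=n$ and $c(B_i)/\textnormal{MMS}_i(2,B_i\cup B_j)=2$.

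The second and final step is simply to observe that, since $c(\cdot)$ is monotone, the EF allocation $\mathbf{B}$ is in particular EF1. Hence $\mathbf{B}$ is an EF1 allocation attaining the trivial bounds of Lemma~\ref{lemma::3.2}, so no guarantee strictly better than $n$-MMS or $2$-PMMS can hold for EF1 in general, which is exactly the claim.

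There is essentially no obstacle beyond the implication ``EF $\Rightarrow$ EF1 for monotone costs,'' which is already recorded in the paper; equivalently, one could cascade through Proposition~\ref{prop::sub_EFX_PMMS_MMS} via ``EFX $\Rightarrow$ EF1.'' The only point worth double-checking is that the example's cost function is genuinely monotone and submodular, but this is settled by the footnote argument in Proposition~\ref{prop::6.1}: each capped-cardinality term is monotone submodular, and a conical combination of submodular functions remains submodular.
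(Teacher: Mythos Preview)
Your proposal is correct and matches the paper's own argument: the paper does not give a separate proof for this proposition but simply notes, in the paragraph immediately preceding it, that since EF is stricter than EF1 under monotone costs, the EF allocation $\mathbf{B}$ from Proposition~\ref{prop::6.1} is already an EF1 allocation attaining the trivial $n$-MMS and $2$-PMMS bounds. Your write-up spells out this implication explicitly and checks the same ratios, so it is essentially identical to the intended proof.
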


As for the connections between EFX and EF1, the statement of Proposition~\ref{prop::4.3} is still
true in the case of submodular.

\begin{proposition}\label{prop::6.2}
When agents have submodular cost functions, an $\alpha$-\textnormal{EFX} allocation is also
$\alpha$-\textnormal{EF1} for any $\alpha \geq 1$. On the other hand, an \textnormal{EF1}
allocation is not necessarily a $\beta$-\textnormal{EFX} for any $\beta \geq 1$.
\end{proposition}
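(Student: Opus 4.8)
The plan is to handle the two halves separately, largely transcribing the additive argument of Proposition~\ref{prop::4.3} and using only that submodularity implies subadditivity (as recorded in the preliminaries).

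For the positive direction I would fix a pair $i,j\in N$ and exhibit a witness chore for the EF1 condition. The natural case split is on whether $A_i$ contains a chore of positive cost to agent $i$. If some $e\in A_i$ has $c_i(e)>0$, then the defining inequality of $\alpha$-\emph{EFX}, namely $c_i(A_i\setminus\{e\})\le\alpha\cdot c_i(A_j)$, already holds for that very $e$, so this single chore witnesses $\alpha$-\emph{EF1} for the pair $(i,j)$. The remaining possibility is that every chore in $A_i$ has zero singleton cost to $i$; this is the one place where the submodular structure is actually used. Here I would invoke subadditivity to write $c_i(A_i)\le\sum_{e\in A_i}c_i(e)=0$, which together with non-negativity of $c_i$ forces $c_i(A_i)=0$. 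Then monotonicity gives $c_i(A_i\setminus\{e\})\le c_i(A_i)=0\le\alpha\cdot c_i(A_j)$ for any $e\in A_i$, so EF1 holds (and if $A_i=\emptyset$ the condition is trivial, since agent $i$ then envies no one). Ranging over all pairs $(i,j)$ yields $\alpha$-\emph{EF1}.

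For the negative direction I would not construct anything new. The instance used in the proof of Proposition~\ref{prop::4.3} is built from additive cost functions, and additivity is a special case of submodularity; that example already furnishes an EF1 allocation whose EFX-approximation ratio $p/2$ diverges as $p\to\infty$. The same instance therefore certifies, in the submodular setting, that an EF1 allocation need not be $\beta$-\emph{EFX} for any fixed $\beta\ge 1$.

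The only genuinely new ingredient relative to the additive case is the treatment of a bundle all of whose chores have zero singleton cost. In the additive world such a bundle automatically has zero total cost, so the implication is immediate and can be passed over; under submodular costs one must rule out a positive bundle cost arising entirely from zero-cost singletons, which is precisely what subadditivity supplies. I expect this degenerate case to be the only subtle point, with everything else a direct restatement of the additive reasoning.
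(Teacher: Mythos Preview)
Your proof is correct and follows the same approach as the paper, which simply states that the positive part ``follows directly from definitions'' and reuses the additive counterexample from Proposition~\ref{prop::4.3} for the negative part. Your explicit treatment of the degenerate case where every chore in $A_i$ has zero singleton cost (handled via subadditivity) is a welcome bit of rigor that the paper's terse proof glosses over, but the underlying argument is the same.
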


\begin{proof}
The positive part follows directly from definitions of EFX and EF1. As for the impossibility result, the
instance in the proof of Proposition \ref{prop::4.3} is established in the case of additive. Since
an additive function is also submodular, we also have such an impossibility result here.
\end{proof}

Next, we study the notion of PMMS in terms of its approximation guarantee on EFX and EF1. Recall
the results of Propositions~\ref{prop::5.1} and \ref{coro::5.1}, a PMMS allocation is stricter than
EFX and EF1 in the additive setting. However, in the case of submodular, this relationship does not
hold any more, and even worse, PMMS provides non-trivial guarantee on neither EFX nor EF1.

\begin{proposition}\label{prop::6.3}
When agents have submodular cost functions, a \textnormal{PMMS} allocation is not necessarily a
$\beta$-\textnormal{EF1} or $\beta$-\textnormal{EFX} allocation for any $\beta \geq 1$.
\end{proposition}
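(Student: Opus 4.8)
The plan is to exploit the fact that, by Proposition~\ref{prop::6.2}, $\beta$-EFX is stricter than $\beta$-EF1, so it suffices to produce a single instance with submodular costs admitting a PMMS allocation $\mathbf{A}$ that violates $\beta$-EF1 for every finite $\beta$. Recall that $\beta$-EF1 can fail for all $\beta\ge 1$ precisely when some agent $i$ and some agent $j$ satisfy $c_i(A_j)=0$ while $c_i(A_i\setminus\{e\})>0$ for every $e\in A_i$; then the EF1 ratio $c_i(A_i\setminus\{e\})/c_i(A_j)$ is infinite no matter which chore is removed. The whole construction is organised around manufacturing such a pair while keeping PMMS satisfied.

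The instance I would use is an ``indivisible-coverage'' cost function. Take $E=\{e_1,e_2\}$ and $n\ge 2$ agents, and let agent $1$ have the cost $c(S)=\min\{|S|,1\}$, i.e.\ any nonempty bundle costs exactly $1$; the remaining agents may carry the same cost. That $c(\cdot)$ is monotone with diminishing marginals (the first chore added to an empty set has marginal $1$, every later chore marginal $0$), hence submodular. Assign $A_1=E$ and $A_j=\emptyset$ for every $j\ne 1$. Crucially $|A_1|\ge 2$, so deleting any single chore from $A_1$ still leaves a nonempty set.

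To verify PMMS I would check each ordered pair. For agent $1$ and any $j\ne 1$ we have $A_1\cup A_j=E$, and since every nonempty part of a $2$-partition of $E$ costs $1$ while an empty part costs $0$, every partition in $\Pi_2(E)$ has larger-bundle cost exactly $1$; hence $\textnormal{MMS}_1(2,A_1\cup A_j)=1=c_1(A_1)$ and the PMMS inequality holds with equality. For any other agent $j$, $c_j(A_j)=c_j(\emptyset)=0$, so its PMMS constraints are trivially met. Thus $\mathbf{A}$ is (exactly) PMMS. For the failure of EF1, note $c_1(A_j)=0$ for $j\ne 1$, whereas for every $e\in A_1$ the set $A_1\setminus\{e\}$ is nonempty and so $c_1(A_1\setminus\{e\})=1>0$; hence no finite $\beta$ makes $\mathbf{A}$ satisfy $\beta$-EF1, and therefore, by Proposition~\ref{prop::6.2}, none makes it $\beta$-EFX either.

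The only genuinely delicate point is the choice of cost function: I must pick a submodular $c$ that simultaneously keeps the PMMS constraint satisfiable (indeed tight) and makes single-item removal useless for EF1. This is exactly where the additive argument of Proposition~\ref{prop::5.1} breaks down --- there, PMMS forced EFX by comparing $A_i$ against the partition $\{A_i\setminus\{e^*\},A_j\cup\{e^*\}\}$, using that moving $e^*$ into $A_j$ raises $c_i(A_j)$ additively. Under saturating submodular costs, moving $e^*$ to an empty $A_j$ adds nothing once the receiving bundle is already nonempty, so PMMS no longer controls the EF1 deficit. The indicator cost $\min\{|S|,1\}$ is the extremal instance of this phenomenon, which is why it yields the strongest possible (unbounded) separation.
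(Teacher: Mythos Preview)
Your proof is correct and follows essentially the same approach as the paper: a capped-cardinality submodular cost with the ``all-to-one'' allocation, which is PMMS but leaves agent~1's post-removal cost positive against an empty bundle. The paper uses three chores with $c(S)=\min\{|S|,2\}$, while your instance with two chores and $c(S)=\min\{|S|,1\}$ is the minimal version of the same construction.
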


\begin{proof}
By Proposition~\ref{prop::6.2}, for any $\beta \geq 1$, $\beta$-EFX is stronger than $\beta$-EF1,
and thus it suffices to show the approximation guarantee for EF1 is unbounded. In what follows, we
provide an instance that has a PMMS allocation with only trivial guarantee on EF1.

Consider an instance with two agents and a set $E=\left\{ e_1, e_2, e_3\right\}$ of chores. Agents
have identical cost function $c(S)=\min\{|S|,2\}$. Since $|S|$ is monotone and submodular, it
follows that $c(\cdot)$ is also monotone and submodular (see Footnote~\thefootnote).

Next, we prove this instance having a PMMS allocation whose guarantee for EF1 is unbounded. Since
in total, we have three chores, and thus in any 2-partition there always exists an agent receiving
at least two chores. Thus, we can claim that $\textnormal{MMS}_i(2, E) = 2$ for any $i \in [2]$.
Then, consider an allocation $\mathbf{B} = (B _1, B _ 2)$ with $B _ 1 = E$ and $B_ 2 =\emptyset$.
Allocation $\mathbf{B}$ is PMMS since, for any $i\in[2]$, $\max \left\{ c(B _ 1), c (B  _2)\right\}
= \textnormal{MMS} _i (2,E) = 2$ holds. However, bundle $B _ 2$ is empty and so $c  _  1 ( B  _ 2 )
=c( B  _ 2  )  = 0$. Then, no matter which chore is removed from bundle $B _ 1$, agent 1 still has
a positive cost, which implies an unbounded approximation guarantee for the notion of EF1.
\end{proof}

The approximation guarantee of an MMS allocation for EFX, EF1 and PMMS can be directly derived from
the results in the additive setting. According to Propositions~\ref{prop::5.6} and \ref{prop::5.7},
in additive setting MMS does not provide non-trivial guarantee on all other three notions. Since
additive functions belong to the class of submodular functions, we directly have the following two
results.

\begin{proposition}\label{prop::sub_MMS_PMMS}
When agents have submodular cost functions, an \textnormal{MMS} allocation is not necessarily
$\beta$-\textnormal{PMMS} for any $1 \leq \beta < 2$.
\end{proposition}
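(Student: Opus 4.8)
The plan is to transfer the additive counterexample of Proposition~\ref{prop::5.6} to the submodular regime, exploiting the fact that every additive cost function is submodular, as recorded in Section~\ref{sec:preliminaries}. No fresh construction is required; the whole argument is a one-line lift of the additive result.

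First I would take the instance and the allocation $\mathbf{B}$ from the proof of Proposition~\ref{prop::5.6}, which uses $n\geq 3$ agents and purely additive costs. The crucial observation is that the definitions of MMS and of PMMS (Definition~\ref{def2.5}) are phrased for arbitrary monotone cost functions and nowhere invoke additivity. Hence every maximin-share value computed in that proof---in particular $\textnormal{MMS}_1(n,E)=p+1$ and $\textnormal{MMS}_1(2,B_1\cup B_2)=\lceil\frac{p+2}{2}\rceil$---is unchanged when the same costs are read as a submodular instance. Consequently $\mathbf{B}$ is still an MMS allocation, while its PMMS ratio $c_1(B_1)/\textnormal{MMS}_1(2,B_1\cup B_2)=(p+1)/\lceil\frac{p+2}{2}\rceil$ still tends to $2$ as $p\to\infty$.

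This shows that under submodular costs an MMS allocation need not be $\beta$-PMMS for any $\beta<2$, as claimed. The only ingredient beyond the additive proof is the inclusion ``additive $\Rightarrow$ submodular,'' which is immediate, so I anticipate no genuine obstacle here; the lower bound of $2$ dovetails with the trivial upper bound of $2$ from Lemma~\ref{lemma::3.2} (any allocation is $2$-PMMS under subadditive, hence submodular, costs), giving the matching value reported in Figure~\ref{Figure::submodular}.
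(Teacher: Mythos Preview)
Your proposal is correct and matches the paper's own argument essentially verbatim: the paper also observes that additive functions are submodular and therefore the additive counterexample of Proposition~\ref{prop::5.6} directly yields the submodular result, with no new construction needed.
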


\begin{proposition}\label{prop::sub_MMS_EFX_EF1}
When agents have submodular cost functions, an \textnormal{MMS} allocation is not necessarily a
$\beta$-\textnormal{EF1} or $\beta$-\textnormal{EFX} allocation for any $\beta \geq 1$.
\end{proposition}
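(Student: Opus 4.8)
The plan is to reduce to the additive case, exploiting that additive cost functions form a subclass of submodular ones, as recorded in Section~\ref{sec:preliminaries} (``additivity implies submodularity''). Since every additive function is in particular submodular, any counterexample built from additive cost functions is automatically a legitimate counterexample in the submodular setting, so I would simply reuse the impossibility instance already established in the additive regime rather than construct anything new.

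Concretely, first I would observe that by Proposition~\ref{prop::6.2}, $\beta$-EFX is stricter than $\beta$-EF1 in the submodular setting, so it suffices to exhibit an MMS allocation whose approximation guarantee for EF1 is unbounded. Then I would invoke Proposition~\ref{prop::5.7}: its witnessing instance (the $n$-agent, $(p+2n-1)$-chore construction inherited from the proof of Proposition~\ref{prop::5.6}) uses purely additive cost functions. Because those cost functions are themselves submodular, the allocation $\mathbf{B}$ produced there remains a valid MMS allocation under the submodular umbrella, and the computation showing $\frac{c_1(B_1\setminus\{e_1\})}{c_1(B_2)} = p \to \infty$ transfers verbatim. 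Hence the same $\mathbf{B}$ is MMS yet fails to be $\beta$-EF1, and therefore also fails to be $\beta$-EFX, for every $\beta \geq 1$.

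I expect essentially no obstacle here, as the entire argument is the one-line containment of additive within submodular together with the fact that the quantitative work has already been carried out in the additive case. The only point deserving a moment of care is confirming that the MMS value and the EF1 violation in that instance are derived solely from the additive cost data (they are), so the transfer to the submodular setting requires no re-verification.
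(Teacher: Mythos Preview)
Your proposal is correct and matches the paper's own justification essentially verbatim: the paper simply notes, in the text preceding this proposition, that additive functions belong to the class of submodular functions and hence the impossibility from Proposition~\ref{prop::5.7} carries over directly. Your additional remark invoking Proposition~\ref{prop::6.2} to reduce EFX to EF1 is fine and mirrors the reduction used in the additive proof.
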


At this stage, what remains is the approximation guarantee of PMMS on MMS. Before presenting the
main result, we provide a lemma, which states that the quantity of MMS is monotonically
non-decreasing on the set of chores to be assigned.

\begin{lemma}\label{Lemma::monotoneMMS}
Given a monotone function $c(\cdot)$ defined on ground set $E$, for any subsets $S \subseteq T
\subseteq E $, if quantities $\textnormal{MMS}(2, S)$ and $\textnormal{MMS}(2, T)$ are computed
based on function $c(\cdot)$, then $\textnormal{MMS} (2, S) \leq \textnormal{MMS}(2, T)$.
\end{lemma}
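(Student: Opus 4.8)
The plan is to take the optimal $2$-partition witnessing $\textnormal{MMS}(2,T)$ and restrict it to $S$, then invoke monotonicity of $c(\cdot)$. First I would let $(T_1,T_2)\in\Pi_2(T)$ be a partition attaining the minimum in the definition of $\textnormal{MMS}(2,T)$, so that $\max\{c(T_1),c(T_2)\}=\textnormal{MMS}(2,T)$. Setting $S_1:=T_1\cap S$ and $S_2:=T_2\cap S$, the pair $(S_1,S_2)$ is a legitimate $2$-partition of $S$: the two parts are disjoint because $T_1$ and $T_2$ are, and they cover $S$ because $S\subseteq T=T_1\cup T_2$; note that one of them may be empty, which is permitted.

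Next I would compare costs bundle by bundle. Since $S_1\subseteq T_1$ and $S_2\subseteq T_2$, monotonicity of $c(\cdot)$ gives $c(S_1)\le c(T_1)$ and $c(S_2)\le c(T_2)$, hence $\max\{c(S_1),c(S_2)\}\le\max\{c(T_1),c(T_2)\}=\textnormal{MMS}(2,T)$. Because $\textnormal{MMS}(2,S)$ is the minimum of $\max\{c(B_1),c(B_2)\}$ taken over all $(B_1,B_2)\in\Pi_2(S)$, and $(S_1,S_2)$ is one such partition, I obtain $\textnormal{MMS}(2,S)\le\max\{c(S_1),c(S_2)\}\le\textnormal{MMS}(2,T)$, which is exactly the claimed inequality.

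I do not anticipate a genuine obstacle here; the argument is essentially a one-line restriction-plus-monotonicity observation. The only point that warrants a moment's care is confirming that restricting the optimal partition of $T$ yields an admissible partition of $S$ (in particular that empty bundles are allowed, consistent with the convention $c(\emptyset)=0$ used elsewhere in the paper), after which the inequality direction follows immediately. In particular, the conclusion uses \emph{only} monotonicity of $c(\cdot)$ and makes no appeal to submodularity, so the lemma holds at the stated level of generality.
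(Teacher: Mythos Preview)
Your proof is correct and follows essentially the same approach as the paper: restrict an optimal $2$-partition $(T_1,T_2)$ of $T$ to $S$ via intersection, then use monotonicity to bound $\max\{c(T_1\cap S),c(T_2\cap S)\}\le\max\{c(T_1),c(T_2)\}$. The paper's argument is identical in structure and brevity, so there is nothing to add.
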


\begin{proof}
Let $\{ T _ 1, T _ 2 \}$ be the 2-partition of set $T$ and moreover it defines $\textnormal{MMS}(2,
T) = c ( T _ 1 ) \geq c ( T _ 2 )$. We then consider $ \{T _ 1 \cap S, T _ 2 \cap S \}$, which is,
clearly, a 2-partition of $S$ due to $S \subseteq T $. According to the definition of MMS, we have
$$
\textnormal{MMS}(2, S) \leq \max\{ c(T_1\cap S), c(T _ 2 \cap S )\}\leq \max\{ c(T_1), c(T_2)\}
= \textnormal{MMS}(2, T),
$$
where the second inequality transition is because $c(\cdot)$ is monotone.
\end{proof}

\begin{proposition}\label{prop::sub_PMMS_MMS}
When agents have submodular cost functions, for any $1\leq  \alpha \leq 2$, an
$\alpha$-\textnormal{PMMS} allocation is also $\min\{n,\alpha\lceil \frac{n}{2}
\rceil\}$-\textnormal{MMS}, and this guarantee is tight.
\end{proposition}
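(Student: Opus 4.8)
The plan is to prove the two halves of $\min\{n,\alpha\lceil n/2\rceil\}$ separately and then take the minimum. The bound $n$ is immediate: every allocation is $n$-MMS by Lemma~\ref{lemma::3.2} (submodularity implies subadditivity), so $c_i(A_i)\le n\cdot\textnormal{MMS}_i(n,E)$ holds without using the PMMS hypothesis at all. For the bound $\alpha\lceil n/2\rceil$ I would fix an agent $i$, pick any $j\neq i$, and chain three inequalities. First, $\alpha$-PMMS gives $c_i(A_i)\le\alpha\cdot\textnormal{MMS}_i(2,A_i\cup A_j)$. Second, since $A_i\cup A_j\subseteq E$, Lemma~\ref{Lemma::monotoneMMS} yields $\textnormal{MMS}_i(2,A_i\cup A_j)\le\textnormal{MMS}_i(2,E)$. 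The heart of the argument is the third inequality, $\textnormal{MMS}_i(2,E)\le\lceil n/2\rceil\cdot\textnormal{MMS}_i(n,E)$.

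This key inequality I would prove by a grouping argument. Let $(P_1,\dots,P_n)$ be the $n$-partition realizing $\textnormal{MMS}_i(n,E)$, so that $c_i(P_k)\le\textnormal{MMS}_i(n,E)$ for every $k$. Split $\{1,\dots,n\}$ into two blocks of sizes $\lceil n/2\rceil$ and $\lfloor n/2\rfloor$, and let $Q_1,Q_2$ be the unions of the corresponding $P_k$'s; then $(Q_1,Q_2)$ is a $2$-partition of $E$. By subadditivity each $c_i(Q_\ell)\le\sum_{k}c_i(P_k)\le\lceil n/2\rceil\cdot\textnormal{MMS}_i(n,E)$, whence $\textnormal{MMS}_i(2,E)\le\max\{c_i(Q_1),c_i(Q_2)\}\le\lceil n/2\rceil\cdot\textnormal{MMS}_i(n,E)$. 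Chaining the three inequalities gives $c_i(A_i)\le\alpha\lceil n/2\rceil\cdot\textnormal{MMS}_i(n,E)$, and combining with the trivial $n$-bound yields the factor $\min\{n,\alpha\lceil n/2\rceil\}$.

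For tightness I would build a submodular instance in which only agent $1$ has an interesting cost function, of the capped-cardinality (coverage) type used in Proposition~\ref{prop::6.1}: arrange the chores in rows and set $c_1(S)$ equal to the number of rows that $S$ meets, which is monotone and submodular. With this function an optimal $n$-partition is balanced across rows while any $2$-partition is forced to pile $\lceil n/2\rceil$ rows onto one side, so that $\textnormal{MMS}_1(2,E)=\lceil n/2\rceil\cdot\textnormal{MMS}_1(n,E)$ exactly; using enough rows lets me make $c_1(A_1)$ equal to the prescribed multiple $\alpha\lceil n/2\rceil$ of $\textnormal{MMS}_1(n,E)$. The crucial design choice is to distribute the rows that $A_1$ misses so that \emph{every} other bundle $A_j$ meets all of them; then each combined bundle $A_1\cup A_j$ spans all rows and inherits the full value $\textnormal{MMS}_1(2,A_1\cup A_j)=\textnormal{MMS}_1(2,E)$, which makes the allocation exactly $\alpha$-PMMS while driving its MMS-ratio to $\alpha\lceil n/2\rceil$. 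The remaining agents are given zero-cost functions on their own bundles (e.g.\ $c_j(S)=|S\setminus A_j|$), so they satisfy PMMS and MMS trivially. In the regime $\alpha\lceil n/2\rceil\ge n$ I would instead let $A_1$ meet every row, so the ratio is $n$ and the allocation is automatically $\alpha$-PMMS.

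The step I expect to be the main obstacle is this tightness construction, not the upper bound. Two points need care. Realizing the exact multiple $\alpha\lceil n/2\rceil$ forces me to reconcile the integrality of a pure coverage function with arbitrary $\alpha$, which I would handle by refining the rows (using many small rows so that $c_1(A_1)$ can hit any rational target while $\textnormal{MMS}_1(2,E)/\textnormal{MMS}_1(n,E)$ stays at the parity-sensitive value $\lceil n/2\rceil$). And I must ensure that every pair $A_1\cup A_j$, not just one, keeps its $2$-MMS at the maximal level; this is exactly the constraint that ties up mass in the other bundles and dictates how finely the missed rows must be spread. Verifying that these two requirements can be met simultaneously, and that the switch between the two expressions in $\min\{n,\alpha\lceil n/2\rceil\}$ occurs precisely where claimed, is where the real work lies.
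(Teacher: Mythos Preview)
Your upper-bound argument is exactly the paper's: the trivial $n$ from Lemma~\ref{lemma::3.2}, then the chain $\alpha$-PMMS $\to$ Lemma~\ref{Lemma::monotoneMMS} $\to$ the grouping inequality $\textnormal{MMS}_i(2,E)\le\lceil n/2\rceil\cdot\textnormal{MMS}_i(n,E)$ via subadditivity on a split of an optimal $n$-partition into two blocks. Your tightness sketch is also the same idea (coverage-type submodular cost for agent~1, zero-cost-on-own-bundle for the others, arrange that every $A_1\cup A_j$ spans all ``rows''), and the paper likewise only carries it out for even $n$; the one place the paper is cleaner is that instead of refining rows to approximate an arbitrary $\alpha$, it gives two of the columns weights $\delta$ and $1-\delta$ with $\delta=\alpha n/2-\lfloor\alpha n/2\rfloor$, so $c_1(A_1)=\lfloor\alpha n/2\rfloor+\delta=\alpha n/2$ on the nose without any limiting argument.
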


\begin{proof}
We first prove the upper bound. According to Lemma~\ref{lemma::3.2}, any allocation is $n$-MMS and
so what remains is to prove the upper bound of $\alpha \lceil \frac{n}{2} \rceil$. Fix agent $i$
with cost function $c_i(\cdot)$. Suppose $n$-partition $\{ T _ 1, \ldots, T _ n  \}$ defines
$\textnormal{MMS}_i(n, E)$ and w.l.o.g, we assume $ c _ i ( T _ 1 ) \geq c _ i ( T _ 2 )\geq \cdots
\geq c _ i ( T _ n )$, i.e., $ c _ i ( T _ 1 ) = \textnormal{MMS}_i(n, E)$. Then, we let
$2$-partition $\{ Q_1, Q _ 2\}$ defines $\textnormal{MMS}_i(2, E)$ and $ c _ i (Q_1) \geq c _ i (Q
_ 2 )$, i.e., $ c _ i (Q _ 1 ) = \textnormal{MMS}_i(2, E)$. Let $\mathbf{A}$ be an arbitrary
$\alpha$-PMMS allocation, and accordingly, for any $ j \neq i $, we have $ c _ i ( A _ i ) \leq
\alpha\cdot\textnormal{MMS} _ i ( 2, A _ i \cup A _ j )$. Since $A _ i \cup A _ j $ is a subset of
$E$, according to Lemma~\ref{Lemma::monotoneMMS}, we have $\textnormal{MMS}_i (2, A _ i \cup A _ j
) \leq \textnormal{MMS}_i ( 2, E ) $. We then construct an upper bound of $\textnormal{MMS}_i(2,
E)$ through partition $ \{ T _ 1, \ldots, T _ n\}$.

Let us consider a 2-partition $\{ B_1, B _ 2 \}$ of $E$ with $ B _ 1 = \{ T _ 1, T _ 2, \ldots, T _
{ \lceil \frac{n}{2} \rceil}\}$, $ B _ 2 = \{ T _{\lceil \frac{n}{2} \rceil  + 1}, \ldots, T _ n
\}$. Then, the following holds:
$$
\begin{aligned}
\max\{ c _ i ( B _ 1 ), c _ i ( B _ 2 )\}
    &= \max \{ c _ i ( \cup_{ j = 1} ^ { \lceil \frac{n}{2} \rceil} T _ j ), c _ i ( \cup _ { j
     = \lceil \frac{n}{2} \rceil + 1} ^ {n} T _ j  ) \} \\
	& \leq \max \{ \sum _ {j = 1} ^ {\lceil \frac{n}{2} \rceil} c _ i ( T _ j ), \sum _ { j
     = \lceil \frac{n}{2} \rceil + 1} ^ {n} c _ i ( T _ j )\}
	 \leq \lceil \frac{n}{2} \rceil\cdot c _ i ( T _ 1 ),
\end{aligned}
$$
where the first inequality transition is due to subadditivity of $c_i(\cdot)$ and the second
inequality transition is because $ c _ i ( T _ 1 ) \geq c _ i ( T _ 2 ) \cdots \geq c _ i ( T _ n
)$. Recall $c _ i ( Q _ 1 ) = \textnormal{MMS}_i(2, E)\leq \max\{ c _ i ( B _ 1), c _ i ( B _ 2 )
\}$, and accordingly we have $ \textnormal{MMS}_i(2, E) \leq \lceil \frac{n}{2} \rceil \cdot c _ i
(T _ 1 ) = \lceil \frac{n}{2} \rceil \cdot \textnormal{MMS}_i( n ,E )$. Therefore, for any $j\neq
i$, the following holds:
$$
\frac{c _ i ( A _ i  )}{\textnormal{MMS} _ i ( n , E )} \leq
\frac{ \alpha\cdot\textnormal{MMS}_ i ( 2, A _ i \cup A _ j )}{\textnormal{MMS} _ i ( n, E )}
\leq \frac{ \alpha \cdot \textnormal{MMS}_i(2, E )}{\textnormal{MMS} _ i (n ,E )}
\leq \alpha \cdot \lceil \frac{n}{2} \rceil.
$$

As for the lower bound, it suffices to show that for any $\alpha \in [1,2]$, there exists an
$\alpha$-PMMS allocation with approximation guarantee $\alpha\lceil \frac{n}{2} \rceil $ of MMS
when $\alpha \lceil \frac{n}{2} \rceil \leq n $. Let us consider an instance with $n$ (even) agents
and a set $E$ of chores with $|E| = n(n+1)$. Since $\alpha \leq 2$ and $n$ is even, clearly we have
$ \alpha \lceil \frac{n}{2} \rceil \leq n$. Chores are placed in $n\times (n+1)$ matrix $E=\left[
e_{ij}\right]_{n\times(n+1)}$. For $j\in [n+1]$, denote by $ P_j $ the $j$-th column, i.e., $  P _
j = \{ e _ { 1j }, e _{2j}, \ldots, e _ {nj}\}$. We concentrate on allocation $\mathbf{A}$  with $A
_ 1 = P _ 1 \cup \cdots \cup P _{\lfloor \alpha\frac{n}{2} \rfloor} \cup P _ n $, $A _ j = \left\{
e_{ j, \lfloor \alpha\frac{n}{2} \rfloor + 1} , \ldots, e _{j, n-1}, e _ { j, n + 1}\right\}$ for any $2\leq j
\leq n -1$, and $  A _ n = \left\{ e _ {  n, \lfloor \alpha\frac{n}{2} \rfloor + 1}, \ldots,e _{n, n-1}, e _ {
n, n + 1} \right\} \cup \left\{ e _ {  1, \lfloor \alpha\frac{n}{2} \rfloor + 1}, \ldots, e _{1, n-1}, e _ { 1,
n + 1}\right\}$. For any $2 \leq i \leq n$, agent $i$ has additive cost function $ c _ i (\cdot)$
with $c _ i ( e ) = 0 $ for any $ e \in A _ i $, and $ c _ i ( e ) = 1  $ for any $ e \in E
\setminus A _ i $. Then, for every $2 \leq i \leq n$, agent $i$ has an additive, clearly monotone
and submodular, cost function, and violates neither PMMS nor MMS due to $ c _ i ( A _ i ) =0 $.
Consequently, the approximation guarantee of $\mathbf{A}$ on both PMMS and MMS are determined by
agent 1.

As for the cost function $c _ 1 ( \cdot ) $ of agent 1, for any $S \subseteq E$, we let
$$
c _ 1 ( S ) = \sum _ { j = 1} ^ {n-1} \min \{ | S \cap P _ j |, 1 \} +
\delta \cdot \min \{ | S \cap P _ n |, 1 \} + (1-\delta) \cdot \min\{ | S \cap P _{ n + 1}|, 1 \},
$$
where $\delta = \alpha\frac{n}{2} -\lfloor \alpha\frac{n}{2} \rfloor$. Function $c _ 1 (\cdot)$ is
clearly monotone. As in the proof of Proposition~\ref{prop::6.1} (see Footnote~\thefootnote), $c _
1 (\cdot)$ as a conical combination of submodular functions is also submodular.

We argue $\mathbf{A}$ is an $\alpha$-PMMS allocation with approximation guarantee
$\alpha\frac{n}{2}$ on the notion of MMS. In fact, under allocation $\mathbf{A}$, one can compute $
c_1 ( A _ 1 ) = \lfloor \alpha \frac{n}{2} \rfloor + \delta = \alpha \frac{n}{2} $ and  $c _ 1 ( A
_ 1 \cup A _ j ) = c _ 1(E) = n  $ for any $2 \leq j \leq n$. Then, for any $ j \geq 2$, due to
Lemma~\ref{obs::3.1}, it holds that $\textnormal{MMS}_1(2, A _ 1 \cup A _ j ) \geq \frac{n}{2}$,
which then imply $  c _ 1 ( A _ 1 ) \leq \alpha \textnormal{MMS}_1(2, A _ 1 \cup A _ j )$. Thus,
allocation $\mathbf{A}$ is $\alpha$-PMMS. As for the quantity of $\textnormal{MMS}_1(n, E)$,
consider partition $\{ B  _ i \}_{ i \in [n]}$ with $ B _ i = P _ i $ for $  1 \leq i \leq n  -1$
and $ B _ n = P _ n \cup P _ { n + 1 }$. It is not hard to verify $ c _ 1 ( B _ i ) = 1 $ for any $
i \in [n]$. According to Lemma~\ref{obs::3.1}, we have $\textnormal{MMS}_1(n ,E) \geq \frac{1}{n} c
_ 1 ( E ) =  1$. Hence, partition $\{B  _ i \}_{i\in [n]}$ defines $\textnormal{MMS}_1(n, E) = 1$,
and accordingly, the approximation guarantee of $\mathbf{A}$ for MMS is $\alpha \frac{n}{2}$,
equivalent to $ \alpha \lceil \frac{n}{2} \rceil$ since $n$ is even.
\end{proof}

We remark that all statements in this section are still true if agents have subadditive cost
functions. Results in this section show that although PMMS (or MMS) is proposed as relaxation of EF
under additive setting, there are few connections between PMMS (or MMS) and EF in the submodular
setting. This motivates new submodular fairness notions which is not only a relaxation of EF but
also inherit the spirit of PMMS (or MMS).

\section{Price of fairness under additive setting}

After having compared the fairness criteria between themselves, in this section we study the
efficiency of these fairness criteria in terms of the price of fairness with respect to social
optimality of an allocation.

\subsection{Two agents}

We start with the case of two players. Our first result concerns EF1.

\begin{proposition}\label{thm::7.1}
When $n=2$ and agents have additive cost functions, the price of \textnormal{EF1} is $5/4$.
\end{proposition}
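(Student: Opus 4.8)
The plan is to prove that the price of EF1 for two agents with additive costs equals $5/4$, which requires both an upper bound (every instance admits an EF1 allocation whose social cost is at most $5/4$ times optimal) and a matching lower bound (an instance where every EF1 allocation costs at least $5/4$ times optimal).

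For the lower bound I would construct an explicit instance. With normalized costs $c_1(E)=c_2(E)=1$, I want an instance where the optimal (unconstrained) allocation has very low social cost but is badly non-EF1, forcing any EF1 allocation to shift cost onto an agent who values it highly. A natural template uses a few chores where the two agents have sharply opposed valuations: the optimum assigns each chore to the agent who finds it cheap, but EF1 forbids one agent from bearing too large a share relative to the other's bundle. I would pick small symmetric costs so the arithmetic yields exactly $5/4$; for instance three or four chores with carefully chosen values, then enumerate the (few) EF1 allocations and verify the cheapest among them has social cost $\tfrac54\,\mathrm{OPT}(I)$.

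For the upper bound, the key structural fact is that for $n=2$ an EF1 allocation always exists (e.g.\ via the round-robin result of \citet{azizFairAllocationIndivisible2019} cited in the excerpt), so $\mathrm{EF1}(I)\neq\emptyset$. I would start from an optimal allocation $\mathbf{A}^*=(A_1^*,A_2^*)$ and argue that if it is already EF1 we are done with ratio $1$; otherwise I transform it into an EF1 allocation while controlling the cost blow-up. The plan is to set up the relevant quantities: write $a=c_1(A_1^*)$, $b=c_2(A_2^*)$ so $\mathrm{OPT}=a+b$, use $c_i(A_i)+c_i(A_j)=1$ from normalization, and translate the EF1 condition $c_i(A_i\setminus\{e\})\le c_i(A_j)$ into a bound of the form $c_i(A_i)\le \tfrac12(1+c_i(\bar e))$ where $\bar e$ is the largest chore in $A_i$. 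The social cost of any EF1 allocation is then bounded by combining these two inequalities across agents, and I expect the worst case to balance when each agent's overcharge is tied to a single large chore, producing the constant $5/4$.

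\textbf{The main obstacle} I anticipate is the upper-bound argument: showing that one can always \emph{reach} an EF1 allocation whose social cost stays within $5/4$ of optimum, rather than merely that EF1 allocations exist. The difficulty is that an arbitrary EF1 allocation could be far from optimal, so I need to either exhibit a specific EF1 allocation (derived by modifying $\mathbf{A}^*$) and bound its cost, or prove a clean lemma that the \emph{best} EF1 allocation cannot cost more than $\tfrac54\mathrm{OPT}$. I would likely proceed by a careful case analysis on which agent violates EF1 and on the size of the offending bundle, using the single-large-chore characterization above together with the identity $c_i(A_j)=1-c_i(A_i)$ to pin down the extremal configuration; the two-agent restriction keeps the case tree small enough that this should close cleanly at exactly $5/4$.
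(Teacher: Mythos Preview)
Your proposal is essentially a plan rather than a proof, and the upper-bound portion has a real gap.

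For the lower bound you only describe a template (``three or four chores with carefully chosen values'') but never exhibit an instance; this is required. The paper uses three chores with $c_1=(0,\tfrac12,\tfrac12)$ and $c_2=(\tfrac13-2\epsilon,\tfrac13+\epsilon,\tfrac13+\epsilon)$: the optimum gives $e_1$ to agent~1 and $e_2,e_3$ to agent~2 with cost $\tfrac23+2\epsilon$, but EF1 forces one of $e_2,e_3$ onto agent~1, raising the minimum EF1 cost to $\tfrac56+\epsilon$.

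For the upper bound, your intended route---bound an arbitrary EF1 allocation via $c_i(A_i)\le\tfrac12(1+c_i(\bar e))$ and then ``case-analyse which agent violates EF1''---does not by itself yield $\tfrac54$. Summing that inequality over both agents only gives $\mathrm{SC}\le 1+\tfrac12(c_1(\bar e_1)+c_2(\bar e_2))$, which can be close to $2$, and you have no mechanism tying the large-chore costs to $\mathrm{OPT}$. The missing structural idea is that the optimal allocation and a good EF1 allocation can both be taken as \emph{prefix/suffix splits} once chores are sorted by the ratio $c_1(e)/c_2(e)$. The paper sorts the chores this way, observes that the optimum is the split at the index $s$ where the ratio crosses $1$, and---when this is not EF1---slides the cut rightward to the largest index $f\ge s$ with $c_2(\{e_{f+2},\dots\})>c_2(\{e_1,\dots,e_f\})$. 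That threshold choice makes both agents EF1 simultaneously, and the ratio structure lets one bound the social-cost increase by $(1-c_1(O_1))/c_2(O_2)-1$ times $1-c_2(O_2)$; optimising over $c_1(O_1)\in[0,\tfrac12]$ and $c_2(O_2)\in(\tfrac12,1]$ then gives exactly $\tfrac54$ at $c_1(O_1)=0$, $c_2(O_2)=\tfrac23$. Your proposal lacks both the sorted-ratio construction and the resulting control linking the EF1 repair cost to $\mathrm{OPT}$, and the vague ``case analysis on which agent violates EF1'' does not substitute for it.
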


\begin{proof}
For the upper bound part, we analyze the allocation returned by algorithm $ALG_1$, whose detailed
description is in Appendix~\ref{sec:EF1-algorithm}. In this proof, we denote $L(k) = \{ e_1,
\ldots, e_ k\}$ and $ R(k) = \{ e_k, \ldots , e _ m \}$. We first show that $ALG_1$ is well-defined
and can always output an EF1 allocation. Note that $\mathbf{O}$ is the optimal allocation for the
underlying instance due to the order of chores. We consider the possible value of index $s$.
Because of the normalized cost function, trivially, $s<m$ holds. If $s=0$, $ALG_1$ outputs the
allocation returned by round-robin (line 6) and clearly, it's EF1. If the optimal allocation
$\mathbf{O}$ is EF1 (line 9), we are done. For this case, we claim that if $s = m-1$, then
$\mathbf{O}$ must be EF1. The reason is that for agent 1, his cost $c_1(O_1) \leq c_2(O_2) \leq
c_1(O_2)$ where the first transition due to line 1 of $ALG_1$, and thus he does not envy agent 2.
For agent 2, since he only receives a single chore in optimal allocation due to $s=m-1$, clearly,
he does not violate the condition of EF1, either. Thus, allocation $\mathbf{O}$ is EF1 in the case
of $s= m -1$. Next, we study the remaining case (lines 11--13) that can only happen when $1\leq
s\leq m-2$. We first show that the index $f$ is well-defined. It suffices to show $c_2(R(s+2)) >
c_2(L(s))$. For the sake of contradiction, assume $c_2(R(s+2)) \leq c_2(L(s))$. This is equivalent
to $c_2(O_2\setminus \{ e_{s+1}\}) \leq c_2(O_1)$, which means agent 2 satisfying EF1 in allocation
$\mathbf{O}$. Due to the assumption (line 1), $c_1(O_1) \leq c_2(O_2) \leq c_1(O_2)$ holds, and
thus, agent 1 is EF under the allocation $\mathbf{O}$. Consequently, the allocation $\mathbf{O}$ is
EF1, contradiction. Then, we prove allocation $\mathbf{A}$ (line 13) is EF1. According to the order
of chores, it holds that
$$
\frac{c_1(L(f))}{c_2(L(f))} \leq \frac{c_1(R(f+2))}{c_2(R(f+2))}.
$$
Since $c_2(R(f+2)) > c_2(L(f)) \geq 0$, this implies,
$$
\frac{c_1(L(f))}{c_1(R(f+2))} \leq \frac{c_2(L(f))}{c_2(R(f+2))}.
$$
By the definition of index $f$, we have $c_2(R(f+2)) > c_2(L(f))$ and therefore $c_1(L(f)) <
c_1(R(f+2))$ which is equivalent to $c_1(A_1 \setminus \{ e_{f+1}\}) < c_1(A_2)$. Thus, agent 1 is
EF1 under allocation $\mathbf{A}$. As for agent 2, if $f = m-2$, then $A_2 = 1$ and clearly, agent
2 does not violate the condition of EF1. We can further assume $f\leq m -3$. Since $f$ is the
maximum index satisfying $f \geq s$ and $c_2(R(f+2)) > c_2(L(f))$, it must hold that $c_2(R(f+3))
\leq c_2(L(f+1))$, which is equivalent to $c_2(A_2\setminus \{ e_{f+2}\}) \leq c_2(A_1)$ and so
agent 2 is also EF1 under allocation $\mathbf{A}$. 	

Next, we show the social cost of the allocation returned by $ALG_1$ is at most 1.25 times of the
optimal social cost. If $s=0$, both agents have the same cost profile, then any allocations have
the optimal social cost and we are done in this case. If allocation $\mathbf{O}$ is EF1, then
clearly, we are done. The remaining case is of lines 11--13 of $ALG_1$. Since $c_1(O_1) \leq
c_2(O_2) \leq c_1(O_2)$, we have $c_1(O_1) \leq \frac{1}{2}$. Notice that $\mathbf{O}$ is not EF1,
then $c_2(O_2) > \frac{1}{2}$ must hold; otherwise, $c_2(O_2) \leq c_2(O_1)$ and allocation
$\mathbf{O}$ is EF, contradiction. Therefore, under the case where allocation $\mathbf{O}$ is not
EF1, we must have $c_1(O_1) \leq \frac{1}{2}$ and $c_2(O_2) > \frac{1}{2}$. Due to $f+2 \geq s+1$
and the order of chores, it holds that
$$
\frac{c_1(R(f+2))}{c_2(R(f+2))} \geq \frac{c_1(O_2)}{c_2(O_2)}.
$$
This implies $c_1(R(f+2)) \geq \frac{c_1(O_2)}{c_2(O_2)}c_2(R(f+2))$, and equivalently,
$$c_1(A_1) = c_1(L(f+1)) \leq 1-\frac{c_1(O_2)}{c_2(O_2)}c_2(R(f+2)).
$$
Again, by the construction of $f$, we have
$$
c_2(A_2) = c_2(R(f+2)) > c_2(L(f)) \geq c_2(L(s)) = c_2(O_1).
$$
Therefore, we derive the following upper bound,
\begin{equation}\label{upper bound of A}
	\begin{aligned}
		c_1(A_1) + c_2(A_2) &\leq 1 - (\frac{c_1(O_2)}{c_2(O_2)} - 1) c_2(A_2)
		\leq 	1 - (\frac{c_1(O_2)}{c_2(O_2)} - 1) c_2(O_1)\\
		&= 1 -(\frac{1 - c_1(O_1)}{c_2(O_2)} - 1) (1- c_2(O_2)),\\
	\end{aligned}
\end{equation}
where the second inequality is due to $\frac{c_1(O_2)}{c_2(O_2)} \geq 1$ and $c_2(A_2) \geq
c_2(O_1)$. Based on (\ref{upper bound of A}), we have an upper bound on the price of EF1 as
follows:
\begin{equation}\label{upper bound of price EF1}
\textnormal{Price of EF1} \leq \frac{1 -(\frac{1- c_1(O_1)}{c_2(O_2)} - 1) (1- c_2(O_2))}{c_1(O_1) + c_2(O_2)}.
\end{equation}
Recall $ 0 \leq c_1(O_1) \leq \frac{1}{2} < c_2(O_2) \leq 1$. The partial derivatives of the
faction in (\ref{upper bound of price EF1}) with respect to $c_1(O_1)$ is equal to the following:
$$
\frac{1}{(c_1(O_1) + c_2(O_2))^2} (\frac{1}{c_2(O_2)} - 2).
$$
It is not hard to see this derivative has a negative value for any $\frac{1}{2}<c_2(O_2) \leq1$.
Thus, the fraction in (\ref{upper bound of price EF1}) takes maximum value only when $c_1(O_1) = 0$
and hence,
$$
\textnormal{Price of EF1} \leq \frac{3 - \frac{1}{c_2(O_2)}}{c_2(O_2)} - 1.
$$
Similarly, by taking the derivative with respect to $c_2(O_2)$, the maximum value of this
expression happens only when $c_2(O_2) = \frac{2}{3}$, then one can easily compute the maximum
value of the RHS of (\ref{upper bound of price EF1}) is $1.25$. Therefore, the $\textnormal{price
of EF1} \leq 1.25$.

As for the lower bound, consider an instance with a set $E =\{ e_1,e_2,e_3\}$ of three chores. The
cost function of agent 1 is $c_1(e_1) = 0$ and $c_1(e_2) = c_1(e_3) = \frac{1}{2}$. For agent 2,
his cost is $c_2(e_1) = \frac{1}{3} - 2\epsilon$ and $c_2(e_2)= c_2(e_3) = \frac{1}{3} + \epsilon$
where $\epsilon>0$ takes arbitrarily small value. An optimal allocation assigns chore $e_1$ to
agent 1 and the rest chores to agent 2, which yields the optimal social cost $\frac{2}{3} +
\epsilon$. However, this allocation is not EF1 since agent 2 envies agent 1 even removing one chore
from his bundle. To achieve EF1, agent 2 can not receive both of chores $e_2$ and $e_3$, and so,
agent 1 must receive one of chore $e_2$ and $e_3$. Therefore, the best EF1 allocation can be
assigning chore $e_1$ and $e_2$ to agent 1 and chore $e_3$ to agent 2 resulting in the social cost
$\frac{5}{6} + \epsilon$. Thus, the price of EF1 is at least $\frac{\frac{5}{6} +
\epsilon}{\frac{2}{3} + 2\epsilon} \rightarrow \frac{5}{4}$ as $\epsilon \rightarrow 0 $,
completing the proof.
\end{proof}

According to Propositions~\ref{prop::4.6} and \ref{prop::4.8}, EF1 implies 2-MMS and
$\frac{3}{2}$-PMMS. The following two propositions confirm an intuition --- if one relaxes the
fairness condition, then less efficiency will be sacrificed.
\begin{proposition}\label{PoFn=2::MMS}
When $n=2$ and agents have additive cost functions, the price of 2-\textnormal{MMS} is 1.
\end{proposition}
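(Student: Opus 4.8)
The plan is to show that the price of $2$-MMS is exactly $1$ by exhibiting, for every instance, a $2$-MMS allocation whose social cost already equals the optimal social cost. The price of fairness is always at least $1$, since for each instance $I$ we have $\min_{\mathbf{A}\in F(I)}\textnormal{SC}(\mathbf{A}) \geq \textnormal{OPT}(I)$ by definition of $\textnormal{OPT}$. Hence the whole task reduces to the matching upper bound of $1$, and the key is to recognize that $2$-MMS is an extremely weak requirement when $n=2$.

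For the upper bound I would invoke Lemma~\ref{lemma::3.2}, which asserts that under subadditive (in particular additive) cost functions \emph{any} allocation is $n$-MMS. Specializing to $n=2$, every allocation is automatically $2$-MMS, so the feasible set $F(I)$ of $2$-MMS allocations coincides with the set of all allocations. Consequently any cost-minimizing allocation attaining $\textnormal{OPT}(I)$ is itself $2$-MMS, giving $\min_{\mathbf{A}\in F(I)}\textnormal{SC}(\mathbf{A}) = \textnormal{OPT}(I)$ and hence ratio $1$ for every instance. Taking the supremum over $\mathcal{I}$ yields $\textnormal{PoF}=1$.

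It is worth spelling out why Lemma~\ref{lemma::3.2} applies so cleanly here: by the normalization $c_i(E)=1$ together with the bound $\textnormal{MMS}_i(2,E)\geq \tfrac{1}{2}c_i(E)$ from Lemma~\ref{obs::3.1}, any allocation $(A_1,A_2)$ satisfies $c_i(A_i)\leq c_i(E)=1\leq 2\cdot\textnormal{MMS}_i(2,E)$ for each agent $i$ by monotonicity, so the $2$-MMS condition holds with room to spare. There is essentially no technical obstacle; the entire content of the proposition is the observation that the fairness constraint imposes nothing at $n=2$. The only point requiring care is not to overgeneralize: for the stronger notions treated alongside it (e.g.\ the price of EFX and PMMS at $n=2$ in Proposition~\ref{thm::price of PMMS and MMS}), not every allocation is feasible, the cost-minimizing allocation may violate fairness, and a genuinely nontrivial trade-off argument is needed.
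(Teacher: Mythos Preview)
Your proposal is correct and follows exactly the paper's approach: invoke Lemma~\ref{lemma::3.2} (any allocation is $n$-MMS, hence $2$-MMS when $n=2$) to conclude that the optimal allocation itself is $2$-MMS, so the price of fairness is $1$. The paper's proof is the one-liner ``The proof directly follows from Lemma~\ref{lemma::3.2}''; your additional unpacking of why the lemma applies is accurate but not required.
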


\begin{proof}
The proof directly follows from Lemma~\ref{lemma::3.2}.
\end{proof}

\begin{proposition}\label{PoFn=2::PMMS}
When $n=2$ and agents have additive cost functions, the price of $\frac{3}{2}$-\textnormal{PMMS} is
$7/6$.
\end{proposition}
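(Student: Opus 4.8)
The plan is to prove matching lower and upper bounds of $7/6$. Throughout I use that for $n=2$ we have $A_i\cup A_j=E$, so PMMS and MMS coincide: an allocation is $\frac32$-PMMS iff $c_i(A_i)\le \frac32\,\textnormal{MMS}_i(2,E)$ for $i=1,2$. Writing $\mu_i:=\textnormal{MMS}_i(2,E)$, Lemma~\ref{obs::3.1} gives $\mu_i\ge\frac12$, so the constraint $c_i(A_i)\le\frac32\mu_i$ is tightest (namely $c_i(A_i)\le\frac34$) exactly when agent $i$ can split $E$ into two equal halves. I would also record a structural fact used on both sides: in the optimal (social-cost-minimising) allocation $\mathbf O=(O_1,O_2)$ at most one agent can violate $\frac32$-PMMS, since two violations would force $\textnormal{SC}(\mathbf O)=c_1(O_1)+c_2(O_2)>2\cdot\frac34=\frac32$, contradicting $\textnormal{OPT}=\sum_{e}\min\{c_1(e),c_2(e)\}\le c_2(E)=1$; the same computation shows that whenever $\mathbf O$ is infeasible one has $\textnormal{OPT}>\frac34$.

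For the \textbf{upper bound} I would start from $\mathbf O$. If $\mathbf O$ is $\frac32$-PMMS we are done with ratio $1$, so assume (w.l.o.g.) that only agent $2$ violates, i.e.\ $y:=c_2(O_2)>\frac32\mu_2$ while $c_1(O_1)\le\frac32\mu_1$. The idea is to repair agent $2$ by transferring a sub-bundle $S\subseteq O_2$ to agent $1$, chosen to minimise the social-cost increase $\delta:=c_1(S)-c_2(S)\ge0$ subject to $c_2(O_2\setminus S)\le\frac32\mu_2$, i.e.\ $c_2(S)\ge y-\frac32\mu_2=:\Delta$. Concretely I would order the chores by the ratio $c_1(e)/c_2(e)$ and move the cheapest-for-agent-$1$ chores of $O_2$ across, one at a time, until agent $2$ first becomes feasible, mirroring the threshold construction of $ALG_1$ in Proposition~\ref{thm::7.1}. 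Since every chore satisfies $c_2(e)\le\mu_2$ (Lemma~\ref{obs::3.1}) and $\Delta\le 1-\frac32\cdot\frac12=\frac14<\mu_2$, the overshoot caused by the last transferred chore is controlled, and the ratio-monotonicity of the ordering bounds $\delta$. Combining this with $\textnormal{OPT}>\frac34$ should give $\textnormal{SC}=\textnormal{OPT}+\delta\le\frac76\,\textnormal{OPT}$; I must also check that after the transfer agent $1$ still satisfies $c_1(A_1)\le\frac32\mu_1$.

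For the \textbf{lower bound} I exhibit a three-chore instance with $c_1=(0,\tfrac38,\tfrac58)$ and $c_2=(\tfrac14-\epsilon,\tfrac14+\epsilon,\tfrac12)$ for arbitrarily small $\epsilon>0$ (the $\epsilon$-perturbation is what makes $\mathbf O$ \emph{strictly} infeasible). Here $\mu_2=\frac12$ (the split $\{e_3\}\mid\{e_1,e_2\}$ is perfectly balanced) and $\mu_1=\frac58$. The optimal allocation assigns $e_1$ to agent $1$ and $e_2,e_3$ to agent $2$, with $\textnormal{OPT}=\frac34+\epsilon$; but then $c_2(O_2)=\frac34+\epsilon>\frac32\mu_2$, so $\mathbf O$ is not $\frac32$-PMMS. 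Checking the remaining partitions, the constraints $c_1(A_1)\le\frac{15}{16}$ and $c_2(A_2)\le\frac34$ rule out every cheaper option and leave $A_1=\{e_1,e_2\},\,A_2=\{e_3\}$ (cost $\frac38+\frac12=\frac78$) as the cheapest feasible allocation, giving ratio $\frac{7/8}{3/4+\epsilon}\to\frac76$.

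The \textbf{main obstacle} is the upper bound, specifically the quantitative accounting of the repair step: bounding the social-cost increase $\delta$ against $\textnormal{OPT}$ under indivisibility (so that the last transferred chore cannot overshoot by more than a controlled amount) while simultaneously guaranteeing that the repaired allocation keeps agent~$1$ within $\frac32\mu_1$. This is exactly where the constant $\frac76$ must be pinned down, and it is the analogue of the delicate index/threshold analysis (the algorithm $ALG_1$ with indices $s$ and $f$) used for the price of EF1 in Proposition~\ref{thm::7.1}.
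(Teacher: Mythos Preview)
Your lower bound is correct and in fact slightly cleaner than the paper's: you use a three-chore instance, whereas the paper uses four chores $c_1=(\tfrac38,\tfrac38+\epsilon,\tfrac18-\epsilon,\tfrac18)$, $c_2=(\tfrac12,\tfrac12,0,0)$; both achieve the limiting ratio $7/6$. Your preliminary observations (at most one agent violates in $\mathbf O$; if $\mathbf O$ is infeasible then $\textnormal{OPT}>\tfrac34$) are also correct and are used implicitly in the paper.

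For the upper bound you have the right opening move---transfer a ratio-ordered prefix from the violating agent's bundle---but this single repair is not enough, and you have correctly located where your argument stalls. In the paper's proof (with roles swapped: agent $1$ is the violator) the greedy prefix $L(s)$ is the shortest prefix whose removal makes agent $1$ feasible, and the analysis splits on whether $c_1(L(s))\le\tfrac12\,c_1(O_1)$. Your sketch covers only this ``small prefix'' case; there the ratio bound and the feasibility of the receiving agent follow from the calculation you describe, and one indeed gets at most $7/6$. The difficulty is the complementary case $c_1(L(s))>\tfrac12\,c_1(O_1)$: then the single threshold chore $e_s$ already has $c_1(e_s)>\tfrac14$, the overshoot in $\delta$ is \emph{not} controlled by ratio monotonicity alone, and the prefix transfer can also push the receiving agent past $\tfrac32\mu$. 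The paper closes this gap with two alternative repairs, chosen according to the size of $c_2(e_s)-c_1(e_s)$: either move only the single chore $e_s$ to the other agent, or give the violator the singleton $\{e_s\}$ and hand everything else to the other agent. In each subcase the resulting allocation is shown to be $\tfrac32$-PMMS and the social-cost increase is at most $\tfrac18$, which together with $\textnormal{OPT}>\tfrac34$ gives the $7/6$ bound. So the obstacle you named is real, and the missing ingredient is precisely this case split on the mass of the transferred prefix together with the two single-chore fallback repairs.
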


\begin{proof}
We first prove the upper bound. Given an instance $I$, let $\mathbf{O}=( O_1, O_2)$ be an optimal
allocation of $I$. If the allocation $\mathbf{O}$ is already $\frac{3}{2}$-PMMS, we are done. For
the sake of contradiction, we assume that agent 1 violates the condition of $\frac{3}{2}$-PMMS in
allocation $\mathbf{O}$, i.e., $c_1(O_1) > \frac{3}{2}\textnormal{MMS}_1(2,E)$. Suppose
$O_1=\{e_1,\ldots,e_h\}$ and the index satisfies the following rule; $\frac{c_1(e_1)}{c_2(e_1)}
\geq \frac{c_1(e_2)}{c_2(e_2)} \geq \cdots \geq \frac{c_1(e_h)}{c_2(e_h)}$. In this proof, for
simplicity, we write $L(k):= \{ e_1,...,e_k\}$ for any $1\leq k \leq h $ and $L(0) = \emptyset$.
Then, let $s$ be the index such that $c_1(O_1 \setminus L(s)) \leq \frac{3}{2}\textnormal{MMS}_1(2,
E)$ and $c_1(O_1 \setminus L(s-1)) > \frac{3}{2} \textnormal{MMS}_1(2, E)$. In the following, we
divide our proof into two cases. 	

\emph{Case 1:} $c_1(L(s)) \leq \frac{1}{2} c_1(O_1)$. Consider allocation $\mathbf{A} = (A_1, A_2)$
with $A_1 = O_1\setminus L(s)$ and $A_2 = O_2 \cup L(s)$. We first show allocation $\mathbf{A}$ is
$\frac{3}{2}$-PMMS. For agent 1, due to the construction of index $s$, he does not violate the
condition of $\frac{3}{2}$-PMMS. {As for agent 2, we claim that $c_2(A_2) = 1 - c_2(O_1 \setminus
L(s-1)) + c_2(e_s) < \frac{1}{4} + c_2(e_s)$ because $c_2(O_1\setminus L(s-1)) \geq
c_1(O_1\setminus L(s-1)) >\frac{3}{2}\textnormal{MMS}_1(2, E) \geq \frac{3}{4}$ where the first
inequality transition is due to the fact that $O_1$ is the bundle assigned to agent 1 in the
optimal allocation.} If $c_2(e_s) < \frac{1}{2}$, then clearly, $c_2(A_2) < \frac{3}{4}
\leq\frac{3}{2}\textnormal{MMS}_2(2, E)$. If $c_2(e_s) \geq \frac{1}{2}$, then $c_2(e_s) =
\textnormal{MMS}_1(2, E)$ and accordingly, it is not hard to verify that $c_2(A_2) \leq
\frac{3}{2}\textnormal{MMS}_1(2, E)$. Thus, $\mathbf{A}$ is a $\frac{3}{2}$-PMMS allocation. 	

Next, based on allocation $\mathbf{A}$, we derive an upper bound on the price of
$\frac{3}{2}$-PMMS. First, by the order of index, $\frac{c_1(L(s))}{c_2(L(s))} \geq \frac{
	c_1(O_1)}{c_2(O_1)}$ holds, implying $c_2(L(s)) \leq \frac{c_2(O_1)}{c_1(O_1)} c_1(L(s))$. Since
$A_1=O _ 1\setminus L(s)$ and $A_2 = O_2 \cup L(s)$, we have the following:
\[
	\begin{aligned}
		\textnormal{Price of } \frac{3}{2}\textnormal{-PMMS}
        &\leq 1 + \frac{c_2(L(s)) - c_1(L(s))}{c_1(O_1) + c_2(O_2)}
		\leq 1 + \frac{c_1(L(s)) (\frac{c_2(O_1)}{c_1(O_1)} - 1)}{c_1(O_1) + c_2(O_2)} \\
		& = 1 + \frac{ \frac{c_1(L(s))}{c_1(O_1)} (1- c_2(O_2) -c_1(O_1)) }{ c_1(O_1)+ c_2(O_2)} \\
		&\leq 1 + \frac{\frac{1}{2} - \frac{1}{2}(c_1(O_1) + c_2(O_2))}{c_1(O_1) + c_2(O_2)}
		\leq 1-\frac{1}{2}+\frac{1}{2}\times \frac{4}{3} = \frac{7}{6},
	\end{aligned}
\]
where the second inequality due to $c_2(L(s)) \leq \frac{c_2(O_1)}{c_1(O_1)} c_1(L(s))$; the third
inequality due to the condition of Case 1; and the last inequality is because $c_1(O_1) >
\frac{3}{2}\textnormal{MMS}_1(2, E) \geq \frac{3}{4}$. 	

\emph{Case 2:} $c_1(L(s)) > \frac{1}{2} c_1(O_1)$. We first {derive a} lower bound on $c_1(e_s)$.
{Since $ c _ 1 ( e _ s) = c _ 1 (O _ 1 \setminus L (s-1)) + c _ 1 ( L _ s ) - c_ 1( O_ 1)$, combine
which with the condition of Case 2 implying $ c _ 1 ( e _ s ) > c _ 1 ( O _ 1 \setminus L ( s - 1
)) - \frac{1}{2} c _ 1 ( O _ 1)$, and consequently we have $ c _ 1 ( e _ s ) > \frac{3}{2}
\textnormal{MMS} _ 1 (  2, E) - \frac{1}{2} c _ 1 ( O _ 1 ) \geq \frac{1}{4}$} where the last
transition {is} due to $\textnormal{MMS}_1(2, E) \geq \frac{1}{2}$ and $c_1(O_1) \leq 1$. Then, we
consider two subcases. 	

{If $0 \leq c_2(e_s) - c_1(e_s) \leq\frac{1}{8}$, consider} an allocation $\mathbf{A}
= (A_1,A_2)$ with $A_1 = O_1\setminus \{e_s\}$ and $A_2=O_2\cup \{ e_s\}$. We first show the
allocation $\mathbf{A}$ is $\frac{3}{2}$-PMMS. For agent 1, since $c_1(e_s) > \frac{1}{4}$,
$c_1(A_1) = c_1(O_1) -c_1(e_s) <\frac{3}{4} \leq \frac{3}{2}\textnormal{MMS}_1(2, E)$. As for agent
2, $c_2(A_2) = c_2(O_2) + c_2(e_s) \leq 1-c_1(O_1) + c_2(e_s) < \frac{1}{4} + c_2(e_s)$. If
$c_2(e_s) < \frac{1}{2}$, then clearly, $c_2(A_2)\leq\frac{3}{4}<\frac{3}{2}\textnormal{MMS}_2(2,
E)$ holds. If $c_2(e_s)\geq \frac{1}{2}$, we have $c_2(e_s) =\textnormal{MMS}_2(2, E)$ and
accordingly, it is not hard to verify that $c_2(A_2)\leq\frac{3}{2}\textnormal{MMS}_2(2, E)$. Thus,
the allocation $\mathbf{A}$ is $\frac{3}{2}$-PMMS. Next, based on the allocation $\mathbf{A}$, we
derive an upper bound regarding the price of $\frac{3}{2}$-PMMS,
\[
\textnormal{Price of } \frac{3}{2}\textnormal{-PMMS} \leq \frac{c_1(O_1) - c_1(e_s)
	+ c_2(O_2) +c_2(e_s)}{c_1(O_1)+c_2(O_2)}
\leq 1 + \frac{1}{8}\times \frac{4}{3} = \frac{7}{6},
\]
where the second inequality due to $0 \leq c_2(e_s) - c_1(e_s) \leq\frac{1}{8}$ and $c_1(O_1) >
\frac{3}{4}$.

If $c_2(e_s) - c_1(e_s) > \frac{1}{8}$, consider an allocation {$\mathbf{A ^ {\prime}} = (A ^
{\prime}_1,A^ {\prime}_2)$ with $A ^ {\prime}_1=\{e_s\}$ and $A ^ {\prime}_2 = E \setminus
\{e_s\}$.} We first show that the allocation {$\mathbf{A} ^ {\prime}$} is $\frac{3}{2}$-PMMS. For
agent 1, due to Lemma \ref{obs::3.1}, $c_1(e_s)\leq \textnormal{MMS}_1(2, E)$ holds. As for agent
2, since $c_2(e_s)\geq c_1(e_s) > \frac{1}{4}$, we have {$c_2(A ^ {\prime}_2) = c_2(E)- c_2(e_s) <
\frac{3}{4} \leq \frac{3}{2}\textnormal{MMS}_2(2, E)$}. Thus, the allocation {$\mathbf{A} ^
{\prime}$} is $\frac{3}{2}$-PMMS. In the following, we first derive an upper bound for
$c_2(O_1\setminus \{e_s\}) - c_1(O_1\setminus \{e_s\})$, then based on the bound, we provide the
target upper bound for the price of fairness. Since $c_1(O_1) > \frac{3}{4}$ and $c_2(e_s) -
c_1(e_s ) > \frac{1}{8}$, we have $c_2(O_1\setminus \{e_s\}) - c_1(O_1\setminus \{e_s\}) = c_2(O_1)
-c_1(O_1) - (c_2(e_s) - c_1(e_s)) < \frac{1}{8}$, and then, the following holds,
\[
\textnormal{Price of } \frac{3}{2}\textnormal{-PMMS} \leq 1 + \frac{c_2(O_1\setminus \{e_s\})
		-c_1(O_1\setminus \{e_s\}) }{c_1(O_1) + c_2(O_2)}
\leq 1 + \frac{1}{8}\times\frac{4}{3} = \frac{7}{6},
\]
which completes the proof of the upper bound.

Regarding lower bound, consider an instance $I$ with two agents and a set $E = \{
e_1,e_2,e_3,e_4\}$ of four chores. The cost function for agent 1 is: $c_1(e_1) = \frac{3}{8},
c_1(e_2) = \frac{3}{8} + \epsilon, c_1(e_3) = \frac{1}{8}-\epsilon, c_1(e_4) = \frac{1}{8}$ where
$\epsilon> 0 $ takes arbitrarily small value. For agent 2, here cost function is: {$c_2(e_1) =
c_2(e_2) = \frac{1}{2}, c_2(e_3) = c_2(e_4) = 0 $}. It is not hard to verify that
$\textnormal{MMS}_i(2, E) = \frac{1}{2}$ for any $i =1, 2$. In the optimal allocation, the
assignment is; $e_1,e_2$ to agent 1 and $e_3,e_4$ to agent 2, resulting in $\textnormal{OPT}(I) =
\frac{3}{4} + \epsilon$. Observe that to satisfy $\frac{3}{2}$-PMMS, agent 1 {cannot} receive both
chores $e_1, e_2$, and accordingly, the minimum social cost of a $\frac{3}{2}$-PMMS allocation is
$\frac{7}{8}$ by assigning $e_1$ to agent 1 and the rest chores to agent 2. Based on this instance,
when $n=2$, the price of $\frac{3}{2}$-PMMS is at least $\frac{\frac{7}{8}}{\frac{6}{8} +
\epsilon}\rightarrow \frac{7}{6}$ as $\epsilon \rightarrow 0$.
\end{proof}

We remark that if we have an \emph{oracle} for the maximin share, then our constructive proof of
Proposition~\ref{PoFn=2::PMMS} can be transformed into an efficient algorithm for finding a
$3/2$-PMMS allocation {whose cost is at most $\frac{7}{6}$ times the optimal social cost}. Moving
to other fairness criteria, we have the following uniform bound.

\begin{proposition}\label{thm::price of PMMS and MMS}
When $n=2$ and agents have additive cost functions, the price of \textnormal{PMMS, MMS}, and
\textnormal{EFX} are all 2.
\end{proposition}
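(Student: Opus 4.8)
The plan is to prove all three statements together by exploiting the fact that, for $n=2$, the notions PMMS, MMS, and EFX are intimately related. First I would recall that when $n=2$ the quantity $\textnormal{MMS}_i(2,E)$ coincides with the PMMS benchmark $\textnormal{MMS}_i(2,A_i\cup A_j)$, since $A_1\cup A_2=E$. Hence for two agents MMS and PMMS are literally the same fairness notion, so their prices of fairness are identical; and by Proposition~\ref{prop::5.1} a PMMS allocation is also EFX, while the converse direction (EFX $\Rightarrow$ relation to MMS) can be controlled through Proposition~\ref{prop::4.5}. So establishing the bound $2$ for one of them and transferring via these implications is the backbone of the argument.

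\textbf{Upper bound.} For the upper bound I would show that for any instance there exists a fair allocation (under each of the three criteria) whose social cost is at most twice the optimum. The cleanest route is to argue it for MMS/PMMS first. Given an optimal allocation $\mathbf{O}=(O_1,O_2)$ with $c_i(E)=1$, I would show one can always repair it to a PMMS (equivalently MMS, when $n=2$) allocation while at most doubling the social cost. The key observation is that $\textnormal{OPT}(I)=c_1(O_1)+c_2(O_2)$ and, by normalization, each agent's cost for the whole set is $1$, so the trivial ``give everything to the agent who is cheaper overall'' allocation has social cost at most $1\le 2\cdot\textnormal{OPT}(I)$ whenever $\textnormal{OPT}(I)\ge \tfrac12$; the interesting regime is small optimum. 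Using Lemma~\ref{lemma::3.2} (any allocation is $2$-PMMS) one already has \emph{existence} of PMMS allocations in a weak sense, but for the exact PMMS price I would instead directly analyze the two natural candidate allocations --- the optimal one and its single-swap modifications --- and show one of them is PMMS with cost $\le 2\,\textnormal{OPT}$. For EFX the upper bound then follows because a PMMS allocation is EFX (Proposition~\ref{prop::5.1}), so the best EFX allocation costs no more than the best PMMS allocation, giving price of EFX $\le 2$ as well; and MMS equals PMMS here.

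\textbf{Lower bound.} For the matching lower bound of $2$ I would exhibit a single family of two-agent instances on which every fair allocation (for each of the three criteria) has social cost approaching twice the optimum. The natural construction mirrors the tightness instances used earlier in the paper: take a few chores where agent~1 values a ``cheap'' chore at (almost) zero and agent~2 values it at (almost) zero on the complementary side, arranged so that the social optimum is achieved by an allocation that is \emph{not} fair, and any fair allocation is forced to move a costly chore to the wrong agent. Concretely I expect an instance with two or three chores suffices: set things up so $\textnormal{OPT}(I)\to\tfrac12$ while the cheapest fair (EFX/PMMS/MMS) allocation has cost $\to 1$, yielding the ratio $2$. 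I would verify that the constructed allocation witnessing the optimum violates EFX (hence PMMS, hence MMS via the $n=2$ equivalence), and that any repair doubles the cost.

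\textbf{Main obstacle.} The hard part will be the upper-bound argument for PMMS, namely showing that one can \emph{always} reach an exact PMMS allocation at cost $\le 2\,\textnormal{OPT}$ rather than merely the trivial $2$-PMMS guarantee of Lemma~\ref{lemma::3.2}. The delicate case is when the optimal allocation fails PMMS because one agent's bundle is too costly relative to his $\textnormal{MMS}_i(2,E)$; I will need a swapping/transfer argument (moving a carefully chosen chore between $O_1$ and $O_2$) that restores the PMMS condition for \emph{both} agents simultaneously while bounding the incurred extra social cost by $\textnormal{OPT}(I)$. Controlling the cost increase for the receiving agent, and ensuring the transfer does not itself create a new PMMS violation, is the step that requires care; the identity $\textnormal{MMS}_i(2,E)\ge\max\{\tfrac12 c_i(E),\,\max_e c_i(e)\}$ from Lemma~\ref{obs::3.1} will be the main tool for bounding both the fairness threshold and the swap cost.
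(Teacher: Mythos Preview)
Your proposal has the overall structure right (PMMS and MMS coincide for $n=2$; PMMS implies EFX; so it suffices to handle PMMS), and the lower-bound sketch is fine in spirit. The gap is in the upper-bound argument, and there is also a case-analysis reversal.

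First, the reversal: you say the ``interesting regime is small optimum,'' but it is the opposite. If $\textnormal{OPT}(I)\le\tfrac12$ then each $c_i(O_i)\le\textnormal{OPT}(I)\le\tfrac12\le\textnormal{MMS}_i(2,E)$, so the optimal allocation itself is already PMMS and the ratio is $1$. The only regime requiring work is $\textnormal{OPT}(I)>\tfrac12$, and there it suffices to exhibit \emph{some} PMMS allocation of social cost at most $1$.

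Second, the construction you propose for that regime does not work as written. The allocation ``give everything to one agent'' has cost $1$, but it is \emph{not} PMMS (the receiving agent's cost is $1>\textnormal{MMS}_i(2,E)$ except in degenerate cases), so it cannot be used. Your fallback --- analyzing single-swap modifications of the optimal allocation --- is the approach the paper uses for $\tfrac32$-PMMS, but for exact PMMS it is unclear which chore to move and why the resulting allocation is PMMS for \emph{both} agents; you correctly flag this as the hard step but do not supply the missing idea.

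The paper bypasses swaps entirely with a cut-and-choose construction: assume $\textnormal{MMS}_1(2,E)\le\textnormal{MMS}_2(2,E)$ and let $(T_1,T_2)$ realize $\textnormal{MMS}_1(2,E)$ with $c_1(T_1)\le c_1(T_2)$. If $c_2(T_2)\le c_2(T_1)$ then $(T_1,T_2)$ is EF, hence PMMS, with cost $\le 1$. Otherwise give $T_2$ to agent~1 and $T_1$ to agent~2: agent~1 gets exactly her MMS bundle, agent~2 gets her preferred piece, and one checks $c_2(T_1)\le c_1(T_1)$ (else $\textnormal{MMS}_2<\textnormal{MMS}_1$), so the total cost is $c_1(T_2)+c_2(T_1)\le c_1(T_2)+c_1(T_1)=1$. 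This gives the upper bound of $2$ directly, with no swap analysis needed.

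For the lower bound, the paper uses three chores with $c_1=(\tfrac12,\tfrac12-\epsilon,\epsilon)$ and $c_2=(\tfrac12,\epsilon,\tfrac12-\epsilon)$; you should verify that any EFX (not just PMMS) allocation here has cost $1$ while $\textnormal{OPT}\to\tfrac12$.
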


\begin{proof}
We first show results on the upper bound. When $n=2$, PMMS is identical with MMS and can imply EFX,
so it suffices to show that the price of PMMS is at most 2. Given an instance $I$, let allocation
$\mathbf{O} = (O_1,O_2)$ be its optimal allocation and w.l.o.g, we assume $c_1(O_1) \leq c_2(O_2)$.
If $c_2(O_2) \leq \frac{1}{2}$, then we have $c_1(O_1) \leq 1- c_1(O_1) = c_1(O_2)$ and $c_2(O_2)
\leq 1-c_2(O_2)  = c_2(O_1)$. So allocation $\mathbf{O}$ is an EF and accordingly is PMMS, which
yields the price of PMMS equals to one. Thus, we can further assume $c_2(O_2) >\frac{1}{2}$ and
hence the optimal social cost is larger than $\frac{1}{2}$.

We next show that there exist a PMMS allocation whose social cost is at most 1. W.l.o.g, we assume
$\textnormal{MMS}_1(2, E) \leq \textnormal{MMS}_2(2,E)$ (the other case is symmetric). Let
$\mathbf{T} = (T_1,T_2)$ be the allocation defining $\textnormal{MMS}_1(2, E)$ and $c_1(T_1) \leq
c_1(T_2) = \textnormal{MMS}_1(2, E)$. If $c_2(T_2) \leq c_2(T_1)$, then allocation $\mathbf{T}$ is
EF (also PMMS), and thus it hold that $c_1(T_1) \leq \frac{1}{2}$ and $ c_2(T_2) \leq \frac{1}{2}$.
Therefore, social cost of allocation $\mathbf{T}$ is no more than one, which implies the price of
PMMS is at most two. If $c_2(T_2) > c_2(T_1)$, then consider the allocation $\mathbf{T}^{\prime} =
(T_2, T_1)$. Since $c _ 1 ( T ^  {\prime} _ 1 ) = c_1(T_2) = \textnormal{MMS}_1(2, E)$ and $ c_2 (
T ^ {\prime} _ 2 ) = c_2(T_1) < c_2(T_2) $, then $\mathbf{T}^{\prime}$ is a PMMS allocation. Owing
to $\textnormal{MMS}_1(2, E) \leq \textnormal{MMS}_2(2,E)$, we claim that $c_2(T_1) \leq c_1(T_1)$;
otherwise, we have $\textnormal{MMS}_1(2, E) = c_1(T_2) > c_2(T_2) > c_2(T_1)$, and equivalently,
allocation $\mathbf{T^{\prime}}$ is a 2-partition where the cost of both bundles for agent 2 is
strictly smaller than $\textnormal{MMS}_1(2, E)$, contradicting to $\textnormal{MMS}_1(2, E) \leq
\textnormal{MMS}_2(2, E)$. By $c_2(T_1) \leq c_1(T_1)$, the social cost of allocation
$\mathbf{T}^{\prime}$ satisfies $c_2(T_1) + c_1(T_2) \leq 1$ and so the price of PMMS is at most
two.

Regarding the tightness, consider an instance $I$ with two agents and a set $E = \{ e_1,e_2,e_3\}$
of three chores. The cost function of agent 1 is : $c_1(e_1) = \frac{1}{2}, c_1(e_2) = \frac{1}{2}
- \epsilon$ and $c_1(e_3) = \epsilon$ where $\epsilon > 0 $ takes arbitrarily small value. For
agent 2, his cost is $c_2(e_1) = \frac{1}{2}, c_2(e_2) = \epsilon$ and $c_2(e_3) = \frac{1}{2} -
\epsilon$. An optimal allocation assigns chores $e_1, e_2$ to agent 2, and $e_3$ to agent 1, and
consequently, the optimal social cost equals to $\frac{1}{2} + 2\epsilon$. We first concern the
tightness on the notion of PMMS (or MMS, these two are identical when $n=2$). In any PMMS
allocations, it must be the case that an agent receives chore $e_1$ and the other one receives
chores $e_2,e_3$, and thus the social cost of PMMS allocations is one. Therefore, the price of PMMS
and of MMS is at least $\frac{1}{\frac{1}{2} + \epsilon} \rightarrow 2$ as $\epsilon \rightarrow 0
$. As for EFX, similarly, it must be the case that in any EFX allocations, the agent receiving
chore $e_1$ cannot receive any other chores. Thus, it not hard to verify that the social cost of
EFX allocations is also one and the price of EFX is at least $\frac{1}{\frac{1}{2} + \epsilon}
\rightarrow 2$ as $\epsilon \rightarrow 0 $.
\end{proof}

\subsection{More than two agents}

Note that the existence of an MMS allocation is not guaranteed in general
\cite{kurokawaFairEnoughGuaranteeing2018, azizAlgorithmsMaxminShare2017} and the existence of PMMS
or EFX allocation is still open in chores when $n\geq 3$. Nonetheless, we are still interested in the prices
of fairness in case such a fair allocation does exist.

\begin{proposition}\label{prop::7.3}
When agent have additive cost functions, for $n \geq 3$, the price of \textnormal{EF1, EFX, PMMS}
and $\frac{3}{2}$-\textnormal{PMMS} are all infinity.
\end{proposition}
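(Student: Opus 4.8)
The plan is to exhibit a single family of instances $\{I_\epsilon\}$ (parametrized by $\epsilon\to 0^+$) on which the optimal social cost tends to $0$ while every allocation satisfying any of the four criteria keeps a constant social cost. Since the social cost of any allocation is at most $n$, the only route to an infinite price is to force $\mathrm{OPT}(I_\epsilon)\to 0$, i.e.\ to make $\sum_e\min_i c_i(e)$ small. I would first reduce the work: because $\alpha$-EFX is stronger than $\alpha$-EF1 (Proposition~\ref{prop::4.3}) and a PMMS allocation is EFX (Proposition~\ref{prop::5.1}), for every instance the PMMS allocations are contained in the EFX allocations, which are contained in the EF1 allocations; hence $\min_{\mathbf A\in F(I)}\mathrm{SC}(\mathbf A)$ only grows as $F$ is strengthened, and it is enough to prove the claim for the two logically weakest notions in the list, EF1 and $\tfrac32$-PMMS.

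For the construction I take $n\ge 3$ agents and three distinguished chores $H_1,H_2,D$. Let agent $1$ have $c_1(H_1)=c_1(H_2)=\epsilon$ and $c_1(D)=1-2\epsilon$, and let every other agent $i\ge 2$ have $c_i(H_1)=c_i(H_2)=\tfrac12$ and $c_i(D)=0$; all cost functions are additive and normalized to $1$. Since $\min_i c_i(H_1)=\min_i c_i(H_2)=\epsilon$ and $\min_i c_i(D)=0$, the minimum-social-cost assignment places $H_1,H_2$ on agent $1$ and $D$ on any other agent, giving $\mathrm{OPT}(I_\epsilon)=2\epsilon$. I would then record feasibility of the criteria: the allocation $A_1=\{H_2\},A_2=\{H_1\},A_3=\{D\}$ (remaining agents empty) is PMMS, by a direct check of the pairwise values $\mathrm{MMS}_i(2,A_i\cup A_j)$ against $c_i(A_i)$, and is therefore also EFX, EF1 and $\tfrac32$-PMMS, with social cost $\tfrac12+\epsilon$.

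The core of the argument is the matching lower bound on the cost of fair allocations, which I would prove by contradiction. Suppose an allocation $\mathbf A$ has $\mathrm{SC}(\mathbf A)<\tfrac12$. Then no agent $i\ge 2$ can hold $H_1$ or $H_2$, since each such chore contributes $\tfrac12$ to the social cost; hence $H_1,H_2\in A_1$. Moreover $D\notin A_1$, for otherwise $\mathrm{SC}(\mathbf A)\ge c_1(A_1)\ge c_1(D)=1-2\epsilon>\tfrac12$. Thus $A_1=\{H_1,H_2\}$, the single remaining chore $D$ occupies at most one of the agents $2,\dots,n$, and because $n\ge 3$ some bundle $A_k$ is empty. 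Now agent $1$ holds two positive-cost chores while $c_1(A_k)=0$: removing either chore still leaves cost $\epsilon>0$, so EF1 fails; and $\mathrm{MMS}_1(2,A_1\cup A_k)=\mathrm{MMS}_1(2,\{H_1,H_2\})=\epsilon$ while $c_1(A_1)=2\epsilon>\tfrac32\epsilon$, so $\tfrac32$-PMMS fails too. This contradiction shows that every EF1 and every $\tfrac32$-PMMS allocation has social cost at least $\tfrac12$, so the price is at least $\tfrac{1/2}{2\epsilon}\to\infty$; by the containment of allocation sets noted above, EFX and PMMS have infinite price as well.

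The step I expect to be the main obstacle is designing the profile so that $H_1,H_2$ admit no cheap hiding place: they must be essentially free for exactly one agent (to keep $\mathrm{OPT}$ of order $\epsilon$) yet costly for everyone else (so that spreading them is expensive), while the auxiliary chore $D$ must simultaneously be worthless to the other agents (so it cannot cheaply pad a fair bundle) and carry almost all of agent~$1$'s mass (so agent~$1$ cannot absorb it). The empty-bundle gadget that triggers the violation exists precisely because $n\ge 3$; for $n=2$ the chore $D$ fills the second agent's bundle and no empty comparison remains, consistently with the finite two-agent prices established earlier.
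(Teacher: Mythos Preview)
Your proof is correct and, in fact, cleaner than the paper's. Both arguments share the same skeleton---reduce via the containment of allocation sets to the weakest criterion, then build an instance whose optimum is $O(\epsilon)$ but whose fair allocations cost $\Omega(1)$---yet the implementations differ. The paper invokes the full chain $\text{PMMS}\Rightarrow\text{EFX}\Rightarrow\text{EF1}\Rightarrow\tfrac32\text{-PMMS}$ (using Proposition~\ref{prop::4.8} for the last arrow), so it only treats $\tfrac32$-PMMS; it then constructs a heavier instance with $m\ge 5$ chores and four distinct cost profiles in which agent~1 must hold four $\epsilon$-cost chores and compares against agent~2's bundle. You stop the chain at EF1 and handle EF1 and $\tfrac32$-PMMS separately, which is slightly more work at the reduction stage, but your three-chore instance with only two profiles is considerably simpler: the empty-bundle gadget (available precisely because $n\ge 3$) kills both EF1 and $\tfrac32$-PMMS in one line each. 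Had you also cited Proposition~\ref{prop::4.8} to note that EF1 implies $\tfrac32$-PMMS, you could have dropped the separate EF1 check and matched the paper's reduction with a shorter construction.
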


\begin{proof}
In this proof, $\epsilon$ always takes arbitrarily small positive value. Based on our results on
the connections between fairness criteria, we have the relationship:
PMMS$\rightarrow$EFX$\rightarrow$EF1$\rightarrow$$\frac{3}{2}$-PMMS, where $A \rightarrow B $
refers to that notion $A$ is stricter than notion $B$. Therefore, it suffices to give a proof for $\frac{3}{2}$-PMMS.

Consider an instance with $n$ agents and $m\geq 5$ chores. The cost function of agent 1 is
$c_1(e_1) = 1-4\epsilon$, $c_1(e_j) = 0$ for $j=2,\ldots, m -4$, and $c_1(e_j) = \epsilon$ for $ j
\geq m-3$. For agent 2, his cost is $c_2(e_1) = 1-\frac{4}{m}$, $c_2(e_j) = 0$ for $j=2,\ldots, m
-4$, and $c_2(e_j) = \frac{1}{m}$ for $ j \geq m-3$. The cost function of agent 3 is: $c_3(e_1) =
\epsilon$, $c_3(e_j) = \frac{1}{m}$ for $ j =2,\ldots, m-1$, and $c_3(e_m)= \frac{1}{m} -
\epsilon$. For any $ i \geq 4$, the cost function of agent $i$ is $c_i(e _ j ) = \frac{1}{m}$ for
any $ j  \in [m]$. An optimal allocation assigns $e_{m-3}, e_{m-2}, e_{m-1},e_m$ to agent 1 and
$e_1$ to agent 3. For each of rest chore, it is assigned to the agents having zero cost on it.
Accordingly, the optimal social cost is $5\epsilon$. However, in any optimal allocation
$\mathbf{O}$, we have $\textnormal{MMS}_1(2, O_1 \cup O  _ 2  )  = 2\epsilon $, implying $ c _ 1  (
O _ 1 ) > \frac{3}{2}\textnormal{MMS}_1(2, O_1 \cup O  _ 2  ) $. Thus, agent 1 violates
$\frac{3}{2}$-PMMS. In order to achieve $\frac{3}{2}$-PMMS, at least one of $e_{m-3}, e_{m-2},
e_{m-1},e_m$ has to be assigned to someone other than agent 1, and so the social cost of a
$\frac{3}{2}$-PMMS allocation is at least $\frac{1}{m}  + 3\epsilon$,	resulting in an unbounded
price of $\frac{3}{2}$-PMMS when $\epsilon \rightarrow 0 $.
\end{proof}

In the context of goods allocation, \citet{barmanSettlingPriceFairness2020} present an
asymptotically tight price of EF1, $O(\sqrt{n})$. However, as shown by Proposition~\ref{prop::7.3},
when allocating chores, the price of EF1 is infinite, which shows a sharp contrast between goods
and chores allocation.

We are now left with MMS fairness. Let us first provide upper and lower bounds on the price of MMS.

\begin{proposition}\label{prop:my-add4}
When agents have additive cost functions, for $n \geq 3$, the price of \textnormal{MMS} is at most
$n^2$ and at least $\frac{n}{2}$.
\end{proposition}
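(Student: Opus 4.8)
The plan is to prove the two bounds separately, with the upper bound being the more routine part. For the upper bound I would first record that every MMS allocation is cheap in absolute terms: if $\mathbf{A}^{*}=(A_1^{*},\dots,A_n^{*})$ is a minimum-cost MMS allocation (one exists whenever $F(I)\neq\emptyset$, as there are finitely many allocations), then each $c_i(A_i^{*})\le \textnormal{MMS}_i(n,E)\le c_i(E)=1$, where the first inequality is the defining property of an MMS allocation and the second uses the partition that puts all of $E$ into a single bundle. Summing over $i$ gives $\textnormal{SC}(\mathbf{A}^{*})=\sum_i c_i(A_i^{*})\le n$, which is just the trivial guarantee of Lemma~\ref{lemma::3.2}.

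The second ingredient is a lower bound on $\textnormal{OPT}(I)$, obtained by casing on the optimal allocation $\mathbf{O}=(O_1,\dots,O_n)$. If $\mathbf{O}$ is itself an MMS allocation, then $\textnormal{SC}(\mathbf{A}^{*})\le \textnormal{SC}(\mathbf{O})=\textnormal{OPT}(I)$ and the ratio is at most $1$. Otherwise some agent $i$ is overloaded, i.e.\ $c_i(O_i)>\textnormal{MMS}_i(n,E)$. Since every summand of $\textnormal{OPT}(I)=\sum_j c_j(O_j)$ is nonnegative we have $c_i(O_i)\le \textnormal{OPT}(I)$, while Lemma~\ref{obs::3.1} gives $\textnormal{MMS}_i(n,E)\ge \tfrac1n c_i(E)=\tfrac1n$. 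Chaining these, $\textnormal{OPT}(I)\ge c_i(O_i)>\textnormal{MMS}_i(n,E)\ge \tfrac1n$, so combining with $\textnormal{SC}(\mathbf{A}^{*})\le n$ yields $\textnormal{SC}(\mathbf{A}^{*})/\textnormal{OPT}(I)< n/(1/n)=n^2$. Hence in every case the ratio is at most $n^2$. Here the normalization $c_i(E)=1$ is exactly what produces the constant $\tfrac1n$, and instances with $\textnormal{OPT}(I)=0$ can be discarded because then every chore is valued $0$ by some agent, so the zero-cost allocation is itself MMS.

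For the lower bound the plan is to build a family of instances whose optimum is far cheaper than any MMS allocation, exploiting the mechanism isolated above: because the MMS constraint only caps each agent's cost from above, the sole way it can raise the social cost is by preventing the cheapest agent from absorbing a whole cheap bundle, thereby pushing some chores onto agents who value them dearly. Concretely I would take one agent whose maximin share equals $\tfrac1n$ but whose optimal bundle consists of several chores that are cheap for it and expensive for everyone else; the cap $c_i(A_i)\le \textnormal{MMS}_i(n,E)=\tfrac1n$ then forces all but a $\tfrac1n$-worth of that bundle to be dispersed to other agents at inflated cost, while the remaining chores are made freely offloadable so that $\textnormal{OPT}(I)=\Theta(1/n)$. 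Tuning the number of these chores and the blow-up ratio between the cheap and the expensive valuations (subject to $c_i(E)=1$) drives the ratio up toward $\tfrac n2$.

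The main obstacle lies in this last construction. A single overloaded agent facing freely available recipients is provably not enough: the normalization $c_i(E)=1$ caps how expensive the dispersed chores can simultaneously be for all recipients, and a minimum-cost MMS allocation will always route each dispersed chore to whichever agent finds it cheapest, so the forced excess cost stays of order $1$ and the ratio saturates well below $\tfrac n2$. Forcing \emph{every} admissible re-placement to be expensive therefore requires a gadget in which the naturally cheap recipients are already filled up to their own maximin shares by other locally forced chores, leaving the adversarial chores nowhere cheap to land. Checking that such an instance still admits at least one MMS allocation, that its optimum is $\Theta(1/n)$, and that every MMS allocation costs $\Omega(1)$ is the delicate step, and it is there that the constant $\tfrac n2$ is pinned down.
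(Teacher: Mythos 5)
Your upper bound argument is correct and is essentially the paper's: a minimum-cost MMS allocation has social cost at most $n$ (since $c_i(A_i^*)\leq \textnormal{MMS}_i(n,E)\leq c_i(E)=1$ for each $i$), and either the optimal allocation is itself MMS (ratio at most $1$) or some agent violates MMS in it, forcing $\textnormal{OPT}(I)\geq c_i(O_i) > \textnormal{MMS}_i(n,E)\geq \frac{1}{n}$ and hence a ratio below $n^2$.

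The gap is in the lower bound, which you never actually prove: you describe a target construction, declare its verification ``the delicate step,'' and stop. Worse, the structural claim you use to justify needing a more elaborate gadget --- that ``a single overloaded agent facing freely available recipients is provably not enough'' because the forced excess cost is only of order $1$ and so ``the ratio saturates well below $\frac{n}{2}$'' --- is false, and that simple configuration is exactly what the paper uses. Take $n+1$ chores with agent $1$'s costs $c_1(e_1)=\frac{1}{n}$, $c_1(e_2)=\epsilon$, $c_1(e_3)=\frac{1}{n}-\epsilon$, $c_1(e_j)=\frac{1}{n}$ for $j\geq 4$, and, for every $i\geq 2$, $c_i(e_1)=c_i(e_2)=\frac{1}{2}$ and $c_i(e_j)=0$ for $j\geq 3$. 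Then $\textnormal{MMS}_1(n,E)=\frac{1}{n}$, $\textnormal{MMS}_i(n,E)=\frac{1}{2}$ for $i\geq 2$, and the optimum gives $e_1,e_2$ to agent $1$ and everything else away for free, so $\textnormal{OPT}(I)=\frac{1}{n}+\epsilon$; but agent $1$ then exceeds her maximin share, so any MMS allocation must move $e_1$ or $e_2$ to some agent $i\geq 2$ at cost $\frac{1}{2}$, giving social cost at least $\frac{1}{2}$ (and such MMS allocations exist, e.g.\ agent $1$ takes $\{e_2\}$, agent $2$ takes $\{e_1\}$, the rest is assigned at zero cost). An excess of order $1$ over an optimum of order $\frac{1}{n}$ is precisely what yields a ratio of order $n$: here the ratio tends to $\frac{n}{2}$, not something ``well below'' it. Your error is in measuring the excess against a constant rather than against $\textnormal{OPT}(I)=\Theta(1/n)$; once that is corrected, the simple gadget you dismissed completes the proof, and no pre-filling of the recipients up to their own maximin shares is needed.
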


\begin{proof}
We first prove the upper bound part. For any instance $I$, if $\textnormal{OPT}(I) \leq
\frac{1}{n}$, then by Lemma~\ref{obs::3.1}, any optimal allocations is MMS. Thus, we
can further assume $\textnormal{OPT}(I) > \frac{1}{n}$. Notice that the maximum social cost of an
allocation is $n$ and thus the upper bound of $n^2$ is straightforward.

For the lower bound, consider an instance $I$ with $n$ agents and $n+1$ chores $E=\{
e_1,\ldots,e_{n+1}\}$. For agent $i=2,\ldots,n$, $c_i(e_1) = c_i(e_2) = \frac{1}{2}$ and $c_i(e_j)
= 0$ for any $ j \geq 3$. As for agent 1, $c_1(e_1) = \frac{1}{n}$, $c_1(e_2) = \epsilon$, $
c_1(e_3) = \frac{1}{n} - \epsilon$ and $c_1(e_j )= \frac{1}{n}$ for any $ j \geq 4$ where $\epsilon
> 0$ takes arbitrarily small value. It is not hard to verify that $\textnormal{MMS}_1(n, E) =
\frac{1}{n}$ and $ \textnormal{MMS}_i(n, E) = \frac{1}{2}$ for $ i \geq 2$. In any optimal
allocation $\mathbf{O} = (O_1,\ldots,O_n)$, the first two chores are assigned to agent 1 and each
of the remaining chores is assigned to agents having cost zero. Thus, we have $\textnormal{OPT}(I)
= \frac{1}{n} + \epsilon$. However, in any optimal allocation $\mathbf{O}$, we have $c_1 ( O  _ 1 ) >
\textnormal{MMS}_1(n, E)  = \frac{1}{n}$. In order to achieve MMS, agent 1 can not receive both
chores $e_1,e_2$, and so at least one of them has to be assigned to the agent other than agent 1.
As a result, the social cost of an MMS allocation is at least $\frac{1}{2} + \epsilon$, which
implies that  the price of MMS is at least $\frac{n}{2}$ as $\epsilon\rightarrow 0 $.
\end{proof}

As mentioned earlier, the existence of MMS allocation is not guaranteed. So we also provide an
asymptotically tight price of 2-MMS, whose existence is guaranteed for any instance with additive
cost functions.

\begin{proposition}\label{prop:my-add5}
When agents have additive cost functions, for $n \geq 3$, the price of 2-\textnormal{MMS} is at
least $\frac{n+3}{6}$ and at most $n$, asymptotically tight $\Theta(n)$.
\end{proposition}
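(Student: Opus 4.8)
The plan is to bound the price of $2$-MMS above by $n$ through a dichotomy on the optimal social cost, and below by $\frac{n+3}{6}$ through an explicit bottleneck-agent instance. Throughout I would use the two lower bounds from Lemma~\ref{obs::3.1}, namely $\mathrm{MMS}_i(n,E)\ge\frac1n c_i(E)=\frac1n$ and $\mathrm{MMS}_i(n,E)\ge\max_{e}c_i(e)$, together with the trivial bounds $\mathrm{SC}(\mathbf A)\le n$ and the already-established fact (Lemma~\ref{lemma::3.2}) that any allocation is $n$-MMS.

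For the upper bound, the first step is a threshold argument. Since the $2$-MMS cap of every agent satisfies $2\,\mathrm{MMS}_i(n,E)\ge\frac2n$, if $\mathrm{OPT}(I)\le\frac2n$ then in any optimal allocation $\mathbf O$ each bundle obeys $c_i(O_i)\le\mathrm{OPT}(I)\le\frac2n\le 2\,\mathrm{MMS}_i(n,E)$, so $\mathbf O$ is itself $2$-MMS and the ratio is $1$. The remaining regime is $\mathrm{OPT}(I)>\frac2n$, where the strategy is to exhibit a single $2$-MMS allocation whose social cost is at most an absolute constant $C$ (I expect $C\le 2$); combined with $\mathrm{OPT}(I)>\frac2n$ this gives ratio below $C/(2/n)=Cn/2\le n$. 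To build such an allocation I would start from $\mathbf O$ and repair only the agents that overflow their cap, i.e.\ those with $c_i(O_i)>2\,\mathrm{MMS}_i(n,E)$. By $\mathrm{MMS}_i(n,E)\ge\max_e c_i(e)$, an overflowing bundle must contain at least three chores, and I would shed chores from it until the cap is met, charging the cost increase against the normalized total mass.

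For the lower bound, I would single out one bottleneck agent (agent $1$) whose cost function is nearly balanced so that $\mathrm{MMS}_1(n,E)=\Theta(1/n)$, and hence her $2$-MMS cap is only $\Theta(1/n)$; the other $n-1$ agents are designed so that a small family of contested chores is cheapest for agent $1$ (forcing the optimum to assign them all to her) yet moderately costly for everyone else. The optimal allocation then hands agent $1$ a bundle exceeding her cap, so every $2$-MMS allocation must move some contested chores onto agents who each pay a constant amount; meanwhile $\mathrm{OPT}(I)=\Theta(1/n)$ because each non-contested chore goes to a zero-cost agent and the contested ones cost agent $1$ only $\Theta(1/n)$. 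Computing $\mathrm{OPT}(I)$, evaluating $\mathrm{MMS}_1(n,E)$ via an even $n$-split, and optimizing the number of contested chores yields $\frac{n+3}{6}$. This is the natural analogue of the MMS lower bound in Proposition~\ref{prop:my-add4}, rescaled so that the violation survives the factor-$2$ relaxation (the optimal bundle for agent $1$ must now carry at least three contested chores rather than two).

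The main obstacle is the constant social-cost bound in the upper-bound argument: replacing the trivial estimate $\mathrm{SC}\le n$ by an absolute constant is exactly what upgrades the ratio from $n^2/2$ to $n$, and it requires showing that repairing the overflowing bundles never inflates the social cost beyond a constant. The structural fact that makes this possible is the normalization $c_i(E)=1$, which prevents any agent from being very costly on more than a bounded amount of contested mass; the same constraint is what caps the forced reallocation cost in the lower bound and is the reason the attainable constant is of order $\frac16$ rather than order $1$. Pinning the lower-bound constant to exactly $\frac{n+3}{6}$, as opposed to a weaker $\Theta(n)$ statement, is the secondary delicate point and is where the parameter optimization in the construction must be carried out carefully.
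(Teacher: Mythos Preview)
Your lower-bound plan is sound and coincides with the paper's construction: agent~$1$ has $\mathrm{MMS}_1(n,E)=1/n$, the three contested chores each cost the remaining agents $1/3$, and the optimum (social cost $2/n+\epsilon$) puts all three on agent~$1$; any $2$-MMS allocation must move one of them and thereby pay at least $1/3$, yielding the ratio $\tfrac{n+3}{6}$. Your remark that this is Proposition~\ref{prop:my-add4} with three contested chores instead of two is exactly right.

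The upper bound is where the proposal has a real gap. You correctly isolate the crux --- exhibiting a $2$-MMS allocation with social cost bounded by an absolute constant --- but ``shed chores and charge the increase against the normalized mass'' is not an argument. A shed chore must go to some agent $j$, for whom it may cost close to $1$; a single overflowing agent may need to shed many chores that are each cheap for her and expensive for everyone else, and nothing in the normalization $c_i(E)=1$ prevents the aggregate cost of relocating them from being $\Theta(n)$. There is no obvious way to make the repair scheme terminate with $\mathrm{SC}\le 2$.

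The paper avoids this difficulty altogether by abandoning the optimal allocation. It notes (Proposition~\ref{prop::4.6}) that every EF1 allocation is $\tfrac{2n-1}{n}$-MMS and hence $2$-MMS, so it is enough to find an EF1 allocation of small social cost. Running round-robin under a uniformly random agent ordering $\sigma$, one computes $\mathbb{E}[\mathrm{SC}(\mathbf{A}(\sigma))]\le \tfrac{1}{n}\sum_{i}c_i(E)=1$, so some ordering achieves $\mathrm{SC}\le 1$. Pairing this with the threshold $\mathrm{OPT}(I)>1/n$ (below which the optimum is itself MMS) gives the bound $n$. The missing idea in your plan is precisely this: use an algorithm (round-robin) that guarantees the fairness constraint for free and then control social cost by averaging over orderings, rather than trying to massage the optimum into feasibility.
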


\begin{proof}
We first prove the upper bound. By Proposition~\ref{prop::4.6}, we know that an EF1 allocation is
also $\frac{2n-1}{n}$-MMS (also 2-MMS). As we mentioned earlier, round-robin algorithm always
output EF1 allocations. Consequently, given any instance $I$, the allocation returned by
round-robin is also 2-MMS. In the following, we incorporate the idea of expectation in probability
theory and show that there exists an order of round-robin such that the output allocation has
social cost at most 1.

Let $\sigma$ be a uniformly random permutation of $\{1,\ldots,n\}$ and $\mathbf{A}(\sigma) =
(A_1(\sigma),\ldots, A_n(\sigma))$ be the allocation returned by round-robin based on the order
$\sigma$. Clearly, each element $A_i(\sigma)$ is a random variable. Since $\sigma$ is chosen
uniformly random, the probability of agent $i$ on $j$-th position is $\frac{1}{n}$. Fix an agent
$i$, we assume $c_i(e_1) \leq c_i(e_2) \leq \cdots \leq c_i(e_m)$. If agent $i$ is in $j$-th
position of the order, then his cost is at most $c_i(e_j) + c_i(e_{n+j}) + \cdots + c_i(e_{\lfloor
\frac{m-j}{n} \rfloor n + j })$. Accordingly, his expected cost is at most $\sum_{j=1} ^{n}
\frac{1}{n} \sum_{l=0}^{\lfloor \frac{m-j}{n} \rfloor} c_i(e_{ln + j})$. Thus, we have an upper
bound of the expected social cost,
\[
\mathbb{E}[SC(\mathbf{A} (\sigma))]  \leq \sum_{i=1}^{n} \sum_{j=1} ^{n} \frac{1}{n}
\sum_{l=0}^{\lfloor \frac{m-j}{n} \rfloor} c_i(e_{ln + j})
= \frac{1}{n}\sum_{i=1}^{n} c_i(E) = 1.
\]
Therefore, there exists an order such that the social cost of the output is at most 1. Notice that
for any instance $I$, if $\textnormal{OPT}(I) \leq \frac{1}{n}$, then any optimal allocations are
also MMS. Thus, we can further assume $\textnormal{OPT}(I) >\frac{1}{n}$, and accordingly, the
price of 2-MMS is at most $n$.

For the lower bound, consider an instance $I$ with $n$ agents and a set $E=\{e_1,\ldots,e_{n+3}\}$
of $n+3$ chores. The cost function of agent 1 is:  $c_1(e_1) = c_1(e_2) = \frac{1}{n} - \epsilon$,
$c_1(e_3) = 3\epsilon$, $c_1(e_4) = c_1(e_5) = \epsilon$, $c_1(e_6) = \frac{1}{n} - 3\epsilon$
where $\epsilon > 0 $ takes arbitrarily small value, and $c_1(e_j) = \frac{1}{n}$ for any $ j> 6 $
(if exists). For agent $i=2,\ldots,n$, his cost is: $c_i(e_j) = \frac{1}{3}$ for any $j \in [3]$
and $c_i(e_j) = 0$ for $j \geq 4$. It is not hard to verify that $\textnormal{MMS}_1(n, E) =
\frac{1}{n}$ and $ \textnormal{MMS}_i(n, E) = \frac{1}{3}$ for any $ i \geq 2$. In any optimal
allocation $\mathbf{O} = ( O_1,\ldots,O_n)$, the first three chores are assigned to agent 1 and all
rest chores are assigned to agents having cost zero on them. Thus, we have $\textnormal{OPT}(I) =
\frac{2}{n}+\epsilon$. However, in any optimal allocations  $\mathbf{O}$, $    \frac{2}{n} +
\epsilon  = c _ 1 ( O _ 1 )  > 2\textnormal{MMS}_1(n, E)$ holds, and so agent 1 violates 2-MMS. In
order to achieve a 2-MMS allocation, agent 1 can not receive all first three chores, and so at
least one of them has to be assigned to the agent other than agent 1. As a result, the social cost
of an 2-MMS allocation is at least $\frac{1}{3} + \frac{1}{n} + 2\epsilon$, yielding that the price
of 2-MMS is at least $\frac{n}{6} + \frac{1}{2}$. Combing lower and upper bound, the price of 2-MMS
is $\Theta(n)$
\end{proof}


\section{Price of fairness beyond additive setting}

In this section, we study the price of fairness when agents have submodular cost functions. Notice
that for those fairness notions whose price of fairness is unbounded in the additive setting, the
efficiency loss would still be unbounded in the submodular setting. As a consequence, for most
notions, we only need to study its price of fairness in the case of two agents. Recall that, when
studying specific fairness notion, we only consider instances for which allocations satisfying the
underlying fairness notion do exist. All results established in this section remain true if agents
have subadditive cost functions.

\begin{proposition}\label{sub_EFX}
When $n=2$ and agents have submodular cost functions, if an \textnormal{EFX} allocation exists, the
price of \emph{EFX} is at least 3 and at most 4.
\end{proposition}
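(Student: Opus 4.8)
The plan is to handle the two bounds separately; the upper bound is short, while the lower bound carries the real work.

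For the upper bound, I would anchor the argument at an optimal allocation $\mathbf{O}=(O_1,O_2)$ and use two facts: (i) by monotonicity and normalization, \emph{every} allocation---in particular every EFX allocation---has social cost $c_1(A_1)+c_2(A_2)\le c_1(E)+c_2(E)=2$; and (ii) by subadditivity $c_i(O_{3-i})\ge c_i(E)-c_i(O_i)=1-c_i(O_i)$. I then split on $\mathrm{OPT}=c_1(O_1)+c_2(O_2)$. If $c_1(O_1)\le\frac12$ and $c_2(O_2)\le\frac12$, then for each agent $i$ and each $e\in O_i$ we get $c_i(O_i\setminus\{e\})\le c_i(O_i)\le\frac12\le 1-c_i(O_i)\le c_i(O_{3-i})$, so $\mathbf{O}$ is itself EFX and the ratio on this instance is $1$. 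Otherwise one agent pays more than $\frac12$ under $\mathbf{O}$, hence $\mathrm{OPT}>\frac12$; since an EFX allocation exists by hypothesis and any such allocation costs at most $2$, the ratio is at most $2/\mathrm{OPT}<4$. Combining the cases shows the price of EFX is at most $4$ (indeed strictly below $4$, consistent with the half-open interval in Table~\ref{table::price of fairness}).

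For the lower bound I would exhibit a family of two-agent instances with genuinely submodular (non-additive) cost functions on which the ratio tends to $3$. The case split above already dictates the shape of such a family: because an optimal allocation with both agents paying at most $\frac12$ is automatically EFX, the optimal allocation must load one agent above $\frac12$, so $\mathrm{OPT}\in(\frac12,\frac23]$, and to approach ratio $3$ the cheapest EFX allocation must cost close to $2$, i.e.\ force both agents near their maximal cost $1$. I would therefore build the instance from capped cardinality (coverage-type) cost functions in the spirit of Proposition~\ref{prop::6.1}, so that the cap makes the cost \emph{sticky}: once an agent's bundle saturates its cap, deleting one chore does not lower the cost, which is exactly what makes the EFX constraint bite. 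The optimal allocation would concentrate the jointly costly chores on a single agent at cost just above $\frac12$ (so it narrowly fails EFX), while the two agents' submodular constraints, read through each other's envy, are arranged so that every EFX allocation is forced to split those chores and thereby saturate both agents.

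The upper bound is routine once the case split is fixed, so the main obstacle is the lower bound---specifically, the \emph{universal} claim that every EFX allocation of the constructed instance is expensive. The delicate point, which exploratory attempts show is easy to get wrong, is that chores which are free for one agent but costly for the other tend to relax the costly agent's EFX constraint at no cost to the objective, creating cheap EFX allocations and collapsing the ratio back toward $2$. The construction must be engineered to eliminate this slack---keeping the EFX-relaxing chores costly on both sides, so that invoking them to restore EFX necessarily inflates social cost---and the heart of the argument will be a full classification of the EFX allocations proving that none avoids near-saturation of both agents. I would then obtain tightness by letting the instance size grow, tuning the caps and the $\frac13/\frac23$ thresholds so that $\min_{\mathbf{A}\in\mathrm{EFX}}\mathrm{SC}(\mathbf{A})/\mathrm{OPT}\to 3$.
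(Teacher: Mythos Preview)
Your upper-bound argument is correct and matches the paper's exactly: the case split on whether $\max_i c_i(O_i)\le\tfrac12$, with subadditivity giving $c_i(O_{3-i})\ge 1-c_i(O_i)$, is precisely what the paper does.

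The lower bound, however, is only a plan, not a proof, and the plan is aimed at an unnecessarily hard target. You have convinced yourself that to reach ratio $3$ the cheapest EFX allocation must cost close to $2$ (both agents saturated) and hence $\mathrm{OPT}$ must sit near $\tfrac23$; you then propose large coverage-type instances and a limiting argument. None of this is required, and the ``both agents saturated'' intuition is misleading. The paper achieves the bound with a single three-chore instance in which agent~1 is \emph{additive} ($c_1(e_1)=\tfrac12,\ c_1(e_2)=\tfrac12-\epsilon,\ c_1(e_3)=\epsilon$) and agent~2 has the capped-sum function $c_2(S)=\min\{\sum_{e\in S}c_2(e),1\}$ with singleton costs $1-\epsilon,\ 3\epsilon,\ 1-2\epsilon$. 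Here $\mathrm{OPT}=\tfrac12+4\epsilon$ (via $O_1=\{e_1,e_3\},\ O_2=\{e_2\}$), and a short exhaustive check shows the cheapest EFX allocation is $A_1=\{e_2,e_3\},\ A_2=\{e_1\}$ with cost $\tfrac32-\epsilon$; only \emph{one} agent is near saturation. So the ratio $3$ arises from $\tfrac32/\tfrac12$, not $2/\tfrac23$, no instance-size limit is needed, and the ``full classification of EFX allocations'' you anticipate reduces to checking a handful of bundles for agent~2. Your proposal would not be wrong if carried out, but as it stands it contains no construction and sets itself a harder engineering problem than the one that actually needs solving.
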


\begin{proof}
We first prove the upper bound. For an instance $I$, let $\mathbf{O} = ( O _ 1, O _ 2 )$ be an
optimal allocation, and w.l.o.g., we assume $ c _ 1 ( O _ 1 ) \leq c _ 2 (O_2)$. Since $ c _ 2 (\cdot)$ is
submodular and also subadditive, then $ c _  i  (  O _ i ) + c  _  i ( O _  { 3 - i }) \geq c _  i
( E )$ holds for $   i \in [2]$. If $  c _ 1 (  O _  1 )\leq c  _   2 ( O _ 2 )\leq 1/2$, then $  c
_ i  ( O _ {3-i} ) \geq  c _  i ( E  ) - c _ i ( O _ i   ) \geq 1/2 \geq  c _ i ( O _ i )$ holds
for $ i \in [2]$. Accordingly, allocation $\mathbf{O}$ is already EFX and we are done. Thus,
w.l.o.g., we can further assume $ c  _ 2  ( O _ 2 ) > 1/2$. Notice that the social cost of an
allocation is at most 2, and so the price of EFX is at most 4.

As for the lower bound, let us consider an instance with a set $E = \{ e _ 1, e _ 2, e _3\} $ of
three chores. The cost function of agent 1 is: $ c _ 1 ( e _ 1 ) = 1/2, c _ 1 ( e _ 2 ) = 1/2 -
\epsilon , c _ 1 ( e _ 3 ) = \epsilon$ and for any $S \subseteq E, c _ 1 ( S  ) = \sum _ { e \in S
} c _ 1 ( e _ s )$ where $\epsilon > 0 $ takes arbitrarily small value. The cost function of agent
2 is: $ c _ 2 ( e _ 1 ) = 1 - \epsilon, c _ 2 ( e _ 2 ) = 3\epsilon , c _ 2 ( e _ 3 ) = 1 -2
\epsilon$ and for any $S \subseteq E, c _ 2 ( S ) = \min\{ \sum_{e \in S} c _ 2 ( e) , 1\}$.
Function $ c _ 1 (\cdot)$ is additive and hence clearly monotone and submodular. For function $ c _
2 (\cdot)$, since $\sum_{e\in S} c_2(e)$ is additive (also monotone and submodular) on $S$, it follows that $c_2(\cdot)$ is also monotone and submodular (see Footnote~\thefootnote).

For this instance, the optimal allocation $\mathbf{O} = ( O _ 1,  O _ 2 )$ is $O_1 = \{ e _ 1, e _
3 \}$ and $ O _ 2 = \{ e _ 2\}$, yielding social cost $1/2+4\epsilon$. But due to $ c _ 1 ( O _ 1
\setminus \{ e _ 3 \} ) = 1/2 > 1/2 - \epsilon = c _ 1 ( O _ 2 )$, agent 1 violates EFX in
$\mathbf{O}$. In an EFX allocation, agent 2 can not receive the whole $E$ or $\{ e_ 1, e _ 3 \}$ or
$\{ e _ 1, e _ 2\}$. Thus, the EFX allocation with the smallest social cost is $A_1 = \{ e _ 2, e _
3 \}$ and $A _ 2 = \{ e _ 1\}$, yielding social cost $3/2 - \epsilon$. As a consequence, the price
of EFX is at least $\frac{3/2 - \epsilon}{1/2 + 4\epsilon} \rightarrow 3$ as $\epsilon \rightarrow
0$.
\end{proof}

\begin{proposition}\label{prop::pof_sub_EF1}
When $n=2$ and agents have submodular cost functions, if an \textnormal{EF1} allocation exists, the
price of \textnormal{EF1} is at least 2 and at most 4.
\end{proposition}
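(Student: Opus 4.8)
The plan is to prove the two bounds by separate arguments: the upper bound reuses the reasoning of Proposition~\ref{sub_EFX} almost verbatim (since an \textnormal{EF} allocation is also \textnormal{EF1}), while the lower bound rests on a single instance with two identical agents. For the upper bound, I would fix an optimal allocation $\mathbf{O}=(O_1,O_2)$ and assume w.l.o.g.\ $c_1(O_1)\le c_2(O_2)$. Submodularity implies subadditivity, so $c_i(O_i)+c_i(O_{3-i})\ge c_i(E)=1$ for $i\in[2]$. If $c_2(O_2)\le 1/2$ then both bundle costs are at most $1/2$, whence $c_i(O_{3-i})\ge 1-c_i(O_i)\ge 1/2\ge c_i(O_i)$ and $\mathbf{O}$ is already \textnormal{EF}, hence \textnormal{EF1}, giving ratio $1$. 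Otherwise $\textnormal{OPT}(I)\ge c_2(O_2)>1/2$, while monotonicity bounds the social cost of any \textnormal{EF1} allocation by $c_1(E)+c_2(E)=2$; so the cheapest \textnormal{EF1} allocation has ratio below $2/(1/2)=4$, and taking the supremum over instances gives the upper bound.

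For the lower bound I would exhibit a single instance attaining ratio exactly $2$: two identical agents on $E=\{e_1,e_2\}$ with the capped-cardinality cost $c(S)=\min\{|S|,1\}$, which is monotone, submodular and normalized (as already noted in the proof of Proposition~\ref{prop::6.1}). Because $c$ is subadditive, every allocation satisfies $c(A_1)+c(A_2)\ge c(E)=1$, and since a nonempty bundle always costs $1$, the sum equals $1$ precisely when one bundle is empty and equals $2$ otherwise; thus $\textnormal{OPT}=1$, attained by putting all chores on one agent. The crux is that this optimal allocation is not \textnormal{EF1}: the agent holding both chores still has cost $1$ after deleting either one, whereas the other agent has cost $0$. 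Hence any allocation that leaves one agent empty violates \textnormal{EF1}, so every \textnormal{EF1} allocation assigns a nonempty bundle to each agent and therefore has social cost $2$; the balanced split $A_1=\{e_1\},A_2=\{e_2\}$ is \textnormal{EF} and shows such allocations exist. The minimum-cost \textnormal{EF1} allocation thus costs $2$ against $\textnormal{OPT}=1$, giving price $\ge 2$.

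The upper bound is routine; the real work is the lower bound, and the main obstacle is finding an instance in which social-cost minimization and \textnormal{EF1} are genuinely in conflict. This is exactly what fails in the additive world, where separable cost profiles tend to make the optimal allocation already fair (the additive price of \textnormal{EF1} is only $5/4$, Proposition~\ref{thm::7.1}); submodularity is essential. The capped-cardinality function supplies the conflict in the cleanest possible way: a concentrated bundle is cheap, yet its holder retains full cost after any single deletion, so \textnormal{EF1} provably forces the expensive balanced split. The only remaining checks are the (standard) submodularity of $c$ and the \textnormal{EF1} status of the four allocations, both immediate.
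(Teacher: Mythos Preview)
Your proof is correct. The upper-bound argument is identical to the paper's: assuming $c_1(O_1)\le c_2(O_2)$, either $c_2(O_2)\le 1/2$ and the optimal allocation is already EF (hence EF1), or $\textnormal{OPT}(I)>1/2$ and the trivial cap of $2$ on any social cost yields the factor $4$.

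For the lower bound you take a different and cleaner route. The paper builds a three-chore instance with asymmetric agents: agent~1 is additive with costs roughly $(1/3,1/3,1/3)$, agent~2 has the capped-additive function $c_2(S)=\min\{\sum_{e\in S}c_2(e),1\}$ with individual costs roughly $(1,1,0)$; the optimal allocation gives $\{e_1,e_2\}$ to agent~1 at cost about $2/3$, and the cheapest EF1 allocation costs $4/3$, giving a ratio that tends to $2$ as $\epsilon\to 0$. Your two-chore identical-agent instance with $c(S)=\min\{|S|,1\}$ achieves the bound $2$ exactly rather than in the limit, and the verification is immediate: any bundle split costs $2$, the concentrated allocation costs $1$ but fails EF1 since removing either chore from the full bundle still leaves cost $1>0$. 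What the paper's construction buys is that one agent is genuinely additive, so it shows the lower bound holds even when only one agent is submodular-but-not-additive; your construction trades that refinement for brevity.
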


\begin{proof}
For the upper bound part, similar to the proof of Proposition~\ref{sub_EFX}, we can w.l.o.g assume
$c _ 1 ( O _ 1 ) \leq c _ 2 ( O _ 2 )$ and $ c _ 2 ( O _ 2 ) > 1/2$; otherwise, $\mathbf{O}$ is already
EF1. Notice that the social cost of an allocation is at most 2, and so the price of EF1 is at most
4.

As for the lower bound, let us consider an instance $I$ with a set $E = \{ e_1, e _ 2, e _ 3 \}$ of
three chores. The cost function of agent 1 is: $c _ 1 ( e _ 1 ) = 1/3 + \epsilon, c _ 1 ( e _ 2 ) =
1/3, c _ 1 ( e _ 3 ) = 1/3 - \epsilon$ and for any $ S \subseteq E , c _ 1 ( S ) = \sum_{ e \in S}
c _ 1 ( e )$ where $\epsilon > 0 $ takes arbitrarily small value. The cost function of agent 2 is:
$ c _ 2 ( e _ 1 ) = 1 - \epsilon, c _ 2 ( e _ 2 ) = 1 - \epsilon, c _ 2 ( e _ 3 ) = \epsilon$ and
for any $S \subseteq E, c _ 2 ( S ) = \min \{ \sum_{ e\in S} c _ 2 (e), 1 \}$. Function
$c_1(\cdot)$ is additive and clearly monotone and submodular. For function $ c _ 2 (\cdot)$, since $ \sum_{e\in S} c_2(e)$ is additive (also monotone and submodular) on $S$, it follows that $ c _ 2 (\cdot)$ is also monotone and submodular (see Footnote~\thefootnote).

For this instance, the optimal allocation $\mathbf{O} = ( O_1,  O _ 2 )$ is $ O _ 1 = \{ e _ 1, e _
2 \}$ and $ O _ 2 = \{ e _ 3 \}$, yielding social cost $2/3 + 2\epsilon$. But since $ \min_{ e \in
O _ 1 } c _ 1 ( O _ 1 \setminus \{ e \}) = 1/3 > 1/3 - \epsilon = c _ 1 ( O _ 2 )$, agent 1
violates EF1 under allocation $\mathbf{O}$. In an EF1 allocation, agent 2 can not receive all
chores and can not receive both $ e _ 1, e _ 2$, either. Thus, the EF1 allocation with minimal
social cost is $\mathbf{A} = (A _ 1, A _ 2 )$ with $ A _ 1 = \{ e _ 2 \}$ and $A  _ 2 =\{ e _ 1, e
_ 3 \}$, yielding cost $4/3$. As a consequence, the price of EF1 is at least $\frac{4/3}{3/2 +
2\epsilon} \rightarrow 2 $ as $\epsilon \rightarrow 0 $.
\end{proof}

\begin{proposition}\label{prop::pof_sub_MMS}
When $n=2$ and agents have submodular cost functions, if an \textnormal{PMMS} allocation exists,
the price of \emph{PMMS} is 3.
\end{proposition}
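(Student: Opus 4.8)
The plan is to prove matching bounds. Throughout write $M_i=\textnormal{MMS}_i(2,E)$; since $n=2$ we have $A_i\cup A_j=E$, so an allocation $\mathbf{A}=(A_1,A_2)$ is \textnormal{PMMS} exactly when $c_1(A_1)\le M_1$ and $c_2(A_2)\le M_2$. Two elementary facts drive everything: by Lemma~\ref{obs::3.1}, $M_i\ge \tfrac12 c_i(E)=\tfrac12$, and by monotonicity $M_i\le c_i(E)=1$. In particular every \textnormal{PMMS} allocation has social cost at most $M_1+M_2\le 2$, so the cheapest \textnormal{PMMS} allocation costs at most $M_1+M_2$.

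For the upper bound, let $\mathbf{O}=(O_1,O_2)$ be an optimal allocation. If $\mathbf{O}$ is already \textnormal{PMMS} the ratio is $1$, so assume it is not. Then some agent $i$ violates \textnormal{PMMS}, i.e.\ $c_i(O_i)>M_i$, whence $\textnormal{OPT}(I)\ge c_i(O_i)>M_i\ge\min\{M_1,M_2\}$. Combining this with the bound $M_1+M_2$ on the cost of the cheapest \textnormal{PMMS} allocation (one exists by hypothesis) gives
\[
\frac{\min_{\mathbf{A}\in\textnormal{PMMS}(I)}\textnormal{SC}(\mathbf{A})}{\textnormal{OPT}(I)}
< \frac{M_1+M_2}{\min\{M_1,M_2\}}=1+\frac{\max\{M_1,M_2\}}{\min\{M_1,M_2\}}\le 1+\frac{1}{1/2}=3,
\]
using $\max\{M_1,M_2\}\le1$ and $\min\{M_1,M_2\}\ge\tfrac12$. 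Hence every instance has ratio below $3$, so the price is at most $3$.

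For the lower bound I would exhibit instances whose ratio tends to $3$; by the calculation above the extremal case must have $\min\{M_1,M_2\}=M_1\to\tfrac12$, $\max\{M_1,M_2\}=M_2\to1$, optimal cost tending to $\tfrac12$, and every \textnormal{PMMS} allocation costing about $\tfrac32$. I would start from the \textnormal{EFX} instance of Proposition~\ref{sub_EFX} (three chores $e_1,e_2,e_3$; agent~$1$ additive with $c_1(e_1)=\tfrac12$, $c_1(e_2)=\tfrac12-\epsilon$, $c_1(e_3)=\epsilon$, so $M_1=\tfrac12$) and modify agent~$2$. The defect of that instance for \textnormal{PMMS} is that $M_2=1$, which makes $(\emptyset,E)$ a cheap \textnormal{PMMS} allocation and caps the ratio at $2$. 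Instead I would give agent~$2$ a submodular cost function with $c_2(e_1)=1-\delta$, $c_2(e_2)=\delta$, $c_2(e_3)=1-2\delta$, $c_2(\{e_1,e_3\})=c_2(\{e_2,e_3\})=1-\delta$, and $c_2(\{e_1,e_2\})=c_2(E)=1$. Then $M_2=1-\delta<1$; the optimal allocation gives $\{e_1,e_3\}$ to agent~$1$ and $\{e_2\}$ to agent~$2$ for cost $\tfrac12+\epsilon+\delta$ (violating \textnormal{PMMS} at agent~$1$, since $c_1(\{e_1,e_3\})=\tfrac12+\epsilon>M_1$), while $(\emptyset,E)$ now fails \textnormal{PMMS}. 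Enumerating the five sets with $c_1(A_1)\le\tfrac12$ shows the only \textnormal{PMMS} allocations cost $\tfrac32-\delta$ or $\tfrac32-\epsilon-\delta$, so the ratio tends to $3$ as $\epsilon,\delta\to0$.

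The main obstacle is the lower-bound construction, not the upper bound (which is essentially forced). Two requirements on $c_2$ must hold simultaneously: $M_2$ must be pushed up to $1-\delta$, so that the ``give everything to one agent'' allocation fails \textnormal{PMMS} and no cheap \textnormal{PMMS} allocation survives, yet every \textnormal{PMMS}-admissible split must still cost nearly $1$ for agent~$2$, all with $c_2(E)=1$. This cannot be achieved by a capped-additive function (the target pair-values are inconsistent with any cap), so a genuinely non-additive but submodular $c_2$ is needed, and the fiddly step is verifying submodularity. I would do this by checking that the marginal of each chore is non-increasing along every chain, equivalently $c_2(S)+c_2(T)\ge c_2(S\cup T)+c_2(S\cap T)$ for all pairs $S,T$; this holds for all $\delta\le\tfrac12$.
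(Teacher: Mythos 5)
Your proposal is correct and takes essentially the same route as the paper: your upper bound rests on the same two facts the paper uses (any PMMS allocation costs each agent at most $\textnormal{MMS}_i(2,E)\le 1$, while an optimal allocation that is not PMMS costs more than the violating agent's $\textnormal{MMS}_i(2,E)\ge \tfrac12$), and your lower-bound instance has exactly the structure of the paper's construction --- the same additive agent 1 on three chores, paired with a hand-tailored, genuinely non-additive submodular cost for agent 2 whose pair values push $\textnormal{MMS}_2(2,E)$ just below $c_2(E)=1$, so the ``everything to agent 2'' allocation fails PMMS and every PMMS allocation costs about $\tfrac32$ against an optimum of about $\tfrac12$. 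The only differences are cosmetic (a symmetric $\min/\max$ phrasing of the upper bound, and slightly different numerical values for agent 2, yielding three rather than two PMMS allocations), and your verification of submodularity via marginals is sound.
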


\begin{proof}
According to Lemma~\ref{obs::3.1}, $\textnormal{MMS} _ i (2, E) \geq 1/2$ holds for any $ i \in
[2]$. Given an instance $I$ and allocation $\mathbf{O}$ with minimal social cost, we can assume allocation $\mathbf{O}$ is not MMS and w.l.o.g, agent 2 violates the condition of MMS. Let
$\mathbf{A}$ be an MMS allocation. Due to $c _ 2 ( A _ 2 ) \leq \textnormal{MMS}_  2(2, E)  < c
_ 2 ( O _ 2 )$, we have
$$
\frac{c _ 1 ( A_ 1 ) + c _ 2 ( A _ 2 )}{c _ 1 ( O _ 1) + c _ 2 ( O _ 2 ) }
< \frac{c _ 1 ( A_ 1) +\textnormal{MMS}_ 2 (2, E)}{\textnormal{MMS}_2(2, E)} \leq 3,
$$
where the last inequality transition is because $c_1 (A_ 1 ) \leq 1$ and $\textnormal{MMS}_2 ( 2, E
)\geq 1/2$.

As for the lower bound, let us consider an instance $I$ with a set $E = \{ e_1, e _ 2, e _ 3 \}$ of
chores. The cost function of agent 1 is: $ c  _1 ( e _ 1 ) = 1/2$, $c _ 1 ( e _ 2 ) = 1/2 -
\epsilon$, $c _ 1 ( e _ 3 ) = \epsilon$ and for $S \subseteq E, c _ 1 (S) = \sum_{e \in S} c _ 1 (e
)$. The cost function of agent 2 is: $c _ 2 ( e _ 1 ) = 1 - 2\epsilon$, $c _ 2 ( e _ 2 ) = 10
\epsilon$, $c _ 2 ( e _ 3 ) = 1 - 3\epsilon$, $c _ 2 ( e_ 1 \cup e _ 2 ) = 1$, $c _ 2 (e _ 1 \cup e
_ 3 ) = 1$, $c _ 2 ( e _ 2 \cup e _ 3) = 1 - \epsilon$, $c  _2 (E) = 1$. Function $ c _ 1 (\cdot)$
is additive and hence monotone and submodular. It is not hard to verify $c_2(\cdot)$ is monotone.
Suppose $c_2(\cdot)$ is not submodular, and accordingly, there exists $ S \subsetneq T \subseteq E
$ and $ e \in E \setminus T$ such that $ c _ 2 ( T \cup \{ e \} ) - c_ 2 ( T ) > c _ 2 ( S \cup \{
e \} ) - c _ 2 ( S ) $. Since $ c _  2 (\cdot)$ is monotone, we have $ c _ 2 ( S \cup \{ e \} ) - c
_ 2 ( S )  \geq 0 $ implying $c _ 2 ( T \cup \{ e \} ) - c_ 2 ( T ) > 0 $. If $|T| = 2$, the only
possibility is $T = e _ 2 \cup e _  3$ and adding $e_1$ to $T$ has margin $\epsilon$. But for any
$S\subsetneq T $ the margin of adding $e_1$ to $S$ is larger than $\epsilon$, contradiction. If
$|T| = 1$, then $c _ 2 ( S \cup \{ e \} ) - c _ 2 ( S )  = c _ 2 (e)$ that is the largest margin of
adding item $e$ to a subset, contradiction. Thus, function $c_2(\cdot)$ is also submodular.

For this instance, partition $ \{ \{e _ 1\}, \{ e _ 2, e _ 3 \} \}$ defines $\textnormal{MMS}_1(2,
E) = 1/2$, and $ \{ \{ e_1\}, \{ e _ 2, e _ 3 \} \}$ defines $\textnormal{MMS}_2(2, E) =
1-\epsilon$. The minimal social cost allocation $\mathbf{O} = (O_1, O _ 2 )$ with $ O _ 1 = \{ e _
1, e _ 3\}$ and $ O _ 2 = \{ e _ 2 \}$, resulting in minimal social cost $1/2 + 11\epsilon$. But $
c _ 1 ( O _ 1 ) = 1/2 + \epsilon > \textnormal{MMS}_1(2, E)$, and thus $\mathbf{O}$ is not MMS.
Observe that in an MMS allocation, agent 2 can only receive either a single chore or $\{ e _ 2, e _
3 \}$. The MMS allocation with minimal social cost is $\mathbf{A}$ with $A _ 1 =\{ e _ 2, e_ 3 \}$
and $A _2 = \{ e _ 1\}$ whose social cost is equal to $3/2 - 2\epsilon$. As a consequence, the
price of MMS is at least $\frac{3/2 - 2\epsilon}{1/2 + 11\epsilon} \rightarrow 3 $ as $\epsilon
\rightarrow 0 $.
\end{proof}

\begin{proposition}\label{prop::pof_sub_1.5PMMS}
When $n=2$ and agents have submodular cost functions, if a $\frac{3}{2}$-$\textnormal{PMMS}$
allocation exists, the price of $\frac{3}{2}$-$\textnormal{PMMS}$ is at least $4/3$ and at most
$8/3$.
\end{proposition}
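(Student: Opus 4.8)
The plan is to prove the two bounds separately. The upper bound $8/3$ will follow from two crude observations together with the $n=2$ case of Lemma~\ref{obs::3.1}, while the lower bound $4/3$ will come from an explicit three-chore instance that exploits a capped (hence submodular) cost function.

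For the upper bound, I would first record two facts that hold for $n=2$ with normalized costs. Since $A_i\subseteq E$ and $c_i(E)=1$, monotonicity gives $c_i(A_i)\le 1$, so \emph{every} allocation has social cost at most $2$. Second, Lemma~\ref{obs::3.1} with $k=2$ and $S=E$ gives $\textnormal{MMS}_i(2,E)\ge \tfrac12 c_i(E)=\tfrac12$ for each $i$. Now fix an instance admitting a $\tfrac32$-PMMS allocation and let $\mathbf{O}$ be an allocation of minimum social cost. If $\mathbf{O}$ is already $\tfrac32$-PMMS, the ratio for this instance is $1$. Otherwise some agent $i$ violates $\tfrac32$-PMMS, and since $A_1\cup A_2=E$ this reads $c_i(O_i)>\tfrac32\,\textnormal{MMS}_i(2,E)\ge \tfrac34$, so $\textnormal{OPT}(I)\ge c_i(O_i)>\tfrac34$. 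Taking any $\tfrac32$-PMMS allocation $\mathbf{A}$ (one exists by hypothesis), its social cost is at most $2$, hence $\tfrac{\textnormal{SC}(\mathbf{A})}{\textnormal{OPT}(I)}<\tfrac{2}{3/4}=\tfrac83$. Taking the supremum over instances yields the bound $8/3$.

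For the lower bound I would exhibit an instance whose price tends to $4/3$. Take $E=\{e_1,e_2,e_3\}$ with agent $1$ additive, $c_1(e_1)=\tfrac12$, $c_1(e_2)=\tfrac14+\delta$, $c_1(e_3)=\tfrac14-\delta$, so that the balanced split $\{e_1\},\{e_2,e_3\}$ gives $\textnormal{MMS}_1(2,E)=\tfrac12$ while the bundle $\{e_1,e_2\}$ costs $\tfrac34+\delta$, just above $\tfrac32\,\textnormal{MMS}_1(2,E)=\tfrac34$. Let agent $2$ have the capped cost $c_2(S)=\min\{w(S),1\}$ with $w(e_1)=w(e_2)=1$ and $w(e_3)=\gamma$ for a tiny $\gamma>0$; this is submodular (a non-decreasing concave function of a modular function, as used in the proof of Proposition~\ref{prop::6.1}), is normalized since $c_2(E)=1$, and satisfies $\textnormal{MMS}_2(2,E)=1$ because every pair already costs $1$. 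Consequently $\tfrac32\,\textnormal{MMS}_2(2,E)=\tfrac32\ge c_2(A_2)$ always, so agent $2$ never constrains $\tfrac32$-PMMS and only agent $1$'s condition $c_1(A_1)\le\tfrac34$ is binding. The cheapest allocation assigns $\{e_1,e_2\}$ to agent $1$ and $\{e_3\}$ to agent $2$, of social cost $\tfrac34+\delta+\gamma$; any other allocation forces agent $2$ to hold $e_1$ or $e_2$ and hence to incur the full cost $1$, so $\textnormal{OPT}(I)=\tfrac34+\delta+\gamma$, and this allocation fails $\tfrac32$-PMMS through agent $1$. Among the $\tfrac32$-PMMS allocations (those where agent $1$ holds neither $\{e_1,e_2\}$ nor $E$), the allocation $A_1=\emptyset,\ A_2=E$ has social cost $c_2(E)=1$, while every other feasible allocation gives agent $2$ cost $1$ plus agent $1$ positive cost, hence social cost exceeding $1$. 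Thus the cheapest $\tfrac32$-PMMS allocation has social cost exactly $1$, and the price is at least $\tfrac{1}{3/4+\delta+\gamma}\to\tfrac43$ as $\delta,\gamma\to0$.

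I expect the main obstacle to be the lower bound rather than the upper bound, which is immediate once one notices the crude facts $\textnormal{SC}\le 2$ and $\textnormal{OPT}>\tfrac34$. The delicate part is engineering agent $2$'s capped function so that simultaneously (i) the system-optimal allocation is exactly the one overloading agent $1$ while agent $2$ keeps the single cheap chore, and (ii) every repair of agent $1$'s violation dumps a weight-$1$ chore onto agent $2$, pushing the social cost up to $1$; this requires checking the few partitions to confirm that no $\tfrac32$-PMMS allocation undercuts social cost $1$, and verifying submodularity of $c_2$ throughout.
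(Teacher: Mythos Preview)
Your proof is correct and follows essentially the same approach as the paper. The upper-bound argument is identical: if the optimum is not $\tfrac32$-PMMS then some agent's cost exceeds $\tfrac34$, while every allocation has social cost at most $2$. For the lower bound the paper uses a four-chore instance (agent~1 additive with costs $3/8,\,3/8+\epsilon,\,1/8-\epsilon,\,1/8$; agent~2 capped-additive with weights $1-\epsilon,\,1-\epsilon,\,\epsilon,\,\epsilon$), whereas you use a three-chore instance with the same structural idea (additive agent~1 barely violating the threshold, capped agent~2 with $\textnormal{MMS}_2(2,E)=1$, best fair allocation dumping everything on agent~2). Your instance is arguably slightly cleaner.

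One tiny patch: when you argue that $\textnormal{OPT}(I)=\tfrac34+\delta+\gamma$, your sentence ``any other allocation forces agent~$2$ to hold $e_1$ or $e_2$'' overlooks the allocation $A_1=E,\ A_2=\emptyset$, where agent~2 holds nothing. That allocation has social cost $c_1(E)=1>\tfrac34+\delta+\gamma$, so the conclusion stands, but you should mention it explicitly.
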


\begin{proof}
We first prove the upper bound. According to Lemma~\ref{obs::3.1}, $\textnormal{MMS}_i(2, E)  \geq
1/2$ holds for any $ i \in [2]$. Given an instance $I$, let $\mathbf{O} = ( O  _ 1,  O _ 2 )$ be an
minimal social cost allocation of $I$, and w.l.o.g., we assume $ c _ 1 ( O _ 1) \leq c _ 2 ( O _ 2
)$. Moreover, we can assume $ c _ 2 ( O _ 2 ) > 3/4$; otherwise $\mathbf{O}$ is already a
$\frac{3}{2}$-PMMS allocation and we are done. Notice that the social cost of an allocation is at
most 2, and so the price of $\frac{3}{2}$-PMMS is at most $8/3$.

As for the lower bound, let us consider an instance with a set $E = \{e _ 1, e _ 2, e _ 3, e _ 4
\}$ of four chores. The cost profile of agent 1 is: $ c _ 1 ( e _ 1 ) = 3/8$, $c _ 1 ( e _ 2) = 3/
8 + \epsilon$, $c _ 1 ( e _ 3) = 1/8 - \epsilon$, $c _ 1 ( e _ 4) = 1/8$ and for $ S \subseteq E, c
_ 1 ( S ) = \sum _ { e \in S} c _ 1 ( e)$. The cost profile of agent 2 is: $ c _ 2 ( e _ 1 ) = c _
2 ( e _ 2 ) = 1 - \epsilon$, $c _ 2 ( e _ 3 ) = c _ 2 ( e _ 4 ) = \epsilon$ and for $ S \subseteq
E, c _ 2 ( S ) = \min \{ \sum _ {e \in S} c _ 2 (e), 1\}$ where $\epsilon>0$ can take arbitrarily
small value. Function $c_1(\cdot)$ is additive and hence monotone and submodular. For function $c_2(\cdot)$, since $\sum_{e \in S} c _ 2 (e)$ is additive (also monotone and submodular) on $S$, it follows that $ c _2(\cdot)$ is also monotone and submodular (see Footnote~\thefootnote).

For the quantity of MMS, partition $\{\{ e _ 1, e _ 4\}, \{ e _ 2, e _ 3 \} \}$ defines
$\textnormal{MMS}_1(2, E) = 1/2$, and any allocation defines $\textnormal{MMS}_2(2, E) = 1$. The
minimal social cost allocation $\mathbf{O}$ with $ O _ 1 = \{ e _ 1, e _ 2\}$ and $ O _ 2 = \{ e _
3, e _ 4 \}$ whose social cost is equal to $3/4 + 3\epsilon$. But due to $  c _ 1 ( O _ 1 ) = 3/4 +
\epsilon > 3/2\cdot\textnormal{MMS}_1(2, E)$, agent 1 violates $\frac{3}{2}$-PMMS under
$\mathbf{O}$. Notice agent 1 can not receive both $e_1, e_2$, one can check that the
$\frac{3}{2}$-PMMS allocation with minimal social cost assigns all chores to agent 2, yielding
social cost exactly 1. As a consequence, the price of $\frac{3}{2}$-PMMS is at least $\frac{1}{3/4
+ 3\epsilon} \rightarrow \frac{4}{3}$ as $\epsilon \rightarrow 0 $.
\end{proof}

\begin{proposition}\label{prop::pof_sub_2MMS}
When $n=2$ and agents have submodular cost functions, the price of $2$-\textnormal{MMS} is 1.
\end{proposition}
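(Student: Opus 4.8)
The plan is to reduce the statement directly to Lemma~\ref{lemma::3.2}. Recall that submodularity implies subadditivity, so the hypotheses of that lemma are satisfied, and it guarantees that \emph{any} allocation is $n$-\textnormal{MMS} whenever agents have subadditive cost functions. For $n=2$ the notions $n$-\textnormal{MMS} and $2$-\textnormal{MMS} coincide, so the immediate consequence is that every allocation of a two-agent instance with submodular costs is already $2$-\textnormal{MMS}.

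Given this, I would argue as follows. Fix any instance $I$ and let $\mathbf{O}$ be an allocation attaining the optimal social cost $\textnormal{OPT}(I)$. By the observation above, $\mathbf{O}$ itself is $2$-\textnormal{MMS}, so the set of $2$-\textnormal{MMS} allocations is nonempty and contains an allocation whose social cost equals $\textnormal{OPT}(I)$. Hence $\min_{\mathbf{A}\in F(I)} \textnormal{SC}(\mathbf{A}) = \textnormal{OPT}(I)$, where $F$ denotes the $2$-\textnormal{MMS} criterion. Since every allocation has social cost at least $\textnormal{OPT}(I)$, the ratio $\textnormal{SC}(\mathbf{A})/\textnormal{OPT}(I)$ is at least $1$ for every $\mathbf{A}$, so the inner minimum equals exactly $1$ for each instance. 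Taking the supremum over all instances yields a price of $2$-\textnormal{MMS} equal to $1$.

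There is essentially no obstacle here: the whole content is the trivial-guarantee bound of Lemma~\ref{lemma::3.2}, and the only point worth stating explicitly is that $n$-\textnormal{MMS} and $2$-\textnormal{MMS} are the same constraint when $n=2$, so the worst-case (unconstrained) MMS guarantee already certifies optimality of $\mathbf{O}$. This mirrors the additive case in Proposition~\ref{PoFn=2::MMS}, whose proof likewise follows immediately from Lemma~\ref{lemma::3.2}, and the argument is insensitive to whether the costs are additive, submodular, or merely subadditive.
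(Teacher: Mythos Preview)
Your proof is correct and follows exactly the paper's approach: invoke Lemma~\ref{lemma::3.2} (using that submodular implies subadditive) to conclude that every allocation, in particular the socially optimal one, is already $2$-\textnormal{MMS} when $n=2$. The paper's proof is a one-line version of what you wrote.
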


\begin{proof}
According to Lemma~\ref{lemma::3.2}, the allocation with minimal social cost must also be 2-MMS,
completing the proof.
\end{proof}

\begin{proposition}\label{prop::pof_sub_2MMS_n}
When $ n \geq 3$ and agents have submodular cost functions, the price of 2-\textnormal{MMS} is at
least $\frac{n + 3}{6}$ and at most $\frac{n^2}{2}$.
\end{proposition}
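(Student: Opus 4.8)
The plan is to treat the two bounds separately; both are short given the earlier machinery. For the upper bound, the idea is to threshold on $\textnormal{OPT}(I)$ at the value $\frac{2}{n}$. From Lemma~\ref{obs::3.1} together with the normalization $c_i(E)=1$, every agent satisfies $\textnormal{MMS}_i(n,E)\ge \frac1n c_i(E)=\frac1n$, hence $2\,\textnormal{MMS}_i(n,E)\ge \frac2n$. Let $\mathbf{O}=(O_1,\ldots,O_n)$ be an optimal allocation. If $\textnormal{OPT}(I)<\frac2n$, then $c_i(O_i)\le \textnormal{OPT}(I)<\frac2n\le 2\,\textnormal{MMS}_i(n,E)$ for every $i$, so $\mathbf{O}$ is already $2$-MMS and the ratio in the price equals $1$. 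Otherwise $\textnormal{OPT}(I)\ge\frac2n$, and since monotonicity plus normalization give $c_i(A_i)\le c_i(E)=1$ for any allocation, every allocation (in particular a cheapest $2$-MMS one, which exists because we only consider instances admitting a $2$-MMS allocation) has social cost at most $n$; hence the ratio is at most $\frac{n}{2/n}=\frac{n^2}{2}$. Taking the supremum over instances yields the claimed upper bound.

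For the lower bound I would simply invoke the additive construction already used in the lower-bound half of Proposition~\ref{prop:my-add5}, which exhibits an instance whose price of $2$-MMS is at least $\frac n6+\frac12=\frac{n+3}{6}$. That instance is additive and therefore monotone and submodular, and it admits a $2$-MMS allocation since round-robin produces an EF1 allocation that is $\frac{2n-1}{n}$-MMS by Proposition~\ref{prop::4.6}. Because every additive instance is also a submodular instance, the supremum defining the submodular price of $2$-MMS dominates the additive one, so the bound $\frac{n+3}{6}$ carries over verbatim.

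The main point here is conceptual rather than computational: it is worth explaining why the upper bound degrades from the additive value $\Theta(n)$ of Proposition~\ref{prop:my-add5} to $\frac{n^2}{2}$. In the additive setting one bounds the cheapest $2$-MMS social cost by $1$ via the expected-cost/round-robin argument, but for submodular costs EF1 no longer implies $2$-MMS, so that argument is unavailable and we can only use the crude bound $n$ on the social cost of a $2$-MMS allocation. Consequently the gap between $\frac{n+3}{6}$ and $\frac{n^2}{2}$ is left open, and closing it would require either a structural existence result producing cheap $2$-MMS allocations under submodular costs, or a sharper lower-bound instance that genuinely exploits submodularity.
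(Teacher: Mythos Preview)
Your proof is correct and follows essentially the same approach as the paper: for the upper bound you threshold on $\textnormal{OPT}(I)$ at $\tfrac{2}{n}$ (the paper thresholds on $\max_i c_i(O_i)$, which is an equivalent case split), and for the lower bound you reuse the additive instance from Proposition~\ref{prop:my-add5}, exactly as the paper does. The extra commentary on why the $\Theta(n)$ additive upper bound is unavailable here is accurate and a nice addition.
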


\begin{proof}
The lower bound directly follows from the instance constructed in Proposition~\ref{prop:my-add5}.
As for the upper bound, given any minimal social cost allocation $\mathbf{O}$, if $\max_{ i \in
[n]} c _ i ( O _ i ) \leq \frac{2}{n}$, then due to $\textnormal{MMS}_i( n , E ) \geq \frac{1}{n}$
from Lemma~\ref{obs::3.1}, we have $ c _ i ( O _ i ) \leq 2\textnormal{MMS}_i(n ,E)$ for any $ i
\in [n]$. This implies allocation $\mathbf{O}$ is 2-MMS and we are done. Thus, we can assume
w.l.o.g.\ that $\max_{ i \in [n]} c _ i ( O _ i ) > \frac{2}{n}$. Notice the social cost of an
allocation is at most $n$, and so the price of 2-MMS is at most $\frac{n^2}{2}$.
\end{proof}

\section{Conclusions}

In this paper, we are concerned with fair allocations of indivisible chores among agents under the
setting of both additive and submodular (subadditive) cost functions. First, under the additive
setting, we have established pairwise connections between several (additive) relaxations of the
envy-free fairness in allocating, which look at how an allocation under one fairness criterion
provides an approximation guarantee for fairness under another criterion. Some of our results in
that part are in sharp contrast to what is known in allocating indivisible goods, reflecting the
difference between goods and chores allocation. We have also extended to the submodular setting and
investigated the connections between these fairness criteria. Our results have shown that, under
the submodular setting, the interesting connections we have established under the additive setting
almost disappear and few non-trivial approximation guarantees exist. Then we have studied the
trade-off between fairness and efficiency, for which we have established the price of fairness for
all these fairness notions in both additive and submodular settings. We hope our results have
provided an almost complete picture for the connections between these chores fairness criteria
together with their individual efficiencies relative to social optimum.

\bibliography{Connections_Efficiencies}

\clearpage
\newpage
\renewcommand{\thesection}{A} \setcounter{equation}{0} \renewcommand{\theequation}{A-\arabic{equation}}
\renewcommand{\thepage}{A} \setcounter{page}{1} \renewcommand{\thepage}{A-\arabic{page}}
\section*{Appendix}

\subsection{Proof of Proposition~\ref{prop:my-add1}}\label{proof of prop:my-add1}

We first prove the upper bound. Let $\mathbf{A}=(A_1,A_2,A_3)$ be a PMMS allocation and we focus on
agent 1. For the sake of contradiction, we assume $c_1(A_1) >\frac{4}{3}\textnormal{MMS}_1(3,E)$.
We can also assume bundles $A_1, A_2,A_3$ do not contain chore with zero cost for agent 1 since the
existence of such chores do not affect approximation ratio of allocation $A$ on PMMS or MMS. To
this end,  we let $c_1(A_2) \leq c_1(A_3)$ (the other case is symmetric).

We first show that $A_1$ must be the bundle yielding the largest cost for agent 1. Otherwise, if
$c_1(A_1) \leq c _1(A_2) \leq c_1(A_3)$, then by additivity $c_1(A_1) \leq \frac{1}{3} c _1(E) \leq
\textnormal{MMS}_1(3,E)$, contradicting to $c_1(A_1) >\frac{4}{3}\textnormal{MMS}_1(3,E)$. Or if
$c_1(A_2) < c_1(A_1)\leq c_1(A_3)$, since $A_1$ and $A_2$ is a 2-partition of $A_1\cup A_2$, then
$c_1(A_1)$ is at least $\textnormal{MMS}_1(2,A_1\cup A_2)$. On the other hand, since $\mathbf{A}$
is a PMMS allocation, we know $c_1(A_1) \leq \textnormal{MMS}_1(2,A_1\cup A_2)$, and thus,
$c_1(A_1)= \textnormal{MMS}_1(2,A_1\cup A_2)$ holds. Based on assumption $c_1(A_1)
>\frac{4}{3}\textnormal{MMS}_1(3,E)$ and Lemma \ref{obs::3.1}, we have $c_1(A_1) >
\frac{4}{3}\textnormal{MMS}_1(3,E) \geq \frac{4}{9}c_1(E)$, then $c_1(A_3) \geq c_1(A_1) >
\frac{4}{9}c_1(E)$ which yields $c_1(A_2) < \frac{1}{9}c_ 1(E)$ owning to the additivity. As a
result, the difference between $c_1(A_1)$ and $c_1(A_2)$ is lower bounded $c_1(A_1) - c_1(A_2) >
\frac{1}{3}c_1(E)$. Due to $c_1(A_1) = \textnormal{MMS}_1(2, A_1\cup A_2)$, we can claim that every
single chore in $A_1$ has cost strictly greater than $\frac{1}{3}c_1(E)$; otherwise, $\exists e\in
A_1$ with $c_1(e)\leq \frac{1}{3}c_1(E)$, then reassigning chore $e$ to $A_2$ yields a 2-partition
$\{ A_1\setminus \left\{e\right\}, A_2\cup \{ e\} \}$ with $\max \{c_1(A_1\setminus
\left\{e\right\}), c_1(A_2\cup \{ e\}) \} < c_1(A_1) = \textnormal{MMS}_1(2,A_1\cup A_2)$,
contradicting to the definition of maximin share. Since every single chore in $A_1$ has cost
strictly greater than $\frac{1}{3}c_1(E)$, then $A_1$ can only contain a single chore; otherwise,
$c_1(A_3) \geq c_1(A_1) \geq \frac{|A_1|}{3}c_1(E) \geq \frac{2}{3}c_1(E)$, implying $c_1(A_3 \cup
A_1) \geq\frac{4}{3}c_1(E)$, contradiction. However, if $|A_1| = 1$, according to the second point
of Lemma \ref{obs::3.1}, $c_1(A_1) >\frac{4}{3}\textnormal{MMS}_1(3,E)$ can never hold. Therefore,
it must hold that $c_1(A_1) \geq c _1(A_3) \geq c _1(A_2)$, which then implies $c_1(A_1) =
\textnormal{MMS}_1(2, A_1\cup A_3) = \textnormal{MMS}_1(2, A_1\cup A_2)$ as a consequence of PMMS.

Next, we prove our statement by carefully checking the possibilities of $|A_1|$. According to
Lemma~\ref{obs::3.1}, if $|A_1| = 1$, then $c_1(A_1) \leq \textnormal{MMS}_1(3, E)$. Thus, we can
further assume $|A_1| \geq 2$. We first consider the case $|A_1| \geq 3$. Since $c_1(A_1) >
\frac{4}{3}\textnormal{MMS}_1(3, E) \geq \frac{4}{9}c_1(E)$, by additivity, we have $c_1(A_2) +
c_1(A_3) < \frac{5}{9}c_1(E)$ and moreover, $c_1(A_2) < \frac{5}{18}c_1(E)$ due to $c_1(A_2)\leq
c_1(A_3)$. Then the cost difference between bundle $A_1$ and $A_2$ satisfies $c_1(A_1) - c_1(A_2) >
\frac{1}{6}c_1(E)$. This allow us to claim that every single chore in $A_1$ has cost strictly
greater than $\frac{1}{6}c_1(E)$; otherwise, reassigning a chore with cost no larger than
$\frac{1}{6}c_1(E)$ to $A_2$ yields another 2-partition of $A_1 \cup A_2$ in which the cost of
larger bundle is strictly smaller than $\textnormal{MMS}_1(2, A_1\cup A_2)$, contradiction. In
addition, since $c_1(A_1) = \textnormal{MMS}_1(2, A_1\cup A_2)$, we claim $c_1(A_2) \geq c_
1(A_1\setminus \{ e\}), \forall e \in A_1$; otherwise, $\exists e^{\prime} \in A_1$ such that
$c_1(A_2) < c_1(A_1\setminus \{ e^{\prime}\})$, then reassigning $e^{\prime}$ to $A_2$ yields
another 2-partition of $A_1 \cup A_2$ of which both two bundles' cost are strictly smaller than
$\textnormal{MMS}_1(2, A_1\cup A_2)$, contradiction. Thus, for any $e\in A_1$, we have $c_1(A_2)
\geq c_ 1(A_1\setminus \{ e\}) \geq\frac{1}{6}c_1(E)\cdot | A_1\setminus\{e\}|  \geq \frac{1}{3}
c_1(E)$, where the last transition is due to $|A_1| \geq 3$. However, the cost of bundle $A_2$ is
$c_1(A_2) < \frac{5}{18}c_1(E)$, contradiction.

The remaining work is to rule out the possibility of $|A_1| = 2$. Let $A_1 = \{ e ^{1} _1, e ^ {1}
_ 2\}$ with $c_1(e ^{1} _1) \leq c_ 1(e ^ {1} _ 2)$ (the other case is symmetric). Since $c_1(A_1)
> \frac{4}{3}\textnormal{MMS}_1(3,E) \geq \frac{4}{9}c_1(E)$, then $c_1(e^1_2) >
\frac{2}{9}c_1(E)$. Let $S^*_2 \in \arg\max_{S\subseteq A_2} \{ c_1(S): c_1(S)<c_1(e^1_1)\}$ (can
be empty set) be the largest subset of $A_2$ with cost strictly smaller than $c_1(e^1_1)$. Due to
$c_1(A_1) = \textnormal{MMS}_1(2, A_1\cup A_2)$, then swapping $S^*_2$ and $e^1_1$ would not
produce a 2-partition in which the cost of both bundles are strictly smaller than $c_1(A_1)$, and
thus $c_1(A_2\setminus S^*_2 \cup \{ e _1^1\}) \geq c_ 1(A_1)$, equivalent to
\begin{equation}\label{eq::17}
	c_1(A_2\setminus S^*_2) \geq c_ 1(e^1_2) > \frac{2}{9}c_1(E).
\end{equation}
Then, by $c_1(A_1) - c_1(A_2) > \frac{1}{6}c_1(E)$ and $c_1(A_2\setminus S^*_2) \geq c_ 1(e^1_2) $,
we have $c_1(e^1_1) - c_ 1(S _2^*) > \frac{1}{6} c _1(E)$, which allows us to claim that every
single chore in $A_2\setminus S_2^*$ has cost strictly greater than $\frac{1}{6}c_1(E)$; otherwise,
we can find another subset of $A_2$ whose cost is strictly smaller than $e_1^1$ but larger than
$c_1(S^*_2)$, contradicting to the definition of $S_2^*$. As a result, bundle $A_2\setminus S^*_2$
must contain a single chore; if not, $c_1(A_2) > \frac{1}{6}c_1(E) \cdot |A_2\setminus S^*_2|  \geq
\frac{1}{3}c_1(E)$, which implies $c_1(A_1 \cup A_2 \cup A_3) > \frac{10}{9}c_1(E)$ due to
$c_1(A_1) > \frac{4}{9}c_1(E)$ and $ c_1(A_3)\geq c _ 1(A_2) > \frac{1}{3}c_1(E)$. Thus, bundle
$A_2\setminus S^*_2$ only contains one chore, denoted by $e^2_1$. So we can decompose $A_2$ as $A_2
= \{e_1^2\} \cup S^*_2$ where $c_1(e^2_1) \geq c _1(e_2^1) > \frac{2}{9}c_1(E)$.

Next, we analyse the possible composition of bundle $A_3$. To have an explicit discussion, we
introduce two more notions $\Delta_1, \Delta_2$ as follows
\begin{equation}\label{eq::18}
	\begin{aligned}
		&c_1(A_1) = \frac{4}{9}c_1(E) + \Delta_1,\\
		&c_1(A_2)  = \frac{2}{9}c_1(E) + c _1(S^*_2) + \Delta_2.
	\end{aligned}
\end{equation}
Recall $c_1(A_1) > \frac{4}{9}c_1(E)$ and $c_1(e^2_1) \geq c_1(e^1_2)>\frac{2}{9}c_1(E)$, so both
$\Delta_1,\Delta_2 >0$. Similarly, let $S_3^*\in \arg\min_{S\subseteq A_3} \{c_1(S): c _1(S) <
c_1(e^1_1) \}$, then we claim $c_1(A_3\setminus S^*_3) \geq c_1(e^1_2)$; otherwise, swapping
$S_3^*$ and $e^1_1$ yields a 2-partition of $A_1\cup A_3$ in which the cost of both bundles are
strictly smaller than $c_1(A_1) = \textnormal{MMS}_1(2, A_1\cup A_3)$, contradicting to the
definition of maximin share. By additivity of cost functions and Equation (\ref{eq::18}), we have
$c_1(A_3) = \frac{3}{9} c_1(E) - c_1(S_2^*) - \Delta_1-\Delta_2$, and accordingly $c_1(A_1) -
c_1(A_3) = \frac{1}{9} c_1(E) + c_1(S_2^*) + 2\Delta_1+\Delta_2$. This combing $c_1(A_3\setminus
S^*_3) \geq c_1(e^1_2)$ yields
\begin{equation}\label{eq::19}
	c_1(e^1_1) - c_1(S^*_3) \geq \frac{1}{9}c_1(E) + c_1(S^*_2)+2\Delta_1+\Delta_2.
\end{equation}
Based on Inequality (\ref{eq::19}), we can claim that every single chore in $A_3\setminus S^*_3$
has cost at least $\frac{1}{9}c_1(E) + c_1(S^*_2)+2\Delta_1+\Delta_2$; otherwise, contradicting to
the definition of $S^*_3$. Recall $c_1(A_3)  = \frac{3}{9} c_1(E) - c_1(S_2^*) -
\Delta_1-\Delta_2$, then due to the constraint on the cost of single chore in $A_3\setminus S^*_3$,
we have $ |A_3\setminus S^*_3| \leq 2$. Meanwhile, $c_1(A_3\setminus S^*_3) \geq c_1(e^1_2)$
implying that bundle $A_3\setminus S^*_3$ can not be empty. In the following, we separate our proof
by discussing two possible cases: $|A_3\setminus S^*_3| = 1$ and $|A_3\setminus S^*_3| =2$.

\emph{Case 1:} $|A_3\setminus S^*_3| = 1$. Let $A_3\setminus S^*_3 = \{ e^3_1\}$. Therefore, the
whole set $E$ is composed by four single chores and two subsets $S^*_2, S^*_3$, i.e., $E =\{
e_1^1,e^1_2, e^2_1, S^*_2, e^3_1,S^*_3\}$. Then, we let $\mathbf{T} = (T_1,T_2,T_3)$ be the
allocation defining $\textnormal{MMS}_1(3,E)$ and without loss of generality, let $c_1(T_1)
=\textnormal{MMS}_1(3,E) $. Next, to find contradictions, we analyse bounds on both
$\textnormal{MMS}_1(3,E)$ and $c_1(A_1)$. Since $\min\{c_1(e^2_1), c_1(e^3_1) \} \geq c_1(e^1_2)
\geq \frac{1}{2}c_1(A_1)$, we claim that $c_1(A_1)\leq \frac{9}{18}c_1(E)$; otherwise $c_1(A_1) +
c_1(e^2_1)+c_1(e^3_1) >c_1(E)$. Notice that $E$ contains three chores with the cost at least
$\frac{2}{9}c_1(E)$ each, if any two of them are in the same bundle under $\mathbf{T}$, then
$\textnormal{MMS}_1(3,E) > \frac{4}{9}c_1(E)$ and consequently,$\frac{c_1(A_1)}
{\textnormal{MMS}_1(3,E)} < \frac{9}{8}$, contradiction. Or if each of $\{ e^1_2, e^2_1, e^3_1\}$
is contained in a distinct bundle, then the bundle also containing chore $e^1_1$ has cost at least
$c_1(A_1)$ as a result of $\min\{c_1(e^2_1), c_1(e^3_1) \} \geq c_1(e^1_2) $ and $A_1 = \{ e^1_1,
e^1_2\}$. Thus, $\textnormal{MMS}_1(3,E) \geq c_1(A_1)$ holds, contradicting to $c_1(A_1) >
\frac{4}{3}\textnormal{MMS}_1(3, E)$.

\emph{Case 2:} $|A_3\setminus S^*_3| = 2$. Let $A_3\setminus S^*_3 = \{e^3_1,e^3_2 \}$ and
accordingly, the whole set can be decomposed as $E=\{ e^1_1,e^1_2, e^2_1, S^*_2, e^3_1,e^3_2,
S^*_3\}$. Note the upper bound $c_1(A_1) \leq \frac{9}{18}c_1(E)$ still holds since $\min \{
c_1(A_3\setminus S^*_3), c_1(e^2_1)\} \geq c_ 1(e^1_2)$. Then, we analyse the possible lower bound
of $\textnormal{MMS}_1(3, E)$. If chores $e^1_2, e^2_1$ are in the same bundle of $\mathbf{T}$,
then $\textnormal{MMS}_1(3, E) > \frac{4}{9}c_1(E)$ holds and so
$\frac{c_1(A_1)}{\textnormal{MMS}_1(3, E)} < \frac{9}{8}$, contradiction. Thus, chores $e^1_2,
e^2_1$ are in different bundles in $\mathbf{T}$. Then, if both chores $ e^3_1, e^3_2$ are in the
bundle containing $e^1_2$ or $ e^2_1$, then we also have $\textnormal{MMS}_1(3, E) >
\frac{4}{9}c_1(E)$ implying $\frac{c_1(A_1)}{\textnormal{MMS}_1(3, E)} < \frac{9}{8}$,
contradiction. Therefore, only two possible cases; that is, both $e^3_1,e^3_2$ are in the bundle
different from that containing $e^1_2$ or $e^2_1$; or the bundle having $e^1_2$ or $e^2_1$ contains
at most one of $e^3_1,e^3_2$.

\emph{Subcase 1:} both $e^3_1,e^3_2$ are in the bundle different from that containing $e^1_2$ or
$e^2_1$; Recall $c_1(e^1_1) > \frac{3}{18}c_1(E)+c_1(S^*_2)$ and the fact $\min \{ c _1 (e^1_2),
c_1(e^2_1), c_1(e^3_1\cup e^3_2)\} > \frac{4}{18}c_1(E)$, the bundle also containing $e^1_1$ has
cost strictly greater than $\frac{7}{18} c _1(E)$. Thus, $\textnormal{MMS}_1(3, E) > \frac{7}{18} c
_1(E)$, which combines $c_1(A_1) \leq \frac{9}{18}c_1(E)$ implying
$\frac{c_1(A_1)}{\textnormal{MMS}_1(3, E)} < \frac{9}{7} < \frac{4}{3}$, contradiction.

\emph{Subcase 2:} bundle having $e^1_2$ or $e^2_1$ contains at most one of $e^3_1,e^3_2$. Recall
$c_1(e^2_1) \geq c_1(e^1_2)$ and $\min \{c_1(e^3_1) , c_1(e^3_2)\} \geq \frac{1}{9}c_1(E) +
c_1(S^*_2) + 2\Delta_1 + \Delta_2 $, thus in allocation $\mathbf{T}$ there always exist a bundle
with cost at least $\frac{1}{9}c_1(E) + c_1(S^*_2) + 2\Delta_1 + \Delta_2 + c_ 1(e^1_2)$ and
results in the ratio
\begin{equation}\label{eq::20}
\frac{c_1(A_1)}{\textnormal{MMS}_1(3,E)} \leq \frac{c_1(e^1_1) + c _1(e^1_2)}{\frac{1}{9}c_1(E)
+ c_1(S^*_2) + 2\Delta_1 + \Delta_2 + c_ 1(e^1_2)}.
\end{equation}
In order to satisfying our assumption $\frac{c_1(A_1)}{\textnormal{MMS}_1(3,E)} > \frac{4}{3}$, the
RHS of Inequality (\ref{eq::20}) must be strictly greater than $\frac{4}{3}$, which implies the
following
\begin{equation}\label{eq::21}
	c_1(e^1_1) > \frac{2}{9} c_ 1(E) + 2 c_1(S^*_1) + 4\Delta_1 + 2\Delta_2.
\end{equation}
However, based on the first equation of (\ref{eq::18}) and $c_1(e^1_1) \leq c _1(e^1_2)$, we have
$c_1(e^1_1) \leq \frac{2}{9}c _1(E) + \frac{1}{2}\Delta_1< \frac{2}{9} c_ 1(E) + 2 c_1(S^*_1) +
4\Delta_1 + 2\Delta_2$ due to $\Delta_1, \Delta_2 > 0 $. This contradicts to Inequality
(\ref{eq::21}). Therefore, $\frac{c_1(A_1)}{\textnormal{MMS}_1(3,E)} > \frac{4}{3}$ can never hold
under \emph{Case 2}. Up to here, we complete the proof of the upper bound.

Next, as for tightness, consider an instance with three agents and a set $E = \{ e _1,..., e _ 6\}$
of six chores. Agents have identical cost functions. The cost function of agent 1 is as follows:
$c_1(e_j) = 2, \forall j =1,2,3$ and $c_1(e_j) = 1, \forall j = 4,5,6$. It is easy to see that
$\textnormal{MMS}_1(3, E) = 3$. Then, consider an allocation $\mathbf{B} = \{ B_1,B_2,B_3\}$ with
$B_1=\{ e_1, e_2\}, B_2=\{ e_3\}$ and $B_3=\{ e_4,e_5,e_6\}$. It is not hard to verify that
allocation $\mathbf{B}$ is PMMS and due to $c_1(B_1) = 4$, we have the ratio $\frac{c
	_1(B_1)}{\textnormal{MMS}_1(3, E)} = \frac{4}{3}$.


\subsection{Algorithm 1}
\label{sec:EF1-algorithm}

{The following efficient algorithm, which we call $ALG_1$, outputs an EF1 allocation with a cost at
most $\frac{5}{4}$ times the optimal social cost under the case of $n=2$. In the algorithm, we use
notations:} $L(k):=\{ e_1,\ldots,e_k\}$ and $R(k):=\{ e_{k},\ldots,e_{m}\}$ for any $1 \leq k \leq
m$.

\begin{algorithm}[H]
\caption{\hspace{-2pt}{}}
\label{alg:social_optimal}
\begin{algorithmic}[1]
	\REQUIRE An instance $I$ with two agents.
	\ENSURE an EF1 allocation of instance $I$.
	\STATE Partition $E = E _0 \cup E _ 1 \cup E _ 2$ where $ E _ 1 = \{ e \in E \mid c_ 1(e) < c_2(e) \}$ and $E_2 = \{ e\in E \mid c _1(e) > c _ 2 (e) \}$ (we assume $c_1(E_1) \leq c_2(E_2)$ and the other case is symmetric).
	\STATE Order chores such that $\frac{c_1(e_1)}{c_2(e_1)} \leq \frac{c_1(e_2)}{c_2(e_2)}
	\leq \cdots \leq \frac{c_1(e_m)}{c_2(e_m)}$, tie breaks arbitrarily. For chore $e$
	with $c_1(e)=0$, put it at the front and chore $e$ with $c_2(e) = 0$ at back.
	\STATE Find index $s$ such that $c_1(e_s) < c_2(e_s)$ and $c_1(e_{s+1}) \geq c_2(e_{s+1})$.
	\IF{$s=0$}
	\STATE Run a round-robin algorithm: let each of the agent $1, \ldots, n$ picks her most preferred item in that order, and repeat until all chores are assigned.
	\RETURN the output
	\ELSE
	\STATE Let $\mathbf{O}$ be the allocation with $O_1=L(s)$ and $O_2 = R(s+1)$.
	\IF{allocation $\mathbf{O}$ is EF1}
	\RETURN allocation $\mathbf{O}$.
	\ELSE
	\STATE  find the maximum index $f \geq s$ such that $c_2(R(f+2)) > c_2(L(f))$.
	\RETURN allocation $\mathbf{A}$ with $A_1=L(f+1)$ and $A_2=R(f+2)$
	\ENDIF
	\ENDIF
\end{algorithmic}
\end{algorithm}

\end{document}